\newcommand{\X}{\ensuremath{\mathcal{X}}}
\newcommand{\M}{\ensuremath{\mathcal{M}}}
\newcommand{\R}{\ensuremath{\mathbb{R}}}
\renewcommand{\S}{\ensuremath{\mathcal{S}}}
\newcommand{\T}{\ensuremath{^{\mathsf{T}}}}
\renewcommand{\vec}[1]{\ensuremath{\pmb{#1}}}
\newcommand{\E}{\ensuremath{\mathrm{E}}}
\newcommand{\cut}[1]{}
\newcommand{\diag}{\mathrm{diag}}
\newtheorem{theorem}{Theorem}
\newtheorem{claim}[theorem]{Claim}
\newtheorem{proposition}[theorem]{Proposition}
\newtheorem{lemma}[theorem]{Lemma}
\newtheorem{corollary}[theorem]{Corollary}
\newcommand{\mycite}[1]{\yrcite{#1}}
\renewcommand{\paragraph}[1]{\vskip 0.075cm\noindent\textbf{#1}}
\newcommand{\myinlineequation}[1]{$#1$}
\newcommand{\myinlineequationdot}[1]{$#1$.\ }
\newcommand{\myinlineequationcomma}[1]{$#1$,\ }
\newcommand{\toggle}[2]{#1}
\newcommand{\onlyconf}[1]{#1}
\newcommand{\onlyfull}[1]{}
\newcommand{\myAppendix}{supplementary material}
\newcommand{\mycite}[1]{\cite{#1} }
\newcommand{\myinlineequation}[1]{\[#1\]}
\newcommand{\myinlineequationdot}[1]{\[#1\]}
\newcommand{\myinlineequationcomma}[1]{\[#1\]}
\newcommand{\toggle}[2]{#2}
\newcommand{\onlyconf}[1]{}
\newcommand{\onlyfull}[1]{#1}
\newcommand{\myAppendix}{Appendix}
\title{Locally Private Hypothesis Testing}
\author{Or Sheffet\thanks{This work was supported by the Natural Sciences and Engineering Council of Canada,  Grant \#2017-06701. The author is also an unpaid collaborator on NSF grant 1565387.}\\Dept of Computing Science\\University of Alberta\\\texttt{osheffet@ualberta.ca}}
\date{\today}
\begin{document}
	\ifdefined \conf 
	\twocolumn[
	\icmltitlerunning{Locally Private Hypothesis Testing}
	\icmltitle{Locally Private Hypothesis Testing}
	\icmlsetsymbol{equal}{*}
	
	\begin{icmlauthorlist}
		\icmlauthor{Or Sheffet}{equal,to}
	\end{icmlauthorlist}
	\icmlaffiliation{to}{Department of Computing Science, University of Alberta, Edmonton, Canada}
	\icmlcorrespondingauthor{Or Sheffet}{osheffet@ualberta.ca}
	\icmlkeywords{Differential Privacy, Hypothesis Testing, Randomized Response}
	\vskip 0.3in
	]
	\bibliographystyle{icml2018}
	\else 
	\bibliographystyle{alpha}
	\maketitle
	\fi
	\begin{abstract}
		We initiate the study of differentially private hypothesis testing in the local-model, under both the standard (symmetric) randomized-response mechanism~\cite{Warner65, KasiviswanathanLNRS08} and the newer (non-symmetric) mechanisms~\cite{BassilyS15, BassilyNST17}. First, we study the general framework of mapping each user's type into a signal and show that the problem of finding the maximum-likelihood distribution over the signals is feasible. Then we discuss the randomized-response mechanism and show that, in essence, it maps the null- and alternative-hypotheses onto new sets, an affine translation of the original sets. We then give sample complexity bounds for identity and independence testing under randomized-response. We then move to the newer non-symmetric mechanisms and show that there too the problem of finding the maximum-likelihood distribution is feasible. Under the mechanism of Bassily et al~\mycite{BassilyNST17} we give identity and independence testers with better sample complexity than the testers in the symmetric case, and we also propose a $\chi^2$-based identity tester which we investigate empirically. 
	\end{abstract}
	
	\section{Introduction}
	\label{sec:intro}
	Differential privacy is a mathematically rigorous notion of privacy that has become the de-facto gold-standard of privacy preserving data analysis. 
	Informally, $\epsilon$-differential privacy bounds the affect of a single datapoint on any result of the computation by $\epsilon$. \toggle{In recent years}{By now we have a myriad of differentially private analogues of numerous data analysis tasks. Moreover, in recent years} the subject of private hypothesis testing has been receiving increasing attention (see Related Work below). However, by and large, the focus of private hypothesis testing is in the centralized model (or the curated model), where a single trusted entity holds the sensitive details of $n$ users and runs the private hypothesis tester on the actual data.
	
	In contrast, the subject of this work is private hypothesis testing in the \emph{local}-model (or the distributed model), where a $\epsilon$-differentially private mechanism is applied \emph{independently} to each datum\onlyfull{, resulting in one noisy signal per each datum. Moreover, the noisy signal is quite close to being uniformly distributed among all possible signals, so any observer that sees the signal has a very limited advantage of inferring the datum's true type}. This model, which alleviates trust (each user can run the mechanism independently on her own and release the noisy signal from the mechanism), has gained much popularity in recent years, especially since it was adopted by Google's Rappor~\cite{ErlingssonPK14} and Apple~\cite{Apple17}. And yet, despite its popularity, and the fact that recent works~\cite{BassilyS15, BassilyNST17} have shown the space of possible locally-private mechanism is richer than what was originally thought, little is known about private hypothesis testing in the local-model.
	
	\subsection{Background: Local Differential Privacy \ifdefined\conf\else as a Signaling Scheme\fi}
	\label{subsec:def_local_model_as_signaling}
	
	We view the local differentially private model as a signaling scheme. Each datum / user has  a type $x$ taken from a predefined and publicly known set of possible types $\X$ whose size is $T=|\X|$. The differentially private mechanism is merely a randomized function $\M:([n],\X)\to\S$, mapping each possible type $\X$ of the $i$-th datum to some set of possible signals $\S$, which we assume to be $\epsilon$-differentially private: for any index $i$, any pair of types $x,x'\in \X$ and any signal $s\in \S$ it holds that $\Pr[\M(i,x) = s] \leq e^\epsilon \Pr[\M(i,x')=s]$.\footnote{For simplicity, we assume $\S$, the set of possible signals, is discrete. Note that this doesn't exclude mechanisms such as adding Gaussian/Gamma noise to a point in $\R^d$ --- such mechanisms require $\X$ to be some bounded subset of $\R^d$ and use the bound to set the noise appropriately. Therefore, the standard approach of discretizing $\X$ and projecting the noisy point to the closest point in the grid yields a finite set of signals $\S$.} In our most general results (Theorems~\ref{thm:max_likelihood_symmetric} and~\ref{thm:max_likelihood_non_symmetric}), we ignore the fact that $\M$ is $\epsilon$-differentially private, and just refer to any signaling scheme that transforms one domain (namely, $\X$) into another ($\S$). For example, a surveyer might unify rarely occurring types under the category of ``other'', or perhaps users report their types over noisy channels, etc. 
	
	We differentiate between two types of signaling schemes\onlyfull{, both anchored in differentially private mechanisms}: the \emph{symmetric} (or index-oblivious) variety, and the \emph{non-symmetric} (index-aware) type. A local signaling mechanism is called \emph{symmetric} \onlyfull{or \emph{index-oblivious}} if it is independent of the index of the datum. Namely, if for any $i\neq j$ we have that $\M(i,x) = \M(j,x) \stackrel{\rm def}= \M(x)$. A classic example of such a mechanism is \emph{randomized-response}~--- that actually dates back to before differential privacy was defined~\cite{Warner65} and was first put to use in differential privacy in~\cite{KasiviswanathanLNRS08}~--- where each user / datum $x$ draws her own signal from the set $\S= \X$ skewing the probability ever-so-slightly in favor of the original type. I.e. if the user's type is $x$ then $\M(x) = \begin{cases}   x, &\textrm{w.p. } \frac{e^\epsilon}{T-1+e^\epsilon} \cr x', &\textrm{ for any other $x'$ w.p. }\frac 1 {T-1+e^\epsilon}\end{cases}$. \onlyfull{This mechanism applies to all users, regardless of position in the dataset.}
	
	The utility of the above-mentioned symmetric mechanism scales polynomially with $T$ (or rather, with $|\S|$), which motivated the question of designing \toggle{local}{locally differentially-private} mechanisms with error scaling logarithmically in $T$. This question was recently answered on the affirmative by the works of Bassily and Smith~\mycite{BassilyS15} and Bassily et al~\mycite{BassilyNST17}, whose mechanisms are \emph{not} symmetric. In fact, both of them work by presenting each user $i$ with a mapping $f_i: \X \to\S$ (the mapping itself is chosen randomly, but it is public, so we treat it as a given), and the user then runs the standard randomized response mechanism \emph{on the signals} using $f_i(x)$ as the more-likely signal. (In fact, in both schemes, $\S = \{1,-1\}$: in~\cite{BassilyS15} $f_i$ is merely the $j$-th coordinate of a hashing of the types where $j$ and the hashing function are publicly known, and in~\cite{BassilyNST17} $f_i$ maps a u.a.r chosen subset of $\X$ to $1$ and its complementary to $-1$.\footnote{In both works, much effort is put to first reducing $T$ to the most frequent $\sqrt{n}$ types, and then run the counting algorithm. Regardless, the end-counts / collection of users' signals are the ones we care for the sake of hypothesis testing.}) \toggle{ So}{It is simple to identify each $f_i$ as a $0/1$-matrix of size $|\S|\times |\X|$; and --- even though current works use only a deterministic mapping $f_i$ --- we even allow for a randomized mapping, so $f_i$ can be thought of a $|\S|\times |\X|$ of entries in $[0,1]$ (such that for each $x\in \X$ we have $\sum_{s\in \S} \M_i(s,x) =1$). Regardless,} given $f_i$, the user then tosses her our private random coins to determine what signal she broadcasts. Therefore, each user's mechanism can be summarized in a $|\S|\times |\X|$-matrix, where $\M_i(s,x)$ is the probability a user of type $x$ sends the signal $s$. For example, using the mechanism of~\cite{BassilyNST17}, each user whose type maps to $1$ sends ``signal $1$'' with probability $\tfrac{e^\epsilon}{1+e^\epsilon}$ and ``signal $-1$'' with probability $\tfrac 1 {1+e^\epsilon}$. Namely, $\M_i(f_i(x), x) = \tfrac{e^\epsilon}{1+e^\epsilon}$ and $\M_i(f_i(x), x) = \tfrac{1}{1+e^\epsilon}$, where $f_i$ is the mapping $\X\to\{1,-1\}$ set for user $i$.
	
	\subsection{Our Contribution and Organization}
	\label{subsec:intro_contribution}
	
	This work initiates (to the best of our knowledge) the \emph{theory} of differentially private hypothesis testing in the local model. First we survey related work and preliminaries. Then, in Section~\ref{sec:symmetric_RR}, we examine the symmetric case and show that any mechanism (not necessarily a differentially private one) yields a distribution on the signals for which finding a maximum-likelihood hypothesis is feasible, assuming the set of possible hypotheses is convex. Then, focusing on the classic randomized-response mechanism, we show that the problem of maximizing the likelihood of the observed signals is strongly-convex and thus simpler than the original problem. More importantly, in essence we give a characterization of hypothesis testing under randomized response: the symmetric locally-private mechanism translates the original null hypothesis $H_0$ (and the alternative $H_1$) by a known affine translation into a different set $\varphi(H_0)$ (and resp. $\varphi(H_1)$). Hence, hypothesis testing under randomized-response boils to discerning between two different (and considerably closer in total-variation distance) sets, but in \emph{the exact same model} as in standard hypothesis testing as all signals were drawn from the same hypothesis in $\varphi(H_0)$. As an immediate corollary we give bounds on identity-testing~(Corollary~\ref{cor:identity_testing_RR})  and independence-testing~(Theorem~\ref{thm:independence_testing_RR}) under randomized-response. (The latter requires some manipulations and far less straight-forward than the former.) The sample complexity (under certain simplifying assumptions) of both problems is proportional to $|\X|^{2.5}$.
	
	In Section~\ref{sec:non-symmetric-RR} we move to the non-symmetric local-model. Again, we start with a general result showing that in this case too, finding an hypothesis that maximizes the likelihood of the observed signals is feasible when the hypothesis-set is convex. We then focus on the mechanism of Bassily et al~\mycite{BassilyNST17} and show that it also makes the problem of finding a maximum-likelihood hypothesis strongly-convex. We then give a simple identity tester under this scheme whose sample complexity is proportional to $|\X|^{2}$, and is thus more efficient than \emph{any} tester under standard randomized-response. Similarly, we also give an independence-tester with a similar sample complexity. In Section~\ref{subsec:experiment} we empirically investigate alternative identity-testing and independence-testing based on Pearson's $\chi^2$-test in this non-symmetric scheme, and identify a couple of open problems in this regime.
	
	\subsection{Related Work}
	\label{subsec:related_work}
	
	Several works have looked at the intersection of differential privacy and statistics~\cite{DworkL09, Smith11, ChaudhuriH12, DuchiJW13, DworkSZ15} mostly focusing on robust statistics; but only a handful of works study rigorously the significance and power of hypotheses testing under differential privacy\onlyfull{~\cite{VuS09, UhlerSF13, WangLK15, RogersVLG16, CaiDK17, Sheffet17, KarwaV18}}. Vu and Slavkovic~\mycite{VuS09} looked at the sample size for privately testing the bias of a coin. Johnson and Shmatikov~\mycite{JohnsonS13}, Uhler et al~\mycite{UhlerSF13} and Yu et al~\mycite{YuFSU14} focused on the Pearson $\chi^2$-test (the simplest goodness of fit test), showing that the noise added by differential privacy vanishes asymptotically as the number of datapoints goes to infinity, and propose a private $\chi^2$-based test which they study empirically. Wang et al~\mycite{WangLK15} and Gaboardi et al~\mycite{RogersVLG16}  who have noticed the issues with both of these approaches, have revised the statistical tests themselves to incorporate also the added noise in the private computation. 
	Cai et al~\mycite{CaiDK17} give a private identity tester based on noisy $\chi^2$-test over large bins, Sheffet~\mycite{Sheffet17} studies private Ordinary Least Squares using the JL transform, and Karwa and Vadhan~\mycite{KarwaV18} give matching upper- and lower-bounds on the confidence intervals for the mean of a population. All of these works however deal with the centralized-model of differential privacy.
	
	Perhaps the closest to our work are the works of Duchi et al~\mycite{DuchiJW13, DuchiWJ13} who give matching upper- and lower-bound on robust estimators in the local model. And while their lower bounds do inform as to the sample complexity's dependency on $\epsilon^{-2}$, they do not ascertain the sample complexity dependency on the size of the domain ($T=|\X|$) we get in Section~\ref{sec:symmetric_RR}. Moreover, these works disregard independence testing (and in fact~\cite{DuchiWJ13} focus on mean estimation so they apply randomized-response to each feature independently generating a product-distribution even when the input isn't sampled from a product-distribution). And so, to the best of our knowledge, no work has focused on hypothesis testing in the local model, let alone in the (relatively new) non-symmetric local model.
	
	\section{Preliminaries, Notation and Background}
	\label{sec:preliminaries}
	\paragraph{Notation.} We user $lower$-case letters to denote scalars, $\pmb{bold}$ characters to denote vectors and $CAPITAL$ letters to denote matrices. So $1$ denotes the number, $\vec 1$ denotes the all-$1$ vector, and  $1_{\X \times \X}$ denotes the all-$1$ matrix over a domain $\X$. We use $\vec e_x$ to denote the standard basis vector with a single $1$ in coordinate corresponding to $x$. To denote the $x$-coordinate of a vector $\vec v$ we use $v(x)$, and to denote the $(x,x')$-coordinate of a matrix $M$ we use $M(x,x')$. For a given vector $\vec v$, we use $\diag(\vec v)$ to denote the matrix whose diagonal entries are the coordinates of $\vec v$. For any natural $n$, we use $[n]$ to denote the set $\{1,2,...,n\}$.
	
	\paragraph{Distances and norms.} Unless specified otherwise $\|\vec v\|$ refers to the $L_2$-norm of $\vec v$, whereas $\|\vec v\|_1$ refers to the $L_1$-norm. We also denote $\|\vec v\|_{\frac 2 3} = \left(\sum_i |v_i|^{\frac 2 3}\right)^{\frac 3 2}$. For a matrix, $\|M\|_1$ denotes (as usual) the maximum absolute column sum. We identify a distribution $\vec p$ over a domain $\X$ as a $|\X|$-dimensional vector with non-negative entries that sum to $1$. This defines the \emph{total variation} distance between two distributions: $d_{\rm TV}(\vec p,\vec q) = \tfrac 1 2 \| \vec p -\vec q\|_1$. (On occasion, we will apply $d_{\rm TV}$ to vectors that aren't distributions, but rather nearby estimations; in those cases we use the same definition: the half of the $L_1$-norm.) It is known that the TV-distance is a metric overs distributions. We also use the $\chi^2$-divergence to measure difference between two distributions: $d_{\chi^2}(\vec p,\vec q) = \sum_x\tfrac{(p(x)-q(x))^2}{p(x)} = \left(\sum_x \tfrac { (q(x))^2 } {p(x)} \right) - 1$. The $\chi^2$-divergence is not symmetric and can be infinite, however it is non-negative and zeros only when $\vec p=\vec q$. We refer the reader to~\cite{SasonV16} for more properties of the total-variance distance the $\chi^2$-divergence.
	\paragraph{Differential Privacy.} An algorithm $\mathcal{A}$ is called $\epsilon$-differentially private, if for any two datasets $D$ and $D'$ that differ only on the details of a single user and any set of outputs $S$, we have that $\Pr[\mathcal{A}(D)\in S]\leq e^\epsilon \Pr[\mathcal{A}(D')\in S]$. The unacquainted reader is referred to the Dwork-Roth monograph~\cite{DworkR14} as an introduction to the rapidly-growing field of differential privacy.
	\paragraph{Hypothesis testing.} \toggle{The }{Hypothesis testing is an extremely wide field of study, see~\cite{HoggMC05} as just one of many resources about it. In general however, the} problem of hypothesis testing is to test whether a given set of samples was drawn from a distribution satisfying the null-hypothesis or the alternative-hypothesis. Thus, the null-hypothesis is merely a set of possible distributions $H_0$ and the alternative is disjoint set $H_1$. Hypothesis tests boils down to estimating a test-statistics $\theta$ whose distribution has been estimated under the null-hypothesis\onlyfull{ (or the alternative-hypothesis)}. We can thus {\tt reject} the null-hypothesis is the value of $\theta$ is highly unlikely, or {\tt accept} the null-hypothesis otherwise. We call an algorithm a \emph{tester} if the acceptance (in the completeness case) or rejection (in the soundness case) happen with probability $\geq 2/3$. Standard amplification techniques (return the median ofindependent tests) reduce the error probability from $1/3$ to any $\beta>0$ at the expense of increasing the sample complexity by a factor of $O(\log(1/\beta))$; hence we focus on achieving a constant error probability. One of the most prevalent and basic tests is the \emph{identity}-testing, where the null-hypothesis is composed of a single distribution $H_0 = \{\vec p\}$ and our goal is to accept if the samples are drawn from $\vec p$ and reject if they were drawn from any other $\alpha$-far (in $d_{\rm TV}$) distribution. Another extremely common tester is for \emph{independence} when $\X$ is composed of several features (i.e., $\X = \X^1 \times \X^2 \times... \times \X^d$) and the null-hypothesis is composed of all product distributions $H_0 = \{ \vec p^1 \times ...\times \vec p^d\}$ where each $\vec p^j$    is a distribution on the $j$th feature $\X^j$.

	\paragraph{Miscellaneous.} \onlyfull{The Chebyshev inequality states that for any random variable $X$, we have that $\Pr[ |X -\E[X]| > t ]\leq \frac{{\rm Var}(X)}{t^2}$. We also use the Heoffding inequality, stating that for $n$ iid random variables $X_1,...,X_n$ in the range $[a,b]$ we have that $\Pr[ \tfrac 1 n \sum_i X_i - \E[X_i] < -\alpha  ] \leq \exp(-2n\alpha^2/(b-a)^2)$ and similarly that $\Pr[ \tfrac 1 n \sum_i X_i - \E[X_i] > \alpha  ] \leq \exp(-2n\alpha^2/(b-a)^2)$. It is a particular case of the MacDiarmid inequality, stating that for every function $f$ such that if we have bounds $\forall i\forall x_1, x_2, ...,x_n, x'_i$ we have $| f(x_1,..,x_i,...x_n) - f(x_1,...,x_i',...x_n)   | <c_i$ then $\Pr[ f(X_1,...,X_n) - \E[f] < -\alpha  ] \leq \exp( -2\alpha^2 / \left(\sum_i c_i^2\right)   )$.\\
	A matrix $M$ is called \emph{positive semidefinite} (PSD) if for any unit-length vector $u$ we have $\vec u\T M \vec u\geq 0$. }We use $M\succeq 0$ to denote that $M$ is a positive semi-definite (PSD) matrix, and $M\succeq N$ to denote that $(M-N)\succeq0$. We use $M^\dagger$ to denote $M$'s pseudo-inverse.\onlyfull{When the rows of $M$ are independent, we have that $M^\dagger = M\T(MM\T)^{-1}$. }
	We emphasize that we made no effort to minimize constants in our proofs, and only strived to obtain asymptotic bounds ($O(\cdot), \Omega(\cdot)$).\onlyfull{ We use $\tilde O(\cdot), \tilde{\Omega(\cdot)}$ to hide poly-log factors.}

	\section{Symmetric Signaling Scheme}
	\label{sec:symmetric_RR}
	
	Recall, in the symmetric signaling scheme, each user's type is mapped through a random function $\M$ into a set of signals $\S$. This mapping is index-oblivious~--- each user of type $x\in \X$, sends the signal $s$ with the same probability $\Pr[\M(x)=s]$. We denote the matrix $G$ as the $(|\S|\times |\X|)$-matrix whose entries are $\Pr[\M(x)=s]$, and its $s$th-row by $\vec g_s$. Note that all entries of $G$ are non negative and that for each $x$ we have $\|G\vec e_x\|_1=1$. By garbling each datum \emph{i.i.d}, we observe the new dataset $(y_1,y_2,...,y_n) \in \S^n$. 
	
	\begin{theorem}
		\label{thm:max_likelihood_symmetric}
		For any convex set $H$ of hypotheses, the problem of finding the max-likelihood $\vec p\in H$ generating the observed signals $(y_1,..,y_n)$ is poly-time solvable. 
	\end{theorem}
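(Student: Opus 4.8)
The plan is to rewrite the maximum-likelihood objective as a concave function of $\vec p$ over a convex domain, so that the problem reduces to a single convex program. First I would write out the likelihood explicitly. In the symmetric scheme a signal $y_i$ is produced by first drawing a type $x\sim\vec p$ and then garbling it independently through $\M$, so $\Pr[y_i=s\mid \vec p]=\sum_{x\in\X} G(s,x)p(x)=\vec g_s\vec p$. By independence of the garblings the likelihood of the whole sample is $\prod_{i=1}^n \vec g_{y_i}\vec p$, and collecting observations by signal value --- write $n_s=|\{i:y_i=s\}|$ --- the (normalized) log-likelihood is $\mathcal{L}(\vec p)=\tfrac1n\sum_{s\in\S} n_s\log(\vec g_s\vec p)$, which is the function we want to maximize over $\vec p\in H$.

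Next I would check the two facts that make this tractable. \emph{Concavity}: each map $\vec p\mapsto \vec g_s\vec p$ is affine, $\log$ is concave and non-decreasing on $(0,\infty)$, and $n_s\ge 0$, so $\mathcal{L}$ is a non-negative combination of concave functions, hence concave, on its natural domain $\{\vec p:\vec g_s\vec p>0\text{ whenever }n_s>0\}$. \emph{Convexity of the feasible region}: it is the intersection of $H$ with the set of probability vectors over $\X$, and both of these are convex. Given these two facts, maximizing $\mathcal{L}$ over $H$ is a convex program and is solved in polynomial time by standard machinery --- the ellipsoid method given a separation oracle for $H$, or an interior-point method, for which $-\sum_{s:n_s>0}\log(\vec g_s\vec p)$ itself serves as a self-concordant barrier that automatically keeps the iterates inside the domain of $\mathcal{L}$. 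The one degenerate case, in which every feasible $\vec p$ assigns probability zero to some observed signal, is disposed of up front: there the likelihood is identically zero and any feasible point is a maximizer.

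The main obstacle is not in the optimization itself but in making the phrase ``poly-time'' precise: the guarantee is only as good as our access to $H$, so it rests on $H$ being presented with an efficient membership/separation oracle. This is immediate for the hypothesis sets of interest here --- a single distribution, or a total-variation ball around one --- so I would simply state the convex-programming reduction and flag this requirement. The remaining loose ends, namely that recovering an approximate maximizer (to any prescribed accuracy) is enough for our purposes and that the bit-complexity contributed by the logarithms is benign, are routine, and I would only sketch them rather than grind through them.
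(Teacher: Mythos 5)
Your proposal is correct and follows essentially the same route as the paper: derive the signal distribution $\vec g_s\T\vec p$, pass to the (negative) log-likelihood, observe convexity, and invoke standard convex programming. The only cosmetic difference is that you get convexity via the "log of affine composed with nonnegative combination" rule while the paper computes the Hessian $\tfrac1n\sum_s \tfrac{n_s}{(\vec g_s\T\vec p)^2}\vec g_s\vec g_s\T$ and notes it is PSD; your added remarks on the separation oracle for $H$ and the degenerate all-zero-likelihood case are sensible caveats the paper leaves implicit.
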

	\onlyconf{\vspace{-0.545cm}}
	\begin{proof}
		Since $G(s,x)$ describes the probability that a user of type $x$ sends the signal $s$, any distribution $\vec p\in H$ over the types in $\X$ yields a distribution on $\S$ where
		\myinlineequationdot{\Pr[\textrm{user sends } s] = \sum\limits_{x\in \X} \Pr[\textrm{user sends } s |~ \textrm{user of type }x]~\cdot~\Pr[\textrm{user of type }x]\ = \sum_{x\in \X}G(s,x)\cdot p(x) = \vec g_s\T \vec p}
		Therefore, given the signal $(y_1,...,y_n)$, we can summarize it by a histogram over the different signals $\langle n_s \rangle_{s\in\S}$, and thus the likelihood of seeing this particular signal is given by:
		\myinlineequationdot{L(\vec p; y_1,...,y_n) = \prod_{i} \vec g_{y_i}\T\vec p = \prod_{s\in \S} (\vec g_s\T  p)^{n_s} = \exp\left( \sum_{s\in \S}  n_s \log(\vec g_s\T  \vec p)  \right)
		}
		\toggle{We clearly have that $\arg\max_{\vec p\in H}\left( L(\vec p; y_1,...,y_n)\right) 
		= \arg\min_{\vec p \in H} -\sum_{s\in \S} -\tfrac{ n_s}n \log(\vec g_s\T  \vec p)
		$.}{As ever, \begin{equation} 
		\label{eq:symmetric_case_optimization_problem}
		\arg\max_{\vec p\in H}\left( L(\vec p; y_1,...,y_n)\right) = \arg\min_{\vec p\in H} \left(-\tfrac 1 n \log(L(\vec p; y_1,...,y_n)) \right) = \arg\min_{\vec p \in H} -\tfrac 1 n\sum_{s\in \S}  n_s \log(\vec g_s\T  \vec p)	\end{equation}}
		Denoting the log-loss function as $f(\vec p) = -\sum_{s\in \S} \tfrac{ n_s}n \log(\vec g_s\T  \vec p)$, we get that its gradient is
		\myinlineequationcomma{\nabla f = -\tfrac 1 n\sum_{s\in \S} \frac {n_s} {\vec g_s\T \vec p} \cdot \vec g_s}
		and its Hessian is given by the $(|\X|\times |\X|)$-matrix 
		\myinlineequationdot{\tfrac 1 n\sum_{s\in S} \frac{n_s}{(\vec g_s\T \vec p)^2} \vec g_s \vec g_s\T}
		As $\sum_s \vec g_s \vec g_s\T$ is a PSD matrix, and each of its rank-$1$ summands is scaled by a positive number, it follows that the Hessian is a PSD matrix and that our loss-function is convex. Finding the minimizer of a convex function over a convex set is poly-time solvable (say, by gradient descent~\cite{Zinkevich03}), so we are done.		
	\end{proof}
	\onlyconf{\vspace{-0.4cm}}
	Unfortunately, in general the solution to this problem has no closed form (to the best of our knowledge). However, we can find a close-form solution under the assumption that $G$ isn't just any linear transformation but rather one that induces probability distribution over $\S$, the assumption that $|\S| \leq |\X|$ (in all applications we are aware of use fewer signals than user-types) and one extra-condition.
	
	\DeclareRobustCommand{\corCloseFormSolution}{Let $\vec q^*$ be the $|\S|$-dimensional vector given by $\langle \tfrac {n_s}{n}\rangle$. Given that $|\S|\leq |\X|$, that $G$ is a full-rank matrix satisfying $\|G\|_1 = 1$ and assuming that $\big( G^\dagger \vec q^* + \ker(G)\big) \cap H \neq \emptyset$, then any vector in $H$ of the form $\vec p^* + \vec u$ where $\vec p^* = G^\dagger \vec q^*$ and $\vec u \in \ker(G)$ is an hypothesis that maximizes the likelihood of the given signals $(y_1,...,y_n)$.}
	\begin{corollary}
		\label{cor:close_form_solution}
		\corCloseFormSolution
	\end{corollary}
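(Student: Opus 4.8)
The plan is to show that the likelihood maximizers over $H$ are exactly the distributions $\vec p\in H$ with $G\vec p=\vec q^*$, and that the affine subspace $\vec p^*+\ker(G)$ is precisely the solution set of this linear system, so that the stated vectors are indeed maximizers. The computation in the proof of Theorem~\ref{thm:max_likelihood_symmetric} already reduces everything to minimizing the log-loss $f(\vec p)=-\sum_{s\in\S}\tfrac{n_s}{n}\log(\vec g_s\T\vec p)$ over $H$, so it suffices to identify its minimum value and the points attaining it.

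First I would use the full-rank assumption: since $|\S|\le|\X|$ and $G$ is full-rank, $\mathrm{rank}(G)=|\S|$, i.e. $G$ has full row rank, and hence $GG^\dagger=I_{|\S|}$. Consequently, for any $\vec u\in\ker(G)$ the vector $\vec p=\vec p^*+\vec u=G^\dagger\vec q^*+\vec u$ satisfies $G\vec p=GG^\dagger\vec q^*=\vec q^*$, that is $\vec g_s\T\vec p=q^*(s)=n_s/n$ for every $s\in\S$. Substituting into $f$ gives $f(\vec p)=-\sum_{s}\tfrac{n_s}{n}\log(n_s/n)=H(\vec q^*)$, the Shannon entropy of $\vec q^*$ (terms with $n_s=0$ contributing $0$). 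In particular $f$ takes the \emph{same} value $H(\vec q^*)$ at every such $\vec p$.

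Next I would show $H(\vec q^*)$ is a lower bound for $f$ on all of $H$. This is where $\|G\|_1=1$ and the non-negativity of $G$'s entries are used: together they say $G$ is column-stochastic, so for any distribution $\vec p'\in H$ the vector $G\vec p'$ is again a probability distribution over $\S$. Writing $f$ as a cross-entropy, $f(\vec p')=-\sum_{s}q^*(s)\log(\vec g_s\T\vec p')=H(\vec q^*)+\sum_{s}q^*(s)\log\frac{q^*(s)}{(G\vec p')(s)}\ge H(\vec q^*)$ by Gibbs' inequality (non-negativity of the relative entropy between the distributions $\vec q^*$ and $G\vec p'$); the bound holds trivially with right-hand side $+\infty$ when $(G\vec p')(s)=0$ for some $s$ with $q^*(s)>0$. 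Combining the two parts: any $\vec p$ of the stated form $\vec p^*+\vec u$, $\vec u\in\ker(G)$, that lies in $H$ has $f(\vec p)=H(\vec q^*)\le f(\vec p')$ for all $\vec p'\in H$, hence it maximizes the likelihood; the hypothesis $(G^\dagger\vec q^*+\ker(G))\cap H\ne\emptyset$ is exactly what ensures such a maximizer actually belongs to $H$.

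The only place that requires any care is the claim that $G\vec p'$ is a bona fide distribution, since the cross-entropy/Gibbs step collapses without it — and it is precisely column-stochasticity of $G$ (i.e.\ $\|G\|_1=1$ together with $G\ge 0$) that supplies this. Everything else is elementary linear algebra: that $GG^\dagger=I$ under full row rank, and that the consistent linear system $G\vec p=\vec q^*$ has solution set the affine subspace $G^\dagger\vec q^*+\ker(G)$ (so the characterization of maximizers is in fact an equivalence, although only the stated direction is needed).
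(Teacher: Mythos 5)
Your proposal is correct, and it reaches the same structural conclusion as the paper --- the minimizers are exactly the $\vec p\in H$ with $G\vec p=\vec q^*$, and the feasibility assumption guarantees one exists --- but the key optimality step is argued differently. The paper decouples the problem into an optimization over $\vec q$ subject to $\vec q=G\vec p$ and uses Lagrange multipliers / first-order conditions to conclude that the optimal $\vec q$ equates all ratios $n_s/q(s)$, hence $\vec q=\vec q^*$; this identifies a critical point and implicitly leans on convexity (and quietly drops the nonnegativity constraints) to promote it to a global minimum. You instead write $f(\vec p')=H(\vec q^*)+D_{\mathrm{KL}}(\vec q^*\,\|\,G\vec p')$ and invoke Gibbs' inequality, which gives the global lower bound $f\geq H(\vec q^*)$ directly over all of $H$ with no appeal to first-order conditions, and you explicitly compute that the stated points attain this bound. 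The one ingredient both arguments share and use identically is the linear-algebra step: full row rank plus $|\S|\leq|\X|$ gives $GG^\dagger=I$, so every $\vec p^*+\vec u$ with $\vec u\in\ker(G)$ maps to $\vec q^*$. Your observation that column-stochasticity of $G$ (nonnegativity plus unit column sums) is what makes $G\vec p'$ a genuine distribution --- so that Gibbs' inequality applies --- is exactly the right place to locate the role of the $\|G\|_1=1$ hypothesis, and your version is arguably the cleaner and more self-contained of the two.
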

	\DeclareRobustCommand{\proofCorCloseFormSolution}{
	\begin{proof}
		Our goal is to find some $\vec p \in H$ which minimizes $f(\vec p)$. Denoting $\vec q$ as the $|\S|$-dimensional vector such that $q(s) = \vec g_s \T \vec p$, we note that $G$ isn't just any linear transformation, but rather one that induces probability over the signals, and so $\vec q$ is a non-negative vector that sums to $1$. We therefore convert the problem of minimizing our loss function into the following optimization  problem
		\begin{align*}
		\min \phi(\vec p, \vec q) &= -\sum_{s\in S} n_s \log(q(s))\\
		\textrm{subject to} & \sum_s q(s) = 1\\
		& \forall s, q(s)\geq 0\\
		& \vec q = G \vec p\\
		& \vec p \in H
		\end{align*}
		Using Lagrange multipliers, it is easy to see that $\frac{\partial \phi}{\partial \vec q} = \langle \frac {-n_s}{q(s)} \rangle _{s\in \S}$ and that $\frac{\partial}{\partial \vec q}\left( \sum_{s\in \S} q(s)-1\right) = \vec 1 = \frac{\partial}{\partial \vec q}\left( \vec q-G\vec p=0\right)$ and so the minimizer is obtained when $\vec q$ equates all ratios $\frac {n_s}{q(s)} = \frac{n_{s'}}{q(s')}$ for all $s,s'$, namely when $\vec q =\vec q^*$. Since we assume $G^\dagger \vec q^* + \ker(G)$ has a non-empty intersection with $H$, then let $\vec p$ be any hypothesis in $H$ of the form $\vec p^* + \vec u$ where $\vec u\in \ker(G)$. We get that $(\vec p,\vec q)$ is the minimizer of $\phi$ satisfying all constraints. By assumption, $\vec p\in H$. Due to the fact that $G$ is full-rank and that $|\S| \leq |\X|$ we have that $G(\vec p^*+u)  = G\cdot  G^\dagger \vec q^* + \vec 0 = I \cdot \vec q^* = \vec q^*$, and by definition, $\vec q^*$ is a valid distribution vector (non-negative that sums to $1$).
	\end{proof}}
	\toggle{\vspace{-0.3cm} Proof deffered to the \myAppendix, Section~\ref{apx_sec:proofs_RR}.}{\proofCorCloseFormSolution}
	\onlyfull{If all conditions of Corollary~\ref{cor:close_form_solution} hold, we get a simple procedure for finding a minimizer for our loss-function: (1) Compute the pseudo-inverse $G^\dagger$ and find $\vec p^* = G^\dagger \vec q^*$; (2) find a vector $\vec u \in \ker(G)$ such that $\vec p^*+\vec u \in H$. (The latter steps requires the exact description of $H$, and might be difficult if $H$ is not convex. However, if $H$ is convex, then $H-\vec p^*$ is a shift of a convex body and therefore convex, so finding the point $\vec x \in H-\vec p^*$ which minimizes the distance to a given linear subspace is a feasible problem.)}
	
	\subsection{Hypothesis Testing under Randomized-Response} 
	\label{subsec:RR_hyp_testing}
	We now aim to check the affect of a particular $G$, the one given by the randomized-response mechanism. In this case $\S = \X$ and we denote $G$ as the matrix whose entries are $G(x,x') = \begin{cases} \rho+\gamma &\textrm{, if }x'=x \\ \rho &\textrm{, otherwise}\end{cases}$ where $\rho \stackrel{\rm def}=\tfrac 1 {T-1+e^\epsilon}$ and $\gamma \stackrel{\rm def}{=} \tfrac {e^\epsilon-1}{T-1+e^\epsilon}$. We get that $G = \rho \cdot 1_{\X\times \X} + \gamma I$ (where $1_{\X\times X}$ is the all-$1$ matrix). In particular, all vectors $\vec g_s = \vec g_x$, which correspond to the rows of $G$, are of the form: $\vec g_x = \rho \vec 1 + \gamma \vec e_x$. It follows that for any probability distribution $\vec p\in H$ we have that $\Pr[\textrm{seeing signal }x] = \vec g_x\T \vec p = \rho + \gamma p(x)$. We have therefore translated any $\vec p\in H$ (over $\X$) to an hypothesis $\vec q$ over $\S$ (which in this case $\S = \X$), using the affine transformation $\varphi(\vec p) = \rho \vec 1 + \gamma \vec p = T\rho \vec u_{\X} + \gamma \vec p$ when $\vec u_\X$ denotes the uniform distribution over $\X$. (Indeed, $\gamma = 1-T\rho$, an identity we will often apply.) Furthermore, at the risk of overburdening notation, we use $\varphi$ to denote the same transformation over scalars, vectors and even sets (applying $\varphi$ to each vector in the set). 
	\onlyfull{ 
	
	}As $\varphi$ is injective, we have therefore discovered the following theorem. 
	
	\begin{theorem}
		\label{thm:hyp_testing_RR}
		Under the classic randomized response mechanism, testing for any hypothesis $H_0$ (or for comparing $H_0$ against the alternative $H_1$) of the original distribution, translates into testing for hypothesis $\varphi(H_0)$ (or $\varphi(H_0)$ against $\varphi(H_1)$) for generating the signals $y_1,...,y_n$. 
	\end{theorem}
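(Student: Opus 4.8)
The plan is to extract, from the computation already carried out just before the theorem, the exact distribution that each garbled datum follows, and then to observe that the map $\varphi$ is an affine \emph{bijection} from the probability simplex over $\X$ onto its image. Once that is in place, membership of the true distribution in any hypothesis set is preserved by $\varphi$ in both directions, and the translated testing problem is literally the same decision problem on the same data.

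First I would record the per-sample distribution: under randomized response each signal $y_i$ is drawn i.i.d.\ from the distribution on $\S=\X$ whose $x$-th coordinate is $\vec g_x\T\vec p = \rho + \gamma p(x)$, which in vector form is exactly $\varphi(\vec p) = \rho\vec 1 + \gamma\vec p$. Next I would verify that $\varphi$ genuinely sends the simplex over $\X$ into the simplex over $\S$: the entries are non-negative because $\rho,\gamma\ge 0$ and $p(x)\ge 0$, and they sum to $T\rho + \gamma\sum_{x}p(x) = T\rho+\gamma = 1$ by the identity $\gamma = 1-T\rho$ noted earlier.

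The key step is injectivity. Since $\varphi$ is affine with linear part $\gamma I$ and $\gamma = \tfrac{e^\epsilon-1}{T-1+e^\epsilon}>0$ for $\epsilon>0$, the map is invertible on all of $\R^{\X}$, with $\varphi^{-1}(\vec q)=\gamma^{-1}(\vec q-\rho\vec 1)$; hence it restricts to a bijection from $H_0$ onto $\varphi(H_0)$ and from $H_1$ onto $\varphi(H_1)$, and $\varphi(H_0)\cap\varphi(H_1)=\emptyset$ because $H_0\cap H_1=\emptyset$. Therefore $\vec p\in H_0 \iff \varphi(\vec p)\in\varphi(H_0)$, and likewise for $H_1$. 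The conclusion is then a restatement: the signals $y_1,\dots,y_n$ we actually observe are i.i.d.\ draws from $\varphi(\vec p)$, where $\vec p$ is the true underlying distribution, so the decision problem ``were the $y_i$ generated by i.i.d.\ draws from a distribution in $H_0$?'' is identical, as a problem on the same sample, to ``were the $y_i$ generated by i.i.d.\ draws from a distribution in $\varphi(H_0)$?'', and analogously for the $H_0$-versus-$H_1$ formulation; any tester for one is a tester for the other.

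I do not expect a genuine obstacle: everything reduces to the already-established identity $\vec g_x\T\vec p=\rho+\gamma p(x)$ together with elementary linear algebra. The only points that need care are (i) stating ``translates into'' precisely, i.e.\ that we claim equivalence of the two testing problems \emph{on the same sample} $y_1,\dots,y_n$, and (ii) noticing that it is injectivity of $\varphi$ --- not merely that $\varphi$ is a well-defined map on distributions --- that makes the reduction run in both directions and preserves disjointness of the hypotheses. The incidental facts used later (for instance that $\varphi$ contracts total-variation distance by the factor $\gamma$, so $d_{\rm TV}(\varphi(\vec p),\varphi(\vec q))=\gamma\, d_{\rm TV}(\vec p,\vec q)$, which explains why the translated sets are harder to distinguish) are not required for the theorem itself but are worth flagging, since they drive the sample-complexity corollaries that follow.
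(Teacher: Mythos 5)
Your proposal is correct and matches the paper's own argument: the paper derives $\Pr[\textrm{seeing signal }x]=\vec g_x\T\vec p=\rho+\gamma p(x)$, observes that this is the affine map $\varphi(\vec p)=\rho\vec 1+\gamma\vec p$, and concludes the theorem from injectivity of $\varphi$ exactly as you do. Your additional checks (that $\varphi$ maps the simplex into the simplex, preserves disjointness of $H_0$ and $H_1$, and that the reduction is on the same sample) merely spell out details the paper leaves implicit.
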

	\onlyconf{\vspace{-0.3cm}}
	Theorem~\ref{thm:hyp_testing_RR} seems very natural and simple, and yet (to the best of our knowledge) it was never put to words.
	
	Moreover, it is simple to see that under standard-randomized response, our log-loss function is in fact strongly-convex, and therefore finding $\vec p^*$ becomes drastically more efficient (see, for example~\cite{HazanKKA06}).
	\DeclareRobustCommand{\clmLikelihoodStronglyConvexityRR}{Given signals $y_1,...,y_n$ generated using standard randomized response with parameter $\epsilon<1$, we have that our log-loss function \onlyfull{from Equation~\eqref{eq:symmetric_case_optimization_problem}} is $\Theta(\epsilon^2\cdot  \frac{\min_x \{n_x\}}n)$-strongly convex. }
	\begin{claim}
		\label{clm:likelihood_strongly_convexity_RR}
		\clmLikelihoodStronglyConvexityRR
	\end{claim}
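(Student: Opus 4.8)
The plan is to work directly with the Hessian of the log-loss $f$ already computed in the proof of Theorem~\ref{thm:max_likelihood_symmetric}, namely $\nabla^2 f(\vec p) = \tfrac1n\sum_{x\in\X}\tfrac{n_x}{(\vec g_x\T\vec p)^2}\,\vec g_x\vec g_x\T$, and to substitute the randomized-response structure $\vec g_x = \rho\vec 1 + \gamma\vec e_x$. First I would record the elementary estimates valid for $\epsilon\in(0,1]$: since $\rho = \tfrac1{T-1+e^\epsilon}$ with $e^\epsilon\in(1,e]$ we have $\rho = \Theta(1/T)$; since $\gamma = \rho(e^\epsilon-1)$ and $\epsilon\le e^\epsilon-1\le(e-1)\epsilon$ we have $\gamma = \Theta(\epsilon/T)$; and since $\vec g_x\T\vec p = \rho + \gamma p(x)$ with $p(x)\in[0,1]$ we get $\rho\le \vec g_x\T\vec p\le e^\epsilon\rho$, so $\vec g_x\T\vec p = \Theta(1/T)$ for \emph{every} coordinate $x$ and every distribution $\vec p$. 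This last uniformity is what makes the two matching bounds below independent of $\vec p$.

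For the lower bound (the strong-convexity direction), fix any $\vec u$ and set $S = \vec 1\T\vec u$, so that $\vec g_x\T\vec u = \rho S + \gamma u(x)$. Then $\vec u\T\nabla^2 f(\vec p)\,\vec u = \tfrac1n\sum_x\tfrac{n_x}{(\vec g_x\T\vec p)^2}(\rho S + \gamma u(x))^2 \ge \tfrac{\min_x n_x}{n\cdot\max_x(\vec g_x\T\vec p)^2}\sum_x(\rho S+\gamma u(x))^2$. Expanding the sum gives $T\rho^2S^2 + 2\rho\gamma S^2 + \gamma^2\|\vec u\|^2$, whose first two terms are nonnegative, so $\sum_x(\rho S + \gamma u(x))^2\ge \gamma^2\|\vec u\|^2$. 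Combining this with $\gamma^2/\max_x(\vec g_x\T\vec p)^2 \ge (e^\epsilon-1)^2\rho^2/(e^\epsilon\rho)^2 \ge \epsilon^2/e^2$ yields $\nabla^2 f(\vec p)\succeq \tfrac{\epsilon^2}{e^2}\cdot\tfrac{\min_x n_x}{n}\,I$, i.e.\ the claimed $\Omega\big(\epsilon^2\min_x\{n_x\}/n\big)$ strong convexity.

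For the matching upper bound I would exhibit a single direction of small curvature. Let $x^*\in\arg\min_x n_x$ and pick $\vec u$ that annihilates every row except $\vec g_{x^*}$: imposing $\rho S + \gamma u(x) = 0$ for all $x\ne x^*$ forces $u(x) = -\rho S/\gamma$, and summing over $x$ together with the identity $\gamma+(T-1)\rho = 1-\rho$ gives $S = \gamma u(x^*)/(1-\rho)$ and hence $\|\vec u\|^2 = u(x^*)^2\big(1 + (T-1)\rho^2/(1-\rho)^2\big) = \Theta\big(u(x^*)^2\big)$ since $(T-1)\rho^2 = \Theta(1/T)$; normalize so that $u(x^*) = \Theta(1)$. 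For this fixed $\vec u$ and \emph{any} $\vec p$, all but one term of the quadratic form vanish, leaving $\vec u\T\nabla^2 f(\vec p)\,\vec u = \tfrac{n_{x^*}}{n(\vec g_{x^*}\T\vec p)^2}(\vec g_{x^*}\T\vec u)^2$, and since $\vec g_{x^*}\T\vec u = \gamma u(x^*)/(1-\rho) = \Theta(\epsilon/T)$ while $\vec g_{x^*}\T\vec p = \Theta(1/T)$, this equals $\Theta\big(\epsilon^2\min_x\{n_x\}/n\big)$. Taking the infimum over $\vec p$ of $\lambda_{\min}(\nabla^2 f(\vec p))$ therefore gives exactly $\Theta\big(\epsilon^2\min_x\{n_x\}/n\big)$.

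The routine part is the lower bound: it is essentially the Hessian computation from Theorem~\ref{thm:max_likelihood_symmetric} plus the crude estimate $\vec g_x\T\vec p = \Theta(1/T)$. The slightly more delicate point, and the one I would be most careful about, is the upper bound: one must notice that the rows $\{\vec g_x\}_{x\ne x^*}$ have a common nonzero orthogonal direction, verify that this direction still has Euclidean norm $\Theta(1)$ after the substitution (which hinges precisely on $(T-1)\rho^2 = \Theta(1/T)$, i.e.\ $\rho=\Theta(1/T)$), and observe that the resulting curvature estimate is uniform in $\vec p$ because $\vec g_{x^*}\T\vec p$ is pinned to $\Theta(1/T)$ for every distribution $\vec p$. (If $\min_x n_x = 0$ the statement is vacuous, since the Hessian is then rank-deficient and $f$ is not strongly convex at all.)
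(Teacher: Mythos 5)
Your proof is correct and takes essentially the same route as the paper's: compute the Hessian of the log-loss, substitute the randomized-response structure $\vec g_x = \rho\vec 1 + \gamma\vec e_x$, and combine $\gamma \ge \epsilon\rho$ with $\vec g_x\T\vec p \le e^\epsilon\rho$ to obtain the $\Omega\big(\epsilon^2\min_x\{n_x\}/n\big)$ lower bound on the curvature. The only differences are cosmetic or additive: the paper first substitutes $\vec 1\T\vec p=1$ so that its Hessian is the diagonal matrix $\diag\big(\gamma^2 n_x/(n(\rho+\gamma p(x))^2)\big)$ and the cross terms you discard never arise, and the paper proves only the ``at least'' direction, whereas you additionally construct the direction orthogonal to $\{\vec g_x\}_{x\neq x^*}$ to certify the matching upper bound implicit in the $\Theta(\cdot)$ — a sound (and slightly more complete) supplement.
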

	\onlyconf{\vspace{-0.3cm}}
	Note that in expectation $n_x \geq \rho n$, hence with overwhelming probability we have $\min_x n_x \geq n/(2T)$\onlyfull{ so our log-loss function is $\Theta(\tfrac{\epsilon^2}{T})$-strongly convex}.
	\DeclareRobustCommand{\proofClmLikelihoodStronglyConvexityRR}{
	\begin{proof}
		Recall that for any $x\in \X$ we have $\vec g_x\T \vec p = \rho + \gamma p(x)$. Hence, our log-loss function $f(\vec p) = -\tfrac 1 n\sum_{x\in \X} n_x \log(\rho + \gamma p(x))$, whose gradient is the vector whose $x$-coordinate is $\frac{ \partial f}{\partial p(x)} = \frac{-\gamma n_x}{\rho + \gamma p(x)}$. The Hessian of $f$ is therefore the diagonal matrix whose diagonal entries are $\frac {\gamma^2 n_x}{(\rho+\gamma p(x))^2}$. Recall the definitions of $\gamma$ and $\rho$: it is easy to see that $\gamma \geq \epsilon \rho$, and since $\epsilon<1$ we also have that $e^\epsilon-1\leq 2\epsilon$, hence $\gamma \leq 2\epsilon \cdot \rho$. And so:
		\toggle{\begin{align*}
			\nabla^2 f &\succeq \frac{\min_x \{n_x\}}n \cdot \frac{\gamma^2}{(\rho + 2\epsilon\rho\cdot 1)^2} I 
			\cr& \succeq \min_x \{n_x\} \cdot \frac{\epsilon^2\rho^2}{\rho^2(1+2\epsilon)^2} I \succeq \min_x \{n_x\} \cdot \frac{\epsilon^2}{(1+2\epsilon)^2} I
			\end{align*}
			}{
		\[ \nabla^2 f \succeq \frac{\min_x \{n_x\}}n \cdot \frac{\gamma^2}{(\rho + 2\epsilon\rho\cdot 1)^2} I \succeq \min_x \{n_x\} \cdot \frac{\epsilon^2\rho^2}{\rho^2(1+2\epsilon)^2} I \succeq \min_x \{n_x\} \cdot \frac{\epsilon^2}{(1+2\epsilon)^2} I \]} making $f$ at least ($\frac {\epsilon^2}{9}\cdot  \frac{\min_x \{n_x\}}n$)-strongly convex.
	\end{proof}}
	\toggle{The proof is fairly straight-forawrd and is deferred to the \myAppendix, Section~\ref{apx_sec:proofs_RR}.}{\proofClmLikelihoodStronglyConvexityRR}
	
	A variety of corollaries follow from Theorem~\ref{thm:hyp_testing_RR}. In particular, a variety of detailing matching sample complexity upper- and lower-bounds translate automatically into the realm of making such hypothesis-tests over the outcomes of the randomized-response mechanism. We focus here on two of the most prevalent tests: identity testing and independence testing.
	
	\paragraph{Identity Testing.} Perhaps the simplest of the all hypothesis testing is to test whether a given sample was generated according to a given distribution or not. Namely, the null hypothesis is a single hypothesis $H_0 = \{\vec p\}$, and the alternative is $H_1 = \{\vec q:~ d_{\rm TV}(\vec p,\vec q) \geq \alpha\}$ for a given parameter $\alpha$. The seminal work of Valiant and Valiant~\mycite{ValiantV14} discerns that (roughly) $\Theta( \|\vec p\|_{\frac 2 3} / \alpha^2 )$ samples are sufficient and are necessary for correctly rejecting or accepting the null-hypothesis w.p.$\geq 2/3$.\footnote{For the sake of brevity, we ignore pathological examples where by removing $\alpha$ probability mass from $\vec p$ we obtain a vector of significantly smaller $\tfrac 2 3$-norm.}
	
	Here, the problem of identity testing under standard randomized response reduces to the problem of hypothesis testing between $\varphi(H_0) = \{\rho\vec 1 + \gamma \vec p:~\vec p \in H_0\}$ and $\varphi(H_1) = \{ \varphi(\vec q) :~\vec q ~{\rm satisfying}~d_{\rm TV}(\vec p,\vec q)\geq \alpha \}$. 
	
	\DeclareRobustCommand{\corIdentityTestingRR}{In order to do identity testing under standard randomized response with confidence and power $\geq 2/3$, it is necessary and sufficient that we get $\Theta(\frac {T^{2.5}}{\epsilon^2\alpha^2})$ samples.}
	\begin{corollary}
		\label{cor:identity_testing_RR}
		\corIdentityTestingRR
	\end{corollary}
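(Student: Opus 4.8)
The plan is to read the corollary off Theorem~\ref{thm:hyp_testing_RR} together with the tight identity-testing bounds of Valiant and Valiant~\mycite{ValiantV14}. By Theorem~\ref{thm:hyp_testing_RR}, identity testing under randomized response --- distinguishing $H_0=\{\vec p\}$ from $H_1=\{\vec q:~d_{\rm TV}(\vec p,\vec q)\ge\alpha\}$ --- is, in the very same model and with the very same number of samples, identity testing over the signal domain $\S=\X$ for the single distribution $\vec q_0:=\varphi(\vec p)=\rho\vec 1+\gamma\vec p$ against the alternative $\varphi(H_1)$. First I would record the effect of $\varphi$ on the two quantities that govern~\mycite{ValiantV14}. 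Since $\varphi$ is affine with linear part $\gamma I$ and $\gamma>0$, we have $d_{\rm TV}(\varphi(\vec p),\varphi(\vec q))=\tfrac12\|\gamma(\vec p-\vec q)\|_1=\gamma\cdot d_{\rm TV}(\vec p,\vec q)$, so $\varphi(H_1)$ is exactly the set of distributions that are $\gamma\alpha$-far from $\vec q_0$ and lie in $\{\vec r:~\sum_x r(x)=1,\ r(x)\ge\rho~\forall x\}$, which (using $\gamma=1-T\rho$) is precisely the image of the probability simplex under the bijection $\varphi$. Next, for $\epsilon<1$ we have $\rho=\Theta(1/T)$ and $\gamma=\Theta(\epsilon/T)\le\Theta(\rho)$, so every coordinate $\vec q_0(x)=\rho+\gamma p(x)$ equals $\Theta(1/T)$; hence $\|\vec q_0\|_{\frac 2 3}=\big(\sum_x\Theta(T^{-2/3})\big)^{3/2}=\Theta(\sqrt T)$, and this norm changes by at most a constant factor when one removes $\gamma\alpha$ mass from $\vec q_0$, so $\vec q_0$ is outside the pathological regime excluded by the footnote.

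For the \textbf{upper bound} I would run the~\mycite{ValiantV14} identity tester for $\vec q_0$ with distance parameter $\gamma\alpha$: it succeeds with probability $\ge2/3$ using $O(\|\vec q_0\|_{\frac 2 3}/(\gamma\alpha)^2)$ samples and correctly rejects every distribution that is $\gamma\alpha$-far from $\vec q_0$, in particular every member of $\varphi(H_1)$. Substituting $\|\vec q_0\|_{\frac 2 3}=\Theta(\sqrt T)$ and $\gamma=\Theta(\epsilon/T)$ gives $O(\sqrt T\cdot T^2/(\epsilon^2\alpha^2))=O(T^{2.5}/(\epsilon^2\alpha^2))$ samples, and by Theorem~\ref{thm:hyp_testing_RR} the same number of private samples solves the original problem. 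For the \textbf{lower bound}, Theorem~\ref{thm:hyp_testing_RR} says that any correct randomized-response identity tester using $m$ samples is a correct identity tester for $\vec q_0$ versus $\varphi(H_1)$ using $m$ signals, so it suffices to invoke the matching $\Omega(\|\vec q_0\|_{\frac 2 3}/(\gamma\alpha)^2)=\Omega(T^{2.5}/(\epsilon^2\alpha^2))$ lower bound of~\mycite{ValiantV14}.

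The step I expect to be the main obstacle is that this lower bound is stated against the \emph{full} family of $\gamma\alpha$-far distributions, whereas our alternative $\varphi(H_1)$ consists only of the $\gamma\alpha$-far distributions that additionally keep all coordinates $\ge\rho$. I would resolve this by observing that the standard hard family against a near-uniform $\vec q_0$ perturbs each coordinate by only $O(1/T)$ around its base value $\Theta(1/T)$, so after at most a constant rescaling of $\alpha$ every perturbed coordinate remains $\ge\rho$; a short computation with the definitions of $\rho$ and $\gamma$ shows that even the extreme perturbation $(1-\gamma\alpha)/T$ of a uniform $\vec q_0$ still exceeds $\rho$ whenever $\alpha<1$, so the whole family stays in the image of $\varphi$. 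Equivalently, one may pull the hard family back through $\varphi^{-1}$ and check directly that it consists of valid distributions over $\X$ that are $\Omega(\alpha)$-far from $\vec p$. With that verified, the hard family lies in $\varphi(H_1)$, the lower bound matches the upper bound, and the corollary follows with sample complexity $\Theta(T^{2.5}/(\epsilon^2\alpha^2))$.
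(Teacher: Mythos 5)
Your proposal is correct and follows essentially the same route as the paper's proof: reduce via Theorem~\ref{thm:hyp_testing_RR}, compute $\|\varphi(\vec p)\|_{\frac 2 3}=\Theta(\sqrt T)$ (the paper derives this via a two-sided bound $T\rho^{2/3}+\gamma^{2/3}\|\vec p\|_{2/3}^{2/3}$ and then notes the first term dominates, whereas you observe directly that every coordinate is $\Theta(1/T)$ --- a cosmetic difference), invoke the Valiant--Valiant upper bound at distance $\gamma\alpha$, and justify the lower bound by noting that the hard perturbation ensemble $\{p(x)\pm\Delta(x)\}$ is carried by $\varphi$ into $\varphi(H_1)$. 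Your explicit check that the perturbed coordinates stay at least $\rho$ (equivalently, that the pullback under $\varphi^{-1}$ consists of valid distributions) is a slightly more careful treatment of the point the paper handles in its closing comment.
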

	\DeclareRobustCommand{\proofCorIdentityTestingRR}{
	\begin{proof}
	For any $\vec q\in H_1$ it follows that $d_{\rm TV}(\varphi(\vec p), \varphi(\vec q)) = \tfrac 1 2 \| (\rho\vec 1 + \gamma\vec p) - (\rho\vec 1 + \gamma\vec q)  \|_1 = \tfrac \gamma 2 \|\vec p - \vec q\|_1 = \gamma \cdot d_{\rm TV}(\vec p, \vec q) \geq \gamma\alpha$. Recall that $\rho = \tfrac 1 {T-1+e^\epsilon}$ and $\gamma = \tfrac {e^\epsilon-1}{T-1+e^\epsilon}$, and so, for $\epsilon < 1$ we have $\tfrac 1 {T+2\epsilon} \leq \rho \leq \tfrac 1 T$ and $\tfrac \epsilon {T+2} \leq \gamma \leq \tfrac {2\epsilon} {T}$, namely $\rho = \Theta(1/T)$ and $\gamma = \Theta(\epsilon/T)$.
	Next, we bound $\|\rho\vec 1 + \gamma \vec p\|_{\frac 23}$:
	\toggle{\begin{align*}
		\left(\|\rho\vec 1 + \gamma \vec p \|_{\frac 2 3}\right)^{2/3} &= \sum_{x\in X} \left( \rho + \gamma p(x)\right)^{\frac 2 3} 
		\cr & \geq \sum_{x\in X} \max\{ \rho^{2/3}, \gamma^{2/3} p(x)^{2/3}  \} 
		\cr & \geq \max\left\{ T\rho^{2/3}, ~ \gamma^{2/3} \|\vec p\|_{\frac 2 3}^{2/3}   \right\}\\
		\intertext{ Using the fact that $(a+b)^{2/3} \leq a^{2/3} + b^{2/3}$ (See Proposition~\ref{pro:UB_2/3_norm} in Section~\ref{apx_sec:claims}) we also get }
		\sum_{x\in X} \left( \rho + \gamma p(x)\right)^{\frac 2 3} &\leq \sum_{x\in\X} \rho^{2/3} + \gamma^{2/3} p(x)^{2/3} 
		\cr &= T\rho^{2/3} + \gamma^{2/3}\|\vec p\|_{\frac 2 3}^{2/3}
		\end{align*}}{\begin{align*}
	\left(\|\rho\vec 1 + \gamma \vec p \|_{\frac 2 3}\right)^{2/3} &= \sum_{x\in X} \left( \rho + \gamma p(x)\right)^{\frac 2 3} \geq \sum_{x\in X} \max\{ \rho^{2/3}, \gamma^{2/3} p(x)^{2/3}  \} \geq \max\left\{ T\rho^{2/3}, ~ \gamma^{2/3} \|\vec p\|_{\frac 2 3}^{2/3}   \right\}\\
	\intertext{ Using the fact that $(a+b)^{2/3} \leq a^{2/3} + b^{2/3}$ (See Proposition~\ref{pro:UB_2/3_norm} in Section~\ref{apx_sec:claims}) we also get }
	&\sum_{x\in X} \left( \rho + \gamma p(x)\right)^{\frac 2 3} \leq \sum_{x\in\X} \rho^{2/3} + \gamma^{2/3} p(x)^{2/3} = T\rho^{2/3} + \gamma^{2/3}\|\vec p\|_{\frac 2 3}^{2/3}
	\end{align*}}
	It follows that the necessary and sufficient number of samples required for identity-testing under standard randomized response is proportional to
	\toggle{
		\begin{align*}
		&\Theta\left( \frac{\|\rho\vec 1 + \gamma\vec p\|_{\frac 2 3}}{\gamma^2\alpha^2}  \right) = \Theta(\left( \frac{\|\rho\vec 1 + \gamma\vec p\|_{\frac 2 3}^{2/3}}{\gamma^{4/3}\alpha^{4/3}}  \right)^{3/2}) 
		\cr &~= \Theta(  \left( \frac {T^{1/3} + \tfrac{\epsilon^{2/3}}{T^{2/3}}\|\vec p\|_{\frac 2 3}^{2/3}} {\tfrac{\epsilon^{4/3}}{T^{4/3}}\alpha^{4/3} }\right)^{\frac 3 2}   ) 
		\stackrel{(\ast)}= \Theta\left( \frac {T^{2.5}}{\epsilon^2\alpha^2} + \frac{ T\|\vec p\|_{\frac 2 3}   } {\epsilon\alpha^2}  \right)
		\end{align*}}{
	\begin{align*}
	\Theta\left( \frac{\|\rho\vec 1 + \gamma\vec p\|_{\frac 2 3}}{\gamma^2\alpha^2}  \right) &= \Theta(\left( \frac{\|\rho\vec 1 + \gamma\vec p\|_{\frac 2 3}^{2/3}}{\gamma^{4/3}\alpha^{4/3}}  \right)^{3/2}) = \Theta(  \left( \frac {T^{1/3} + \tfrac{\epsilon^{2/3}}{T^{2/3}}\|\vec p\|_{\frac 2 3}^{2/3}} {\tfrac{\epsilon^{4/3}}{T^{4/3}}\alpha^{4/3} }\right)^{3/2}   ) 
	\cr &= \Theta( \left(      \frac{T^{5/3}}{\epsilon^{4/3}\alpha^{4/3}} + \frac{ T^{2/3}\|\vec p\|_{\frac 2 3}^{2/3} } {\epsilon^{2/3}\alpha^{4/3}}\right)^{3/2} ) \stackrel{(\ast)}= \Theta\left( \frac {T^{2.5}}{\epsilon^2\alpha^2} + \frac{ T\|\vec p\|_{\frac 2 3}   } {\epsilon\alpha^2}  \right)
	\end{align*}}
	where the derivation marked by $(\ast)$ follows Proposition~\ref{pro:bound_3/2_norm} in Section~\ref{apx_sec:claims}.
	\onlyfull{
		
	}
	For any $T$-dimensional vector  $\vec x$ with $L_1$-norm of $1$ we have $\|x\|_{\frac 2 3} = \left(\sum\limits_{i=1}^T x(i)^{\frac 2 3}\right)^{\tfrac 3 2} \leq \sqrt T$. Thus $\|p\|\leq \sqrt T$ and therefore the first of the two terms in the sum is the greater one. The required follows.
	
	Comment: It is evident that the tester given by Valiant and Valiant~\mycite{ValiantV14} solves (w.p. $\geq 2/3$) the problem of identity-testing in the randomized response model using $\Theta(T^{2.5}/\epsilon^2\alpha^2)$ samples. However, it is not a-priori clear why their lower bounds hold for our problem. After all, the set $\varphi(H_1)$ is only a subset of $\{\vec q: d_{\rm TV}(\varphi(\vec p), \vec q) \geq \gamma\alpha\}$. Nonetheless, delving into the lower bound of Valiant and Valiant, the collection of distributions which is hard to differentiate from $\vec p$ given $o\left( \|\vec p\|_{\frac 2 3} \alpha^2 \right)$ samples is given by choosing suitable $\Delta(x)$ and then looking at the ensemble of distributions given by $\{p(x)\pm \Delta(x)\}$ for each $x\in \X$. Luckily, this ensemble is maintained under $\varphi$, mapping each such distribution to $\{\rho + \gamma p(x) \pm \gamma \Delta(x)\}$. The lower bound follows. 
	\end{proof}}
	\toggle{\onlyconf{\vspace{-0.3cm}}The proof uses the results of~\cite{ValiantV14} as a black-box and is mainly composed of calculations, so it is deferred to \myAppendix, Section~\ref{apx_sec:proofs_RR}.}{\proofCorIdentityTestingRR}
	
	\paragraph{Independence Testing.} Another prevalent hypothesis testing over a domain $\X$ where each type is composed of multiple feature is independence testing\onlyfull{ (examples include whether having a STEM degree is independent of gender or whether a certain gene is uncorrelated with cancer)}. Denoting $\X = \X^1 \times \X^2 \times ... \times \X^d$ as a domain with $d$ possible features (hence $T=|\X| = \prod_{j}|\X^j|\stackrel{\rm def}{=}\prod_j T^j$), our goal is to discern whether an observed sample is drawn from a product distribution or a distribution $\alpha$-far from any product distribution. In particular, the null-hypothesis in this case is a complex one: $H_0 = \{  \bar{\vec p} = \vec p^1 \times \vec p^2 \times ... \times \vec p^d \onlyfull{:~ \textrm{for each }j,~\vec p^j \textrm{ is a distrbution over }\X^j  }\}$ and the alternative is $H_1 = \{ {\vec q} : \min_{\bar{\vec p} \in H_0} d_{\rm TV}({\vec q},\bar{\vec p}) \geq \alpha\}$. To the best of our knowledge, the (current) tester with smallest sample complexity is of Acharya et al~\mycite{AcharyaDK15}, which requires $\Omega\left( (\sqrt{T} + \sum_{j} T^j)/\alpha^2\right)$ iid samples.

	We now consider the problem of testing for independence under standard randomized response.\onlyfull{\footnote{Note that if were to implement the feature-wise randomized response (i.e., run Randomize-Response per feature with privacy loss set to $\epsilon/d$) then we would definitely create signals that come from a product distribution. That is why we stick to the straight-forward implementation of Randomized Response even when $\X$ is composed of multiple features.}}	Our goal is to prove the following theorem.
	\begin{theorem}
		\label{thm:independence_testing_RR}
		There exists an algorithm that takes $n = \tilde\Omega( \frac{T^2}{\alpha^2\epsilon^2} \left( d^2(\max\limits_j \{T^j\})^2 + \sqrt{T} \right) )$ signals generated by applying standard randomized response (with $\epsilon<1$) on $n$ samples drawn from a distribution $\vec p$ \onlyfull{over a domain $\X = \X^1 \times... \times \X^d$} and with probability $\geq 2/3$ accepts if $\vec p\in H_0$, or rejects if $\vec p\in H_1$. Moreover, no algorithm can achieve such guarantee using $n = o( |\X|^{5/2}/(\alpha^2 \epsilon^2))$ signals.
	\end{theorem}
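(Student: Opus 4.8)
The plan is to run everything through Theorem~\ref{thm:hyp_testing_RR}: observing the signals $y_1,\dots,y_n$ produced by standard randomized response on $n$ samples of $\vec p$ is exactly the same as observing $n$ i.i.d.\ draws from $\vec q:=\varphi(\vec p)=(1-\gamma)\vec u_\X+\gamma\vec p$, so the task is to distinguish $\vec q\in\varphi(H_0)$ from $\vec q\in\varphi(H_1)$. I would first isolate two facts that make this genuinely harder than the identity-testing corollary. First, $\varphi$ commutes with marginalization: if $\vec q^j,\vec p^j$ are the marginals on the $j$-th feature, then $\vec q^j=(1-\gamma)\vec u_{\X^j}+\gamma\vec p^j$, so the signals restricted to coordinate $j$ are i.i.d.\ from an affine image (with linear part $\gamma I$) of $\vec p^j$. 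Second, $\varphi(H_0)$ is \emph{not} a family of product distributions; rather $\vec p\in H_0$ iff $\vec q=(1-\gamma)\vec u_\X+\gamma\prod_j\vec p^j$, i.e.\ iff $\vec q$ equals an explicit, computable function $\Psi(\vec q^1,\dots,\vec q^d)$ of its own marginals.

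This suggests the classical ``learn the marginals, then test identity against their product,'' re-centered around $\Psi$. I would split the signals into two batches. From the first, for each $j$ I form the empirical marginal $\hat{\vec q}^j$, un-shift it to $\hat{\vec p}^j:=(\hat{\vec q}^j-(1-\gamma)\vec u_{\X^j})/\gamma$, and clip/renormalize so that $\hat{\vec p}^j$ is a distribution; since the inverse of $\varphi$ scales by $1/\gamma=\Theta(T/\epsilon)$, an estimation error $\eta_j$ of $\hat{\vec q}^j$ inflates to $\eta_j/\gamma$ on $\hat{\vec p}^j$, which is where the $T^2/\epsilon^2$ factor enters. I then set the reference $\bar{\vec q}:=\Psi(\hat{\vec q}^1,\dots,\hat{\vec q}^d)=(1-\gamma)\vec u_\X+\gamma\prod_j\hat{\vec p}^j$; the $(1-\gamma)\vec u_\X$ term keeps every coordinate of $\bar{\vec q}$ at least $\Omega(1/T)$, so $\bar{\vec q}$ is a bona fide distribution comparable to uniform. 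From the second batch I run a $\chi^2$-type identity tester (\mycite{ValiantV14, AcharyaDK15}) for ``$\vec q=\bar{\vec q}$'' versus ``$d_{\rm TV}(\vec q,\bar{\vec q})\ge\gamma\alpha/2$,'' costing $\tilde O(\sqrt{T}/(\gamma\alpha)^2)=\tilde O(T^{5/2}/(\epsilon^2\alpha^2))$ signals --- the $\frac{T^2}{\alpha^2\epsilon^2}\sqrt T$ term.

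For correctness: if $\vec p\in H_0$ then $\vec q-\bar{\vec q}=\gamma(\prod_j\vec p^j-\prod_j\hat{\vec p}^j)$, and a telescoping (hybrid) bound controls $\|\prod_j\vec p^j-\prod_j\hat{\vec p}^j\|$ by the $d$ marginal errors, so a large enough first batch drives $\chi^2(\vec q\,\|\,\bar{\vec q})$ below the tester's slack and it accepts; if $\vec p\in H_1$ then $\prod_j\hat{\vec p}^j$ is a product distribution, hence $d_{\rm TV}(\vec p,\prod_j\hat{\vec p}^j)\ge\alpha$, and by sub-additivity of TV distance over products this is $\ge\alpha-\sum_j d_{\rm TV}(\vec p^j,\hat{\vec p}^j)\ge\alpha/2$ once the marginals are half-decently learned, so $d_{\rm TV}(\vec q,\bar{\vec q})=\gamma\,d_{\rm TV}(\prod_j\vec p^j,\prod_j\hat{\vec p}^j)\ge\gamma\alpha/2$ and the tester rejects. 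The accuracy that the completeness step demands of the marginals, propagated through the telescoping bound, the $1/\gamma$ blow-up and the per-marginal learning rate $\tilde O(T^j/\eta_j^2)$, is what produces the remaining $\frac{T^2}{\alpha^2\epsilon^2}d^2(\max_j T^j)^2$ term.

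For the lower bound I would mirror the proof of Corollary~\ref{cor:identity_testing_RR}. Take the standard hard instance for non-private uniformity testing on a domain of size $T/2$ --- uniform versus a family $\vec u\pm\vec\Delta$ needing $\Omega(\sqrt{T}/\alpha^2)$ samples --- and lift it to $\X=[T/2]\times\{0,1\}$ by $\vec w(i,0)=\tfrac12\vec u(i)$, $\vec w(i,1)=\tfrac12 r(i)$, so that $\vec w$ is a product distribution exactly when $r=\vec u$ and is otherwise $\Omega(\alpha)$-far from every product (using that the product of $\vec w$'s marginals is within a constant factor in TV of the closest product, and that its TV distance from $\vec w$ equals $d_{\rm TV}(r,\vec u)$). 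Applying $\varphi$ keeps this a perturbation family with perturbations rescaled by $\gamma$, and rerunning the $\chi^2$/mutual-information lower-bound computation --- where the $\gamma^2$ shrinkage of $\sum_x\Delta(x)^2/(\text{base}(x))$ inflates the required sample size by $1/\gamma^2=\Theta(T^2/\epsilon^2)$ --- yields $\Omega(T^{5/2}/(\epsilon^2\alpha^2))$. I expect the real work to be in the completeness bookkeeping: converting a per-coordinate statement about the $\chi^2$-tester's slack into a per-marginal accuracy demand, while tracking the telescoping product error, the $1/\gamma$ amplification, and the clipping (so the $\hat{\vec p}^j$ stay genuine distributions and $\bar{\vec q}$ stays $\Omega(1/T)$-bounded-below); squeezing out $d^2(\max_j T^j)^2$ rather than a cruder polynomial is exactly the ``far from straightforward'' part the paper warns about.
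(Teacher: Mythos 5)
Your architecture is the same as the paper's: translate the problem through $\varphi$, learn the per-feature marginals of the signals, invert the affine map to get estimated marginals of $\vec p$, form their product, push it back through $\varphi$, and feed it to the $\chi^2$-tolerant/TV-far identity tester of Acharya et al~\mycite{AcharyaDK15} at distance $\gamma\alpha/2$; the lower bound via a $\gamma$-rescaled Paninski-style family inside $\varphi(H_1)$ is also the paper's argument. The soundness direction and the $\frac{T^2}{\alpha^2\epsilon^2}\sqrt{T}$ term are fine as you set them up.

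The gap is in the completeness bookkeeping, and it is quantitative but real: your plan is to control $\|\prod_j\vec p^j-\prod_j\hat{\vec p}^j\|$ by telescoping the marginal errors and then bound $d_{\chi^2}(\vec q\,\|\,\bar{\vec q})$ using the floor $\bar q(\bar x)\geq\rho=\Theta(1/T)$. That route gives $d_{\chi^2}(\vec q\,\|\,\bar{\vec q})\leq \gamma^2\, T\,\bigl\|\prod_j\vec p^j-\prod_j\hat{\vec p}^j\bigr\|_2^2$, and with per-marginal $L_1$ (or $L_2$) learning rates inflated by $1/\gamma$ this forces $n=\Omega\bigl(d^2\,T^3\max_j T^j/(\epsilon^2\alpha^2)\bigr)$ (or $d^2T^3/(\epsilon^2\alpha^2)$ in $L_2$) to push the divergence below $\gamma^2\alpha^2/500$ --- a factor of roughly $T/\max_j\{T^j\}$ worse than the stated $\frac{T^2}{\alpha^2\epsilon^2}d^2(\max_j\{T^j\})^2$ term, e.g.\ $T^3$ versus $T^2$ for $d$ binary features, where it would also swamp the $T^{2.5}$ term. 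The claimed bound is not reachable by $L_1$ telescoping plus a uniform denominator bound, because $\vec q$ and $\bar{\vec q}$ are not product distributions and the crude floor $\rho$ wastes the product structure. The paper instead writes $d_{\chi^2}(\varphi(\bar{\vec z}),\varphi(\bar{\vec p}))\leq\gamma\, d_{\chi^2}(\bar{\vec z},\bar{\vec p})$ (keeping $\gamma\bar z(\bar x)$, not $\rho$, in the denominator), applies the sub-multiplicativity of $1+\chi^2$ over products (Lemma 3.3.10 of~\cite{Reiss89}) to reduce to per-feature divergences, and learns each marginal in $\chi^2$-divergence at rate $O(T^j/n)$ via the add-1 estimator of~\cite{KamathOPS15}. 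That in turn requires the per-feature $\chi^2$ denominators $z^j(x^j)$ to be bounded away from zero, which is exactly what the ``small type'' preprocessing (Claim~\ref{clm:preprocessing_removes_small_types}) provides --- and it is that preprocessing, not the learning rate, that contributes the $(\max_j\{T^j\})^2$ factor. Your ``clip/renormalize'' gestures at this but does not supply the lower bound on the pre-image coordinates that the $\chi^2$ product lemma needs, and without switching to the $\chi^2$-per-marginal accounting your completeness analysis proves a strictly weaker theorem.
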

	\onlyconf{\vspace{-0.3cm}}
	\noindent Note that has to be at least two types per feature, so $d \leq \log_2(T)$, and if all $T^j$s are the same we have $(T^j)^2 \leq T^{\frac 2 d}$. Thus $T^{2.5}/(\alpha^2\epsilon^2)$ is the leading term in the above bound.
	
	\onlyconf{\noindent\textit{Proof.} }Theorem~\ref{thm:hyp_testing_RR} implies we are comparing $\varphi(H_0) = \{ \rho \vec 1_{\X} + \gamma (\vec p^1\times...\times \vec p^d) \}$ to $\varphi(H_1) = \{ \rho \vec 1_{\X} + \gamma \vec q :~ q\in H_1\}$. Note that $\varphi(H_0)$ is not a subset of product-distributions over~$\X$ but rather a convex combination (with publicly known weights) of the uniform distribution and $H_0$; so  we cannot run the independence tester of Acharya et al on the signals as a black-box. Luckily\onlyfull{~--- and similar to the identity testing case~---} it holds that $\varphi(H_1)$ is far from all distributions in $\varphi(H_0)$: for each $\vec q\in H_1$ and $\bar{\vec p} \in H_0$ we have $d_{\rm TV}(\varphi(\vec q), \varphi(\bar{\vec p})) \geq \gamma d_{\rm TV}(\vec q,\bar{\vec p}) \geq \gamma\alpha$. And so we leverage on the main result of Acharya et al (\mycite{AcharyaDK15}, Theorem~2): we first find a distribution $\rho \vec 1 + \gamma \bar{\vec z}\in \varphi(H_0)$ such that if the signals were generated by some $\rho \vec 1_{\X} + \gamma \bar{\vec p}  \in \varphi(H_0)$ then $d_{\chi^2}(\varphi(\bar{\vec z}), \varphi(\bar{\vec p})) \leq \gamma^2\alpha^2/500$, and then test if indeed the signals are likely to be generated by a distribution close to $\varphi(\bar{\vec z})$ using Acharya et al's algorithm. \onlyfull{Again, we follow the pattern of~\cite{AcharyaDK15} --- we construct $\bar{\vec z}$ as a product distribution $\bar{\vec z} = \vec z^1 \times \vec z^2 \times...\times \vec z^d$ where $\vec z^j$ is devised by projecting each signal onto its $j$th feature. Note that the $j$th-marginal of the distribution of the signals is of the form $T\rho \vec u_{\X^j} + \gamma\vec p^j$ (again, $\vec u_{\X^j}$ denotes the uniform distribution over $\X^j$). Therefore, for each $j$ we derive $\vec z^j$ by first approximating the distribution of the $j$th marginal of the signals via some $\tilde{\vec z}^j$, then we apply the inverse mapping from Corollary~\ref{cor:close_form_solution} to $\tilde{\vec z}^j$ so to get the resulting distribution $\vec z^j$ which we show to approximate the true $\vec p^j$.} We now give our procedure for finding the product-distribution $\bar{\vec z}$.
	
	Per feature $j$, given the $j$th feature of the signals $y^j_1,...,y^j_n$ where each $x^j\in \X^j$ appears $n_{x^j}$ times, our procedure for finding $\vec z^j$ is as follows.
	\onlyconf{\vspace{-0.3cm}}
	\begin{enumerate}
		\setlength\itemsep{0em}
		\setcounter{enumi}{-1}
		\item (Preprocessing:) Denote $\tau = {\alpha}/({10 d \cdot T^j})$. We call any type $x^j$ where $\tfrac{n_{x^j}}n \leq \tfrac{1-\gamma}{T^j} + \gamma\tau$ as \emph{small} and otherwise we say type $x^j$ is \emph{large}. Ignore all small types, and learn $\vec z^j$ only over large types. (For brevity, we refer to $n$ as the number of signals on large types and $T^j$ as the number of large types.) 
		\item Set the distribution $\tilde{\vec z}^j$ as the ``add-1'' estimator of Kamath et al~\mycite{KamathOPS15} for the signals: $\tilde{\vec z}^j(x^j) = \frac {1+n_{x^j}} {T^j + n}$.
		\item Compute $\vec z^j = \tfrac 1 \gamma \left(I - \tfrac {1-\gamma}{T^j} 1_{\X^j}\right)\tilde{\vec z}^j$.
	\end{enumerate}
	\vspace{-0.4cm}
	Once $\vec z^j$ is found for each feature $j$, set $\bar{\vec z} = \vec z^1 \times... \times \vec z^d$ run the test of Acharya et al~\mycite{AcharyaDK15} (Theorem 2) with $\varphi(\bar{\vec z})$  looking only at the large types from each feature, setting the distance parameter to $\tfrac {\alpha\gamma} 2$ and confidence $\tfrac 1 9$, to decide whether to accept or reject.

	In order to successfully apply the Acharya et al's test, a few conditions need to hold. First, the provided distribution $\varphi(\bar{\vec z})$ should be close to $\varphi(H_0)$. This however hold trivially, as $\bar{\vec z}$ is a product-distribution. Secondly, we need that $\varphi(\bar{\vec z})$ and $\varphi(\bar{\vec p})$ to be close in $\chi^2$-divergence, as we argue next.
	
	\DeclareRobustCommand{\lemZCloseToP}{Suppose that $n$, the number of signals, is at least $\Omega( \frac{d^2}{\alpha^2\gamma^2} \max_{j}\{T^j\} )$. Then the above procedure creates distributions $\vec z^j$ such that the product distribution $\bar{\vec z} = \vec z^1 \times \vec z^2 \times ... \times \vec z^d$ satisfies the following property. If the signals $y_1,...,y_n$ were generated by $\varphi(\bar{\vec p}
		)$ for some product-distribution $\bar{\vec p} = \vec p^1 \times ... \times \vec p^d$, then w.p. $\geq 8/9$ we have that $d_{\chi^2}(\varphi(\bar{\vec z}), \varphi(\bar{\vec p})) \leq \gamma^2\alpha^2 / 1000$.}
	\begin{lemma}
		\label{lem:z-close-to-p}
		\lemZCloseToP
	\end{lemma}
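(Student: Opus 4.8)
The plan is to bound the $\chi^2$-divergence feature-by-feature and then combine. The key identity is that for product distributions, the $\chi^2$-divergence tensorizes: $1 + d_{\chi^2}(\varphi(\bar{\vec z}),\varphi(\bar{\vec p})) = \prod_j \big(1 + d_{\chi^2}(\varphi(\vec z^j),\varphi(\vec p^j))\big)$, where here $\varphi$ on the $j$-th feature is the affine map $\vec w \mapsto T^j\rho\vec u_{\X^j} + \gamma\vec w$ (since the marginal of the signal distribution on feature $j$ is exactly this). So it suffices to show that for each $j$, $d_{\chi^2}(\varphi(\vec z^j),\varphi(\vec p^j)) \leq c\,\gamma^2\alpha^2/d^2$ for a small enough constant $c$; then $\prod_j(1 + c\gamma^2\alpha^2/d^2) \leq \exp(c\gamma^2\alpha^2/d) \leq 1 + \gamma^2\alpha^2/1000$ for suitable $c$, using $\gamma\alpha < 1$ and $d\geq 1$. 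First I would set up this tensorization carefully (it is the reason the procedure builds $\bar{\vec z}$ as a product), then reduce to the per-feature claim.

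For a single feature $j$, write $\vec m^j = \varphi(\vec p^j) = T^j\rho\vec u_{\X^j} + \gamma\vec p^j$ for the true marginal signal distribution; note $\varphi(\vec z^j) = T^j\rho\vec u_{\X^j} + \gamma\vec z^j$, and Step 2 is exactly the inverse map, so $\varphi(\vec z^j) = \tilde{\vec z}^j$ — i.e. $\varphi(\vec z^j)$ is literally the add-1 (Laplace/Krichevsky–Trofimov-style) estimator of Kamath et al.~\cite{KamathOPS15} applied to the $n$ signal-samples on feature $j$, whose true distribution is $\vec m^j$. Hence $d_{\chi^2}(\varphi(\vec z^j),\varphi(\vec p^j)) = d_{\chi^2}(\tilde{\vec z}^j, \vec m^j)$, and I can invoke the known $\chi^2$-risk bound for the add-1 estimator: $\E[d_{\chi^2}(\tilde{\vec z}^j,\vec m^j)]$ (with $\vec m^j$ in the denominator) is $O(T^j/n)$ — this is precisely the property that makes the add-1 estimator the right choice here. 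Combined with the preprocessing Step 0, which guarantees every surviving (large) coordinate of $\vec m^j$ has mass $\geq (1-\gamma)/T^j + \gamma\tau \geq \gamma\tau = \gamma\alpha/(10dT^j)$, the denominator $m^j(x^j)$ in the $\chi^2$-divergence is bounded below, so one also controls the variance / gets a high-probability statement via Markov. Taking $n = \Omega(d^2 \max_j T^j/(\alpha^2\gamma^2))$ with a large enough constant makes $d_{\chi^2}(\tilde{\vec z}^j,\vec m^j) \leq c\gamma^2\alpha^2/d^2$ with probability $\geq 1 - 1/(9d)$ per feature, and a union bound over the $d$ features gives the overall $\geq 8/9$.

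The two things that need care. First, the direction of the $\chi^2$-divergence: $d_{\chi^2}(\vec a,\vec b)$ puts $\vec a$ in the denominator, so I must check which of $\varphi(\vec z^j)$ or $\varphi(\vec p^j)$ plays which role in the statement and make sure the estimator bound I cite is for that orientation — the add-1 estimator bounds $\sum_x (\hat p(x) - m(x))^2/\hat p(x)$ or $/m(x)$ depending on the reference; since Step 1 adds $1$ to every count, $\varphi(\vec z^j)$ is bounded away from $0$ on large types, which is what lets me pass between the two orientations at the cost of a constant. Second, the add-1 estimator's standard guarantee is on $n$ i.i.d.\ samples from the true underlying distribution; here the "true distribution" on feature $j$ is the signal marginal $\vec m^j$, and the $n$ signal-samples are indeed i.i.d.\ from $\vec m^j$ (signals are garbled i.i.d.), so the hypothesis is met — but the preprocessing redefines $n$ and $T^j$ to count only large types, so I should note that conditioning on the large types only shifts $\vec m^j$ by a bounded renormalization factor (the small types carry at most $\alpha/(10d)$ total mass), which is absorbed into constants. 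The main obstacle is making the per-feature $\chi^2$ bound hold with the claimed $n$ while keeping the denominator lower bound from Step 0 tight enough that no extra factor of $T^j$ sneaks in; I expect this to go through exactly because $\tau$ was chosen as $\alpha/(10dT^j)$ so that $\gamma\tau$ matches the $1/(dT^j)$-scale needed in the denominator.
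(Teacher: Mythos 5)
Your plan hinges on the identity
\[
1+d_{\chi^2}\bigl(\varphi(\bar{\vec z}),\varphi(\bar{\vec p})\bigr)=\prod_j\bigl(1+d_{\chi^2}(\varphi(\vec z^j),\varphi(\vec p^j))\bigr),
\]
but this is false, and the failure is exactly the point the paper's proof is built to handle: $\varphi(\bar{\vec z})=\rho\vec 1_{\X}+\gamma\bar{\vec z}$ is \emph{not} a product distribution over the features. The map $\varphi$ mixes in a uniform component over the full domain $\X$, and the mixture of a uniform distribution and a product is not a product. Concretely, for $\bar x=(x^1,\dots,x^d)$ one has $\varphi(\bar{\vec z})(\bar x)=\rho+\gamma\prod_j z^j(x^j)$, whereas $\prod_j\varphi(\vec z^j)(x^j)=\prod_j\bigl(\tfrac{1-\gamma}{T^j}+\gamma z^j(x^j)\bigr)$, and expanding the latter yields $2^d$ cross terms that do not appear in the former. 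So the $\chi^2$-tensorization (which requires both arguments to be genuine product measures) simply does not apply to $\varphi(\bar{\vec z})$ and $\varphi(\bar{\vec p})$, and your per-feature bounds cannot be combined in the way you propose.

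The paper's proof avoids this by tensorizing at the level where the product structure actually holds. It first uses Kamath et al.\ at the $\varphi$-level per feature (your observation that $\varphi(\vec z^j)=\tilde{\vec z}^j$ is correct and is exactly what makes that step clean), but then applies the \emph{inverse} affine map to pass from $d_{\chi^2}(\tilde{\vec z}^j,\varphi(\vec p^j))$ to $d_{\chi^2}(\vec z^j,\vec p^j)$. This step costs a multiplicative factor of roughly $\tfrac{30d}{\alpha\gamma^2}$, which is where the preprocessing is genuinely load-bearing: the $\chi^2$-denominator becomes $\tilde z^j(x^j)-\tfrac{1-\gamma}{T^j}$, and without the ``large type'' threshold $\gamma\tau$ this can be arbitrarily close to $0$. (Your remark that preprocessing only shifts things by ``a bounded renormalization factor'' misses this; in your route preprocessing plays no role, which is a red flag.) Then the paper tensorizes $d_{\chi^2}(\bar{\vec z},\bar{\vec p})$ across features --- legitimate, since $\bar{\vec z}=\vec z^1\times\cdots\times\vec z^d$ and $\bar{\vec p}=\vec p^1\times\cdots\times\vec p^d$ are products --- and finally applies the forward map $\varphi$, which only contracts $\chi^2$-divergence (the denominators become larger, the numerators scale by $\gamma^2$, giving $d_{\chi^2}(\varphi(\bar{\vec z}),\varphi(\bar{\vec p}))\le\gamma\,d_{\chi^2}(\bar{\vec z},\bar{\vec p})$). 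This detour through the un-$\varphi$-ed distributions is the missing idea in your proposal, not a stylistic choice.
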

	\onlyconf{\vspace{-0.2cm}}
	We table the proof of Lemma~\ref{lem:z-close-to-p} \toggle{to Section~\ref{apx_sec:proofs_RR} in the \myAppendix}{for now}. Next, either completeness or soundness must happen: either the signals were taken from randomized-response on a product distribution\onlyfull{ (were generated using some $\varphi(\bar{\vec p}) \in \varphi(H_0)$)}, or they were generated by a distribution $\gamma\alpha/2$-far  from $\varphi(H_0)$. If no type of any feature was deemed as ``small''\onlyfull{ in our preprocessing stage}, this condition clearly holds; but we need to argue this continues to hold even when we run our tester on a strict subset of $\X$ composed only of large types in each feature. Completeness is straight-forward: since we remove types feature by feature, the types now come from a product distribution $\bar{\vec p}_{\rm large}=\vec p^1_{\rm large} \times ... \times \vec p^d_{\rm large}$ where each $\vec p^j_{\rm large}$ is a restriction of $\vec p^j$ to the large types of feature $j$\onlyfull{, and Lemma~\ref{lem:z-close-to-p} assures us that $\varphi(\bar{\vec z})$ and $\varphi(\bar{\vec p}_{\rm large})$ are close in $\chi^2$-divergence}. Soundness however is more intricate. We partition $\X$ into two subsets: ${\rm AllLarge} = \{(x^1,x^2,...,x^d)\in \X : ~ \forall j,~x^j\textrm{ is large}\}$ and ${\rm Rest}= \X \setminus {\rm AllLarge}$; and break $\vec q$ into $\vec q = \eta \vec q_{\rm Rest} + (1-\eta) \vec q_{\rm AllLarge}$, with $\eta = \Pr_{\vec q}[{\rm Rest}]$. \onlyfull{Using the Hoeffding bound,} Claim~\ref{clm:preprocessing_removes_small_types} \onlyconf{(proof deferred to the \myAppendix)} argues that $\eta < \tfrac \alpha 2$. Therefore, $d_{\rm TV}(\vec q, \vec q_{\rm AllLarge}) \leq \tfrac \alpha 2$, implying that $d_{\rm TV}(\varphi(\vec q_{\rm AllLarge}), \varphi(H_0)) > \alpha\cdot \gamma - \tfrac {\alpha\gamma}2 = \tfrac {\alpha\gamma}2$.	
	\DeclareRobustCommand{\clmPreprocessingRemovesSmallTypes}{Assume the underlying distribution of the samples is $\vec q$ and that the number of signals is at least $n=\Omega(\frac{d^2(\max_j T^j)^2} {\alpha^2\gamma^2} \log(d \max_j T^j) )$. Then w.p. $\geq 8/9$ our preprocessing step marks certain types each feature as ``small'' such that the probability (under $\vec q$) of sampling a type $(x^1,x^2,...,x^d)$ such that $\exists j, x^j \textrm{ is small}$ is $\leq \alpha/2$.}
	\begin{claim}
		\label{clm:preprocessing_removes_small_types}
		 \clmPreprocessingRemovesSmallTypes
	\end{claim}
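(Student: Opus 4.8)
The plan is to show that every feature-value carrying non-negligible $\vec q$-mass survives the preprocessing as \emph{large}, so that the values marked \emph{small} in feature $j$ carry total $\vec q$-mass at most $O(\alpha/d)$, and then union-bound over the $d$ features. The only randomness is in the $n$ signals; once they are fixed, which types are ``small'' is determined, and so is the quantity $\Pr_{\vec x\sim\vec q}[\exists j:~x^j\text{ is small}]$, so the ``w.p.\ $\ge 8/9$'' refers to the draw of the signals.

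First I would record the single-feature marginal of the signal distribution. The signals are i.i.d.\ from $\varphi(\vec q) = T\rho\,\vec u_\X + \gamma\vec q$, and since $T\rho = 1-\gamma$, the $j$th marginal of the signal distribution is $(1-\gamma)\vec u_{\X^j} + \gamma\vec q^j$, where $\vec q^j$ is the $j$th marginal of $\vec q$. Hence, for each $x^j\in\X^j$, the signal-count $n_{x^j}$ is a sum of $n$ i.i.d.\ $\{0,1\}$-variables with $\E[n_{x^j}/n] = \tfrac{1-\gamma}{T^j} + \gamma\,q^j(x^j)$. Next comes the concentration step: fix a feature $j$ and a value $x^j$ with $q^j(x^j) > 2\tau$ (recall $\tau = \alpha/(10dT^j)$). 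Then $\E[n_{x^j}/n] > \tfrac{1-\gamma}{T^j} + 2\gamma\tau$, so marking $x^j$ as ``small'' would require $n_{x^j}/n$ to deviate below its mean by at least $\gamma\tau$; by Hoeffding's inequality this has probability at most $\exp(-2n\gamma^2\tau^2)$. Since $\tau \ge \alpha/(10d\max_j T^j)$, the hypothesis $n = \Omega\!\big(\tfrac{d^2(\max_j T^j)^2}{\alpha^2\gamma^2}\log(d\max_j T^j)\big)$ forces this to be at most $\tfrac{1}{9\,d\max_j T^j}$, and a union bound over the at most $\sum_j T^j \le d\max_j T^j$ feature-values shows that with probability $\ge 8/9$ every value with $q^j(x^j) > 2\tau$ is marked large.

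Finally I condition on that event. Then for each feature $j$, every value marked small has $q^j(x^j)\le 2\tau$, so $\Pr_{\vec x\sim\vec q}[x^j\text{ is small}] = \sum_{x^j\text{ small}} q^j(x^j) \le T^j\cdot 2\tau = \tfrac{\alpha}{5d}$, and a union bound over the $d$ features gives $\Pr_{\vec x\sim\vec q}[\exists j:~x^j\text{ is small}] \le d\cdot\tfrac{\alpha}{5d} = \tfrac{\alpha}{5} \le \tfrac{\alpha}{2}$. The only real delicacy is bookkeeping: splitting the $1/9$ failure budget across all $\sum_j T^j$ feature-values (which is what produces the $\log(d\max_j T^j)$ factor in $n$), and matching the Hoeffding slack $\gamma\tau$ to the exact gap between the ``small'' cutoff $\tfrac{1-\gamma}{T^j}+\gamma\tau$ and the mean $\tfrac{1-\gamma}{T^j}+\gamma q^j(x^j)$; everything else is a direct computation.
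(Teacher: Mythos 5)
Your proof is correct and follows essentially the same route as the paper's: define a per-feature mass threshold of order $\alpha/(d\,T^j)$ (you use $2\tau$, the paper uses $\alpha/(3dT^j)$ and calls such types ``infrequent''), apply Hoeffding plus a union bound over all $\sum_j T^j$ feature-values to show every above-threshold value survives as large, and then bound the residual mass of small values by $O(\alpha/d)$ per feature and union-bound over features. The only differences are in the constants.
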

	\onlyconf{\vspace{-0.2cm}}
	So, given that both Lemma~\ref{lem:z-close-to-p} and Claim~\ref{clm:preprocessing_removes_small_types} hold, we can use the test of Acharya et al, which requires a sample of size $n = \Omega ( \sqrt{T}/(\alpha\gamma)^2  )$. Recall that $\epsilon<1$ so $\gamma = \Theta(\epsilon/T)$, and we get that the sample size required for the last test is $n = \Omega (\frac {T^{2.5}}{\alpha^2\epsilon^2})$. Moreover, for this last part, the lower bound in Acharya et al~\mycite{AcharyaDK15} still holds (for the same reason it holds in the identity-testing case): the lower bound is derived from the counter example of testing whether the signals were generated from the uniform distribution (which clearly lies in $\varphi(H_0)$) or any distribution from a collection of perturbations which all belong to $\varphi(H_1)$ (See~\cite{Paninski08} for more details). Each of distribution is thus $\gamma\alpha$-far from $\varphi(H_0)$ and so any tester for this particular construction requires $\sqrt{T}/(\alpha\gamma)^2$-many samples. \toggle{This proves both the upper- and lower-bounds of Theorem~\ref{thm:independence_testing_RR}.\hfill $\qed$}{Therefore, once we provide the proofs of Lemma~\ref{lem:z-close-to-p} and Claim~\ref{clm:preprocessing_removes_small_types} our proof of Theorem~\ref{thm:independence_testing_RR} is done.}

	\DeclareRobustCommand{\proofLemZCloseToP}{
	\begin{proof} \onlyfull{[Proof of Lemma~\ref{lem:z-close-to-p}.]}
		Fix feature $j$. Let $\vec p^j$ be the marginal distribution of the distribution $\vec p$ which generated the samples (whether $\vec p$ belongs to $H_0$ or $H_1$) on the $j$th feature. It follows that projecting the signals onto their $j$th feature yields $y_1^j, y_2^j,..., y_n^j$ which were generated using $\varphi(\vec p^j) = (1-\gamma)\vec u_{\X_j} + \gamma \vec p^j$. Kamath et al~\mycite{KamathOPS15} have shown that w.p. $\geq 8/9$ it holds that $d_{\chi^2}(\tilde{\vec z}^j, \varphi(\vec p^j)) \leq \frac{9(T^j-1)}{n+1}$. We now apply the linear transformation $\vec z^j = \tfrac 1 \gamma \left(I - \tfrac {1-\gamma}{T^j} 1_{\X^j}\right)\tilde{\vec z}^j$ and similarly note that $\vec p^j = \tfrac 1 \gamma \left(I - \tfrac {1-\gamma}{T^j} 1_{\X^j}\right)\varphi(\vec p^j)$. We note that $\vec z^j$ is a valid probability distribution: for each $x^j\in \X^j$ we have that $\tilde{z}^j(x^j) > \tfrac{1-\gamma}{T^j} + \gamma\tau$ hence $z^j(x^j)> \tau >0$; and since $\sum_{x^j} \tilde z^j(x^j) =1$ then $\sum_{x^j} z^j(x^j) = \tfrac 1 \gamma \sum_{x^j} \tilde z^j(x^j) - \tfrac {1-\gamma}{T^j} = \tfrac 1 \gamma ( 1-(1-\gamma) )=1$. We thus bound the $\chi^2$-divergence between $\vec z^j$ and $\vec p^j$:
		\toggle{
			\begin{align*}
			&d_{\chi^2}(\vec z^j, \vec p^j) = \sum_{x^j} \frac{ \tfrac 1 {\gamma^2}\left( \tilde z^j(x^j) - \tfrac {1-\gamma}{T^j} - \varphi(p^j(x^j)) + \tfrac {1-\gamma}{T^j}  \right)^2  } {\tfrac 1 \gamma\left(\tilde z^j(x^j) - \tfrac {1-\gamma}{T^j}\right)}
			\cr &~ = \frac 1 \gamma \sum_{x^j} \frac{ (\tilde z^j(x^j) -\varphi(p^j(x^j)))^2 } {\tilde z^j(x^j) - \tfrac {1-\gamma}{T^j}} \cr
			&~\stackrel{(\ast)}\leq \frac {30d}{\alpha\gamma^2} \sum_{x^j}\frac{ (\tilde z^j(x^j) -\varphi(p^j(x^j)))^2 } {\tilde z^j(x^j)} \leq \frac {270d}{\alpha\gamma^2}\cdot \frac{T^j-1}{n+1}
			\end{align*}}{
		\begin{align*}
		d_{\chi^2}(\vec z^j, \vec p^j) &= \sum_{x^j} \frac{ \tfrac 1 {\gamma^2}\left( \tilde z^j(x^j) - \tfrac {1-\gamma}{T^j} - \varphi(p^j(x^j)) + \tfrac {1-\gamma}{T^j}  \right)^2  } {\tfrac 1 \gamma\left(\tilde z^j(x^j) - \tfrac {1-\gamma}{T^j}\right)} = \frac 1 \gamma \sum_{x^j} \frac{ (\tilde z^j(x^j) -\varphi(p^j(x^j)))^2 } {\tilde z^j(x^j) - \tfrac {1-\gamma}{T^j}} \cr
		&\stackrel{(\ast)}\leq \frac {30d}{\alpha\gamma^2} \sum_{x^j}\frac{ (\tilde z^j(x^j) -\varphi(p^j(x^j)))^2 } {\tilde z^j(x^j)} \leq \frac {270d}{\alpha\gamma^2}\cdot \frac{T^j-1}{n+1}
		\end{align*}}
		where the inequality in $(\ast)$ follows from the fact that $\tilde z^j(x^j) - \tfrac {1-\gamma}{T^j} > \gamma\tau = \tfrac {\alpha\gamma}{10d} \cdot \tfrac 1 {T^j}$ whereas  $\tilde z^j(x^j) \leq \tfrac {1-\gamma}{T^j} + \gamma \cdot 1 \leq \tfrac{ 1 + \gamma T^j  } {T^j} \leq \tfrac {1+\gamma T}{T^j} \leq \tfrac{1+e^\epsilon-1} {T^j} \leq \tfrac 3 {T^j}$ as $\epsilon<1$.
		
		Next, we use the product lemma from~\cite{Reiss89} (Lemma 3.3.10) (similar argument was made in~\cite{AcharyaDK15}). Assuming that $\sum_{j} d_{\chi^2}(\vec z^j, \vec p^j)\leq 1$ we now have that
		\toggle{
			\begin{align*} &d_{\chi^2}(\bar{\vec z}, \bar{\vec p}) \leq \exp \left(\sum_j d_{\chi^2}(\vec z^j, \vec p^j) \right)-1 
			\leq 2\sum_j d_{\chi^2}(\vec z^j, \vec p^j) \cr &~\leq \frac{600d}{\alpha\gamma^2}\cdot \frac{\sum_j (T^j-1)}{n+1} \leq \frac{600d^2}{\alpha\gamma^2}\cdot \frac{\max_j \{T^j\}}{n+1} \end{align*}}{
		\[ d_{\chi^2}(\bar{\vec z}, \bar{\vec p}) \leq \exp \left(\sum_j d_{\chi^2}(\vec z^j, \vec p^j) \right)-1 \leq 2\sum_j d_{\chi^2}(\vec z^j, \vec p^j) \leq \frac{600d}{\alpha\gamma^2}\cdot \frac{\sum_j (T^j-1)}{n+1} \leq \frac{600d^2}{\alpha\gamma^2}\cdot \frac{\max_j \{T^j\}}{n+1} \] }
		
		Finally, we can obtain the bound we are after:
		\toggle{
			\begin{align*}
			&d_{\chi^2}(\varphi(\bar{\vec z}), \varphi(\bar{\vec p})) = \sum_{\bar x\in \X} \frac{ (\rho + \gamma\bar z(\bar x) - \rho - \gamma\bar p(\bar x))^2  } {\rho + \gamma \bar z(\bar x)} 
			\cr &~~~\leq \gamma^2 \sum_{\bar x\in \X} \frac{ \bar z(\bar x) - \bar p(\bar x))^2  } {\gamma \bar z(\bar x)}  \leq \frac{600d^2}{\alpha\gamma}\cdot \frac{\max_j \{T^j\}}{n+1}
			\end{align*} }{
		\begin{align*}
		d_{\chi^2}(\varphi(\bar{\vec z}), \varphi(\bar{\vec p})) &= \sum_{\bar x\in \X} \frac{ (\rho + \gamma\bar z(\bar x) - \rho - \gamma\bar p(\bar x))^2  } {\rho + \gamma \bar z(\bar x)} \leq \gamma^2 \sum_{\bar x\in \X} \frac{ \bar z(\bar x) - \bar p(\bar x))^2  } {\gamma \bar z(\bar x)} = \gamma d_{\chi^2}(\bar{\vec z}, \bar{\vec p}) \leq \frac{600d^2}{\alpha\gamma}\cdot \frac{\max_j \{T^j\}}{n+1}
		\end{align*}}
		setting $n = \Omega( \frac{d^2}{\alpha^2\gamma^2} \max\limits_{j} \{T^j\} )$ gives the required bound of $d_{\chi^2}(\varphi(\bar{\vec z}), \varphi(\bar{\vec p})) \leq \alpha\gamma/1000$.
	\end{proof}}
	
	\DeclareRobustCommand{\proofClmPreprocessingRemovesSmallTypes}{
	\begin{proof}\onlyfull{[Proof of Claim~\ref{clm:preprocessing_removes_small_types}.]}
		Fix $\vec q$, the distribution that generated the samples. Thus, the signals were generated using the distribution $\varphi(\vec q)$. Fix a feature $j$ and look at the marginal of $\vec q$ with regards to the $j$th feature, $\vec q^j$. We call a type $x^j\in \X^j$ \emph{infrequent} if $q^j(x^j) \leq \tfrac {\alpha}{3d}\cdot \tfrac 1 {T^j}$. We now argue that w.p. $\geq 8/9$ all types deemed as small by our preprocessing step (for all $d$ features) are also infrequent. This follows immediately from the Hoeffding bound: If $x^j$ is frequent then $\varphi(q^j(x^j)) = \tfrac {1-\gamma}{T^j} + \gamma q^j(x^j) \geq \tfrac {1-\gamma}{T^j} +\tfrac {\gamma\alpha}{3d}\cdot \tfrac 1 {T^j}$, but as $x^j$ is deemed small the difference between $\tfrac{n_{x^j}}{n}$ and its expected value is at least $\tfrac {\alpha \gamma} {5d}\tfrac 1 {T^j}$, so Hoeffding assures us this event happens w.p. $\leq \exp(-2n \tfrac {\alpha^2 \gamma^2} {25d^2}\tfrac 1 {(T^j)^2})$. Applying the union bound, the probability that any of $\sum_j T^j$ types that might be deemed as small is actually frequent is thus upper bounded by $\left(\sum_j T^j \right)\cdot \exp(-2n \tfrac {\alpha^2 \gamma^2} {25d^2}\tfrac 1 {(T^j)^2})$. As $n = \Omega(\frac{d^2(\max_j T^j)^2} {\alpha^2\gamma^2} \log(d \max_j T^j)  )$ we infer that this bad event happens with probability $\leq 1/9$.
		
		Now that we've established that all infrequent types are also deemed small in our pre-processing, we bound $\Pr_{\vec q}[ (x^1,..,x^d):~ \exists j~{s.t.}~x^j\textrm{ is small}]$ using the union bound:
		\toggle{\begin{align*}
			&\sum_j \Pr_{\vec q^j} [ x^j :~ x^j\textrm{ is small} ] \leq \sum_j \Pr_{\vec q^j} [ x^j :~ x^j\textrm{ is infrequent} ] \cr
			&~~~=\sum_j \sum_{x^j~\rm{infrequent}} \Pr_{\vec q^j}[ x^j ] = \sum_j \sum_{x^j~\rm{infrequent}}\tfrac {\alpha}{3d}\cdot \tfrac 1 {T^j}
			\cr &~~~\leq \sum_j\tfrac {\alpha}{3d} = \tfrac \alpha 3 \hspace{5cm}\qedhere
			\end{align*}}{
		\begin{align*}
			\sum_j \Pr_{\vec q^j} [ x^j :~ x^j\textrm{ is small} ] &\leq \sum_j \Pr_{\vec q^j} [ x^j :~ x^j\textrm{ is infrequent} ] \cr
			&= \sum_j \sum_{x^j~\rm{infrequent}} \Pr_{\vec q^j}[ x^j ] = \sum_j \sum_{x^j~\rm{infrequent}}\tfrac {\alpha}{3d}\cdot \tfrac 1 {T^j} \leq \sum_j\tfrac {\alpha}{3d} = \tfrac \alpha 3 ~~~ \qedhere
		\end{align*}}
	\end{proof}}

	
	\section{Non-Symmetric Signaling Schemes}
	\label{sec:non-symmetric-RR}
	
	Let us recall the non-symmetric signaling schemes in~\cite{BassilyS15, BassilyNST17}. Each user, with true type $x\in\X$, is assigned her own mapping (the mapping is broadcast and publicly known) $f_i:\X \to \S$.  This sets her inherent signal to  $f_i(x)$, and then she runs standard (symmetric) randomized response \emph{on the signals}, making the probability of sending her true signal $f_i(x)$ to be  $e^\epsilon$-times greater than any other signal~$s~\neq~f_i(x)$.
	
	In fact, let us allow an even broader look. Each user is given a mapping $f_i:\X \to \S$, and denoting $T=|\X|$ and $S=|\S|$, we identify this mapping with a $(S\times T)$-matrix $G_i$. The column $\vec g_i^x= G_i \vec e_x$ is the probability distribution that a user of type $x$ is going to use to pick which signal she broadcasts. (And so the guarantee of differential privacy is that for any signal $s\in \S$ and any two types $x\neq x'$ we have that $g_i^x(s) \leq e^\epsilon g_i^{x'}(s)$.) Therefore, all entries in $G_i$ are non-negative and $\|G_i\|_1=1$ for all $i$s. 
	
	Similarly to the symmetric case, we first exhibit the feasibility of finding a maximum-likelihood hypothesis given the signals from the non-symmetric scheme. Since we view which signal in $\S$ was sent, our likelihood mainly depends on the \emph{row} vectors $\vec g_i^s$. \onlyconf{We prove the following theorem, proof deferred to Section~\ref{apx_sec:proofs_nonsymmetric} in the \myAppendix.}
	
	\DeclareRobustCommand{\thmMaxLikelihoodNonSymmetric}{For any convex set $H$\onlyfull{ of hypotheses}, the problem of finding the max-likelihood $\vec p\in H$ generating the observed non-symmetric signals $(y_1,..,y_n)$ is poly-time solvable.}
	\begin{theorem}
		\label{thm:max_likelihood_non_symmetric}
		\thmMaxLikelihoodNonSymmetric
	\end{theorem}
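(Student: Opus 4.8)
The plan is to reprise the proof of Theorem~\ref{thm:max_likelihood_symmetric} almost verbatim, the only change being that the type-to-signal transition now depends on the user index, so we sum over users rather than over a histogram of signals. First I would fix a candidate hypothesis $\vec p\in H$ and note that, since user $i$ of type $x$ emits signal $s$ with probability $G_i(s,x)$, the probability that user $i$ broadcasts the signal $y_i$ we actually observed is $\sum_{x\in\X}G_i(y_i,x)p(x)=(\vec g_i^{y_i})\T\vec p$. The users toss their coins independently, so the likelihood of the whole transcript factorizes as $L(\vec p;y_1,\dots,y_n)=\prod_{i}(\vec g_i^{y_i})\T\vec p=\exp\!\big(\sum_i\log((\vec g_i^{y_i})\T\vec p)\big)$, and maximizing it over $H$ is equivalent to minimizing the normalized log-loss $f(\vec p)=-\tfrac1n\sum_i\log((\vec g_i^{y_i})\T\vec p)$ over $H$. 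Unlike the symmetric case one cannot collapse the transcript into a histogram over $\S$, because different users use different matrices $G_i$; but one may still group together users sharing the same $G_i$, and in any event a sum of $n$ terms has polynomial size.

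Next I would differentiate. The gradient is $\nabla f=-\tfrac1n\sum_i\frac{\vec g_i^{y_i}}{(\vec g_i^{y_i})\T\vec p}$, and the Hessian is the $(T\times T)$-matrix $\nabla^2 f=\tfrac1n\sum_i\frac{\vec g_i^{y_i}(\vec g_i^{y_i})\T}{((\vec g_i^{y_i})\T\vec p)^2}$. Each summand is a rank-one outer product $\vec g_i^{y_i}(\vec g_i^{y_i})\T\succeq 0$ rescaled by a strictly positive scalar, so $\nabla^2 f\succeq 0$ and $f$ is convex; minimizing a convex function over a convex set is poly-time solvable (e.g.\ projected gradient descent~\cite{Zinkevich03}), which is exactly the claim.

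The only point that merits care --- and the (mild) main obstacle --- is pinning down the domain of $f$: the term $\log((\vec g_i^{y_i})\T\vec p)$ is defined only where $(\vec g_i^{y_i})\T\vec p>0$. Because $y_i$ was in fact observed, the row $\vec g_i^{y_i}$ has at least one positive entry, so $(\vec g_i^{y_i})\T\vec p=\sum_x g_i^{y_i}(x)p(x)\ge 0$ for any distribution $\vec p$, and it is strictly positive unless $\vec p$ is supported on the zero entries of that row. Hence $f$ is convex on the relatively-open convex set $\{\vec p\in H:\forall i,\ (\vec g_i^{y_i})\T\vec p>0\}$, over which the optimization is well posed; if this set is empty the maximum likelihood equals $0$ and there is nothing to optimize. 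When $G_i$ is $\epsilon$-differentially private the subtlety evaporates entirely: in each not-identically-zero row all entries are strictly positive (their pairwise ratios are bounded by $e^{\epsilon}$), so $(\vec g_i^{y_i})\T\vec p\ge e^{-\epsilon}\max_x g_i^{y_i}(x)>0$ for every distribution $\vec p$, and $f$ is defined and convex on all of $H$.
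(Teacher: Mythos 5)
Your proposal is correct and follows essentially the same route as the paper's proof: factorize the likelihood over users, pass to the normalized negative log-likelihood, and observe that the Hessian $\tfrac 1 n\sum_i ((\vec g_i^{y_i})\T\vec p)^{-2}\,\vec g_i^{y_i}(\vec g_i^{y_i})\T$ is a non-negative combination of rank-one PSD matrices, so the problem is convex over the convex set $H$. Your additional remark on the domain of the logarithm (and that $\epsilon$-differential privacy forces all entries of an observed row to be strictly positive) is a sound refinement the paper leaves implicit.
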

	\DeclareRobustCommand{\proofThmMaxLikelihoodNonSymmetric}{
	\begin{proof}
		Fix any $\vec p\in H$, a probability distribution on $\X$. Using the public $G_i$ we infer a distribution on $\S$, as
		\toggle{$\Pr[y_i = s] = {\sum\limits_{x\in \X} \Pr[ y_i=s | y_i \textrm{ picked using } G_i\vec e_x ]\cdot\Pr[ \textrm{user $i$ has type $x$} ]}
			= \sum_{x\in \X} \vec e_s\T G_i \vec e_x  \cdot p(x) = \vec e_s\T G_i \left( \sum_{x\in \X} p(x)\vec e_x\right) = \vec e_s\T G_i \vec p \stackrel{\rm def}={\vec g_i^s}\T \vec p$,}{
		\begin{align*}\Pr[y_i = s] &= \sum_{x\in \X} \Pr[ y_i=s |~ y_i \textrm{ chosen using } G_i\vec e_x ]\cdot \Pr[ \textrm{user $i$ is of type $x$} ] 
		\cr &= \sum_{x\in \X} \vec e_s\T G_i \vec e_x  \cdot p(x) = \vec e_s\T G_i \left( \sum_{x\in \X} p(x)\vec e_x\right) = \vec e_s\T G_i \vec p \stackrel{\rm def}={\vec g_i^s}\T \vec p  \end{align*}}
	with $\vec g_i^s$ denoting the row of $G_i$ corresponding to signal $s$.
	
	Therefore, given the observed signals $(y_1,...,y_n) \in \S^n$, the likelihood of any $p$ is given by
	\toggle{$L(\vec p;~ y_1,...,y_n) = \prod_i {\vec g^{y_i}_i}\T \vec p $.}{\[   L(\vec p;~ y_1,...,y_n) = \prod_i {\vec g^{y_i}_i}\T \vec p  \]}
	Naturally, the function we minimize is the negation of the average log-likelihood, namely
	\toggle{ 
		\begin{align} f(\vec p) = -\frac 1 n\sum_{i}\log \left( {\vec g_i^{y_i}}\T \vec p  \right)  \label{eq:loss_func_nonsymmetricRR} \end{align}}{ 
	\begin{equation} f(\vec p) = -\frac 1 n\sum_{i}\log \left( {\vec g_i^{y_i}}\T \vec p  \right)  = -\frac 1 n\sum_i \log\left( \sum_{x\in \X} G_i(y_i,x)p(x)  \right)\label{eq:loss_func_nonsymmetricRR} \end{equation}}
	\onlyfull{whose partial derivatives are:
	$ \frac{\partial f}{\partial x} = -\sum_i \frac{ G_i(y_i,x) }{\sum\limits_{x'\in \X} G_i(y_i,x')p(x') } $,}
	so the gradient of $f$ is given by
	\myinlineequationcomma{ \nabla f = -\frac 1 n\sum_i \frac{1}{{\vec g_i^{y_i}}\T \vec p } \vec g_i^{y_i}}
	and thus, the Hessian of $f$ is 
	\myinlineequationdot{ \nabla^2 f = \frac 1 n \sum_i \frac{1}{({\vec g_i^{y_i}}\T \vec p)^2 } {\vec g_i^{y_i}}{\vec g_i^{y_i}}\T}
	As the Hessian of $f$ is a non-negative sum of rank-$1$ PSD matrices, we have that $\nabla^2 f$ is also a PSD, so $f$ is convex. The feasibility of the problem $\min\limits_{\vec p \in H} f(\vec p)$ for a convex set $H$ follows.
	\end{proof}}
	\toggle{\onlyconf{\vspace{-0.3cm}}}{\proofThmMaxLikelihoodNonSymmetric}

	
	\onlyfull{Note that in our analysis, we inferred that $\Pr[y_i = s] = {\vec g_i^s}\T \vec p$. It follows that the expected fraction of users sending the signal $s$ is ${\rm E}\left[  \tfrac 1 n\sum_i \mathds{1}\{y_i=s\} \right] = \left(\tfrac 1 n \sum_{i=1}^n \vec g_i^s \right)\T \vec p = \vec e_s\T \left(\tfrac 1 n \sum_i G_i \right) \vec p \stackrel {\rm def} = \vec e_s\T G \vec p$. This proposed a similar approach to finding $\vec p \in H$ that suited for maximizing the likelihood of the observed signals. Set $\vec q$ to be a probability vector over $\S$ where $q(s) = \tfrac{n_s}n$ is the fraction of signals that are $s$; and then find a vector $\vec p = \ker(G) + G^\dagger \vec q$ that intersects $H$. While this approach may produce a valid $\vec p$, we focus on the hypothesis testing with guarantees to converge to the true distribution $\vec p$, based on the generation of the matrices $G_i$, as given in the more recent randomized response works.}

	\subsection{Hypothesis Testing under Non-Symmetric Locally-Private Mechanisms}
	\label{subset:non-symmetric-RR}
	
	Let us recap the differentially private scheme of Bassily et al~\mycite{BassilyNST17}. It this scheme, the mechanism uses solely two signals $\S = \{1,-1\}$ (so $S=2$). For every $i$ the mechanism sets $G_i$ by picking u.a.r for each $x\in \X$ which of the two signals in $\S$ is more likely; the chosen signal gets a probability mass of $\frac {e^\epsilon}{1+e^\epsilon}$ and the other get probability mass of $\frac 1 {1+e^\epsilon}$. We denote $\eta$ as the constant such that $\tfrac 1 2 +\eta = \frac {e^\epsilon}{1+e^\epsilon}$ and $\tfrac 1 2-\eta = \frac 1 {1+e^\epsilon}$; namely $\eta = \tfrac{e^\epsilon-1}{2(e^\epsilon+1)} = \Theta(\epsilon)$ when $\epsilon<1$. Thus, for every $s\in \{1,-1\}$ the row vector $\vec g_i^s$ is chosen such that each coordinate is chosen iid and uniformly from $\{\tfrac 1 2+\eta, \tfrac 1 2-\eta\}$. 
	(Obviously, there's dependence between $\vec g_i^1$ and $\vec g_i^{-1}$, as $\vec g_i^1 + \vec g_i^{-1} = \vec 1$, but the distribution of $\vec g_i^1$ is identical to the one of $\vec g_i^{-1}$.)
	
	First we argue that for any distribution $\vec p$, if $n$ is sufficiently large then w.h.p over the generation of the $G_i$s and over the signals we view from each user, then finding $\hat{\vec p}$ which maximizes the likelihood of the observed signals yields a good approximation to $\vec p$. To that end, it suffices to argue that the function we optimize is Lipfshitz and strongly-convex.
	
	\DeclareRobustCommand{\lemstatement}{Fix $\delta>0$ and assume that the number of signals we observe is $n = \Omega(T^3\log(1/\delta))$. Then w.p.$\geq 1-\delta$ it holds that the function $f(\vec p)$ we optimize (as given in Equation~\eqref{eq:loss_func_nonsymmetricRR}) is $\left(3\sqrt{T} \right)$-Lipfshitz and $\left(\tfrac{\eta^2}{2} \right)$-strongly convex over the subspace $\{ \vec x:~ \vec x\T \vec 1 = 0  \}$ (all vectors orthogonal to the all-$1$ vector).}
	\begin{lemma}
		\label{lem:max_likelihood_is_lipfshitz_stronglyconvex}
		\lemstatement
	\end{lemma}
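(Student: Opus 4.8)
The two assertions are of quite different character, so I would prove them separately: the $3\sqrt T$‑Lipschitz bound is deterministic and short, whereas the $\tfrac{\eta^2}{2}$‑strong‑convexity is where all the randomness — and the $T^3$ sample requirement — is used. For the Lipschitz bound I would simply observe that every row $\vec g_i^{y_i}$ of $G_i$ has all its coordinates in $\{\tfrac12-\eta,\tfrac12+\eta\}$, so for any distribution $\vec p$ on $\X$ one has $\tfrac12-\eta\le {\vec g_i^{y_i}}\T\vec p$ and $\|\vec g_i^{y_i}\|\le\sqrt T(\tfrac12+\eta)$; plugging these into the formula $\nabla f=-\tfrac1n\sum_i\tfrac{1}{{\vec g_i^{y_i}}\T\vec p}\vec g_i^{y_i}$ from the proof of Theorem~\ref{thm:max_likelihood_non_symmetric}, together with the identity $\tfrac{1/2+\eta}{1/2-\eta}=e^\epsilon$, gives $\|\nabla f\|\le\sqrt T\,e^\epsilon<3\sqrt T$ for $\epsilon<1$ at every feasible $\vec p$, hence in particular along $\vec 1^\perp$. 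No randomness enters here.

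For strong convexity, the plan is to first reduce to a statement about $\pm1$ random vectors. Write $\vec g_i^{y_i}=\tfrac12\vec 1+\eta\,\vec\sigma_i$ with $\vec\sigma_i\in\{+1,-1\}^{\X}$. Then for $\vec x$ with $\vec x\T\vec 1=0$ we get ${\vec g_i^{y_i}}\T\vec x=\eta\,{\vec\sigma_i}\T\vec x$, and since ${\vec g_i^{y_i}}\T\vec p\le\tfrac12+\eta<1$ the Hessian $\nabla^2 f=\tfrac1n\sum_i\tfrac{\vec g_i^{y_i}{\vec g_i^{y_i}}\T}{({\vec g_i^{y_i}}\T\vec p)^2}$ satisfies $\vec x\T\nabla^2 f\,\vec x\ge\tfrac{\eta^2}{n}\sum_i({\vec\sigma_i}\T\vec x)^2$ uniformly in $\vec p$; so it suffices to show that with probability $\ge1-\delta$, $\tfrac1n\sum_i({\vec\sigma_i}\T\vec x)^2\ge\tfrac12\|\vec x\|^2$ for every $\vec x\perp\vec 1$. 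Conditioning on the users' true types $x_1,\dots,x_n$, the $\vec\sigma_i$ become independent across $i$; and since the ``which signal is heavier'' bit of column $x'$ of $G_i$ is drawn uniformly and independently over $x'$ while $y_i$ only looks at column $x_i$, every coordinate of $\vec\sigma_i$ other than $\sigma_i(x_i)$ is a fair $\pm1$ independent of the rest. This forces $\E[\sigma_i(k)\sigma_i(l)]=0$ for $k\ne l$, i.e.\ $\E[\vec\sigma_i{\vec\sigma_i}\T]=I$ exactly, regardless of $x_i$.

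The final and main step is the concentration of $\tfrac1n\sum_i\vec\sigma_i{\vec\sigma_i}\T$ around $I$ on $\vec 1^\perp$. For a fixed unit $\vec x\perp\vec 1$, Cauchy–Schwarz gives $({\vec\sigma_i}\T\vec x)^2\le\|\vec\sigma_i\|^2=T$, so Hoeffding bounds the probability that $\tfrac1n\sum_i({\vec\sigma_i}\T\vec x)^2$ deviates from $1$ by more than a small constant by $2\exp(-\Omega(n/T^2))$. I would then take a constant‑resolution net $\mathcal N$ of the unit sphere of $\vec 1^\perp$ (of size $e^{O(T)}$), union‑bound over it, and pass from the net to all of $\vec 1^\perp$ via the usual estimate $\|B\|\le O(1)\cdot\max_{\vec x\in\mathcal N}|\vec x\T B\vec x|$ applied to $B=\tfrac1n\sum_i\vec\sigma_i{\vec\sigma_i}\T-I$ restricted to $\vec 1^\perp$, concluding $\|B\|\le\tfrac12$ and hence $\vec x\T\nabla^2 f\,\vec x\ge\tfrac{\eta^2}{2}\|\vec x\|^2$. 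The union bound goes through once $e^{O(T)}\cdot e^{-\Omega(n/T^2)}\le\delta$, i.e.\ once $n=\Omega(T^3+T^2\log(1/\delta))$, which is exactly where the $T^3$ in the hypothesis appears (the net costs $e^{\Theta(T)}$ while each per‑vector tail decays only like $e^{-n/T^2}$ because the summands range over $[0,T]$). De‑conditioning on the types — the bound held for every realization — then finishes the proof. The main obstacle is precisely this net/concentration step and its loose sample cost; a matrix‑Bernstein argument would bring it down to $\tilde\Omega(T)$, but the stated $\tilde\Omega(T^3)$ is all that is needed in the sequel.
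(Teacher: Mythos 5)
Your proposal is correct and follows essentially the same route as the paper's proof: a deterministic bound on the gradient via $\vec g_i^{y_i}{}\T\vec p\ge\tfrac12-\eta$ and $\|\vec g_i^{y_i}\|\le(\tfrac12+\eta)\sqrt T$ for the Lipschitz part, and for strong convexity the same reduction to $\tfrac1n\sum_i({\vec g_i^{y_i}}\T\vec u)^2$, the same conditioning on the users' types to expose that all coordinates of $\vec g_i^{y_i}$ off the $x_i$-column remain fair coins (giving second moment exactly $\eta^2 I$ on $\vec 1^\perp$), and the same Hoeffding-plus-net union bound whose $e^{O(T)}\cdot e^{-\Omega(n/T^2)}$ trade-off yields the $n=\Omega(T^3\log(1/\delta))$ requirement. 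The $\pm1$ normalization $\vec g_i^{y_i}=\tfrac12\vec1+\eta\vec\sigma_i$ is only cosmetic, so no substantive difference from the paper's argument.
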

	\onlyconf{\vspace{-0.3cm}}
	The proof of Lemma~\ref{lem:max_likelihood_is_lipfshitz_stronglyconvex} --- which (in part) is hairy due to the dependency between the matrix $G_i$ and the signal $y_i$ --- is deferred to Section~\ref{apx_sec:proofs_nonsymmetric} in the \myAppendix.
	
	\cut{ 
	
	Armed with Lemma~\ref{lem:max_likelihood_is_lipfshitz_stronglyconvex} we now aim to show that by maximizing the likelihood over sufficiently many samples, our optimization problem yields a distribution $\tilde{\vec p}$ which is sufficiently close, in total-variation distance, to the true distribution that generates the signals.
	
	To that end, we use the following notation. Denote $\vec p^*$ as the true distribution over the types. Denote $\hat{\vec p}$ as the distribution which minimizes $f(\vec p; y_1,...,y_n)$. Denote $\tilde{\vec p}$ as the distribution which our optimization algorithm returns. (We can either consider standard gradient descent and stop when the improvement in $f$ is too small, or run an online regret minimization algorithm.) We also denote $f_i(\vec p) = -\log( {\vec g_i^{y_i}}\T\vec p)$, which, as the argument from Lemma~\ref{lem:max_likelihood_is_lipfshitz_stronglyconvex} shows, is $3\sqrt T$-Lipfshitz.
	
	\begin{theorem}
		\label{thm:regret_minimizer_gives_close_to_optimal}
		Fix $\alpha>0$. Suppose that the number of signals is $\Omega()$. Let $\tilde{\vec p}$ be the distribution returned by an online learning algorithm minimizing $f(\vec p)$. Then w.p. $\geq 2/3$ we have that $d_{\rm TV}(\vec p^*, \tilde{\vec p}) < \alpha$.
	\end{theorem}
	\begin{proof}
		We denote $F(\vec p) = \lim_{n\to \infty} -\tfrac 1 n \sum_{i} \log({\vec g_i^{y_i}}\T \vec p )$ as the expected-loss function, and $f(\vec p)$ as the function given by the signals $y_1,... y_n$ on our sample. Clearly, $\hat{\vec p}$ is the minimizer of $f(\vec p)$ and $\vec p^*$ is the minimizer of $F(\vec p)$. 
		Observe that our minimization is over the probability simplex, where $B = \max \|\vec p-\vec q\|_2 \leq \|\vec p-\vec q\|_1 \leq 2$, and recall that each $f_i$ is $L$-Lipfshitz for $L = 3\sqrt{T}$. Note that while $f(\vec p)$ is strongly-convex with high probability, we do not have that each $f_i$ is strongly convex.

		Next, we apply the result of Zinkevich~\cite{Zinkevich03} to argue that w.p. $\geq 8/9$, we have: $f(\tilde{\vec p}) \leq f(\hat{\vec p}) + O(\sqrt{B^2 L^2/n})$. Secondly, the Hoeffding bound at the fixed point $\vec p^*$ gives that $f(\vec p^*) - F(\vec p^*) \leq O(\sqrt{L^2/n})$. Note that as minimizers, $f(\hat{\vec p})\leq f(\vec p^*)$ and $F(\vec p^*)\leq F(\hat{\vec p})$, hence $f(\hat{\vec p})-F(\hat{\vec p}) = O(\sqrt{L^2/n})$. Finally, we apply the result of Cesa-Bianchi et al~\cite{?} stating that w.p. $\geq 8/9$ we have $F(\tilde{\vec p}) \leq f(\tilde{\vec p}) + O(\sqrt{B^2L^2/n})$. Therefore, we get
		\[ f(\vec p^*)-f(\hat{\vec p})\leq f(\vec p^*) - f(\tilde{\vec p}) + O(\sqrt{B^2L^2/n}) \leq F(\vec p^*) - F(\tilde{\vec p}) + O(\sqrt{B^2L^2/n}) \leq O(\sqrt{B^2L^2/n})\]
		
		We now apply strong-convexity with respect to all vectors orthogonal to $\vec 1$. (Note that for any two distributions $\vec p$ and $\vec q$ we have $(\vec p-\vec q)\T \vec 1 = 1-1=0$.) As a result, we get
		\begin{align*}
		 &\tfrac{\eta^2} 4 \|\tilde{\vec p}-\hat{\vec p}\|^2 \leq f(\tilde{\vec p})-f(\hat{\vec p}) = O(\sqrt{B^2L^2/n}) = O(\sqrt{T/n}) \\
		 &\tfrac{\eta^2} 4 \|{\vec p^*}-\hat{\vec p}\|^2 \leq f({\vec p^*})-f(\hat{\vec p}) = O(\sqrt{B^2L^2/n}) = O(\sqrt{T/n})
		\end{align*}
		Setting $n = \Omega(\frac{T^3}{\alpha^4\eta^4} )$ gives that 
		\[ \|\tilde{\vec p} - \vec p^*\|_1 \leq \sqrt{T}\left(\|\tilde{\vec p}-\hat{\vec p}\| + \|{\vec p^*}-\hat{\vec p}\| \right) \leq \tfrac{2\sqrt T}{\eta} \tfrac{T^{1/4}}{n^{1/4}}  \leq \alpha \]
	\end{proof}
	}
	
	\paragraph{Identity Testing.}
	Designing an Identity Test based solely on the maximum-likelihood is feasible, due to results like Cesa-Binachi et al~\mycite{CesaCG02} which allow us to compare between the risk of the result $\tilde{\vec p}$ of a online gradient descent algorithm to the original distribution $\vec p$ which generated the signals. Through some manipulations one can (eventually) infer that $|f(\vec p)-f(\tilde{\vec p})|=O(1/\sqrt n)$. However, since strong-convexity refers to the $L_2$-norm \emph{squared} of $\|\vec p - \tilde{\vec p}\|$, we derive the resulting bound is $\|\vec p -\tilde{\vec p}\|_1^2 \leq T\|\vec p - \tilde{\vec p}\|^2_2 = O(\tfrac 1 {\eta^2\sqrt{n}})$, which leads to a sample complexity bound proportional to $T^3/(\alpha\eta)^4$. This bound is worse than the bounds in Section~\ref{sec:symmetric_RR}. 
	
	We therefore design a different, simple, identity tester in the local non-symmetric scheme, based on the estimator given in~\cite{BassilyNST17}. The tester itself~--- which takes as input a given distribution $\vec p$, a distance parameter $\alpha>0$ and the $n$ signals~--- is quite simple.
	\onlyconf{\vspace{-0.3cm}}
	\begin{enumerate}
		\setlength\itemsep{0em}
		\item Given the $n$ matrices $G_1,..., G_n$ and the $n$ observed signals $y_1,...,y_n$, compute the estimator ${\vec \theta= \tfrac 1 n \sum_i \tfrac 1 \eta\left( \vec g_i^{y_i} - \tfrac 1 2 \vec 1\right)}$.
		\item If $d_{\rm TV}(\tfrac{1}{2\eta}\vec \theta,\vec p) \leq \tfrac \alpha 2$ then {\tt accept}, else {\tt reject}.
		\onlyconf{\vspace{-0.25cm}}
	\end{enumerate}
	\begin{theorem}
		\label{thm:identity_testing_nonsymmetric_RR}
		Assume $\epsilon<1$. If we observe $n = \Omega(\left( \tfrac{T}{\alpha\epsilon} \right)^2)$ signals generated by a distribution $\vec q$ then w.p. $\geq 2/3$ over the matrices $G_i$ we generate and the signals we observe, it holds that $d_{\rm TV}(\tfrac{1}{2\eta}\vec \theta,\vec q)\leq \alpha/2$.
	\end{theorem}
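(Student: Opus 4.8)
The plan is to verify that $\tfrac1{2\eta}\vec\theta$ is an \emph{unbiased} estimator of $\vec q$ and then control its $L_1$-deviation from $\vec q$ by a second-moment computation followed by Markov's inequality. It is convenient to re-parametrize the randomness of the mechanism of~\cite{BassilyNST17}: user $i$'s matrix $G_i$ is determined by a sign pattern $\vec\sigma_i\in\{-1,+1\}^{\X}$ with i.i.d.\ uniform coordinates, so that the two rows of $G_i$ are $\vec g_i^{s}=\tfrac12\vec 1+s\eta\,\vec\sigma_i$ for $s\in\{1,-1\}$; in particular $\vec g_i^{y_i}-\tfrac12\vec 1=\eta\,y_i\,\vec\sigma_i$ and $\tfrac1{2\eta}\vec\theta=\tfrac1n\sum_i\tfrac{y_i}{2\eta}\vec\sigma_i$. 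To establish unbiasedness, first condition on $\vec\sigma_i$: the probability of observing $y_i=s$ equals ${\vec g_i^{s}}\T\vec q=\tfrac12+s\eta\,(\vec\sigma_i\T\vec q)$, hence $\E[y_i\mid\vec\sigma_i]=2\eta\,(\vec\sigma_i\T\vec q)$ and therefore $\E[\eta\,y_i\,\vec\sigma_i\mid\vec\sigma_i]=2\eta^2\,(\vec\sigma_i\T\vec q)\,\vec\sigma_i$. Now take the expectation over $\vec\sigma_i$: for each coordinate $x$, $\E\big[\sigma_i(x)\,(\vec\sigma_i\T\vec q)\big]=\sum_{x'}q(x')\,\E[\sigma_i(x)\sigma_i(x')]=q(x)$ because the coordinates of $\vec\sigma_i$ are independent and $\pm1$-valued. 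Thus $\E[\vec g_i^{y_i}-\tfrac12\vec 1]=2\eta^2\vec q$, so $\E[\vec\theta]=2\eta\vec q$ and $\E[\tfrac1{2\eta}\vec\theta]=\vec q$.

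Next I would control the fluctuations. The $n$ vectors $\tfrac{y_i}{2\eta}\vec\sigma_i$ are i.i.d., and each of their coordinates equals $\pm\tfrac1{2\eta}$, so it has second moment $\tfrac1{4\eta^2}$ and hence variance at most $\tfrac1{4\eta^2}$; consequently each coordinate of $\tfrac1{2\eta}\vec\theta$ has variance at most $\tfrac1{4n\eta^2}$. By Jensen's inequality, $\E\big[\|\tfrac1{2\eta}\vec\theta-\vec q\|_1\big]=\sum_x\E\big|\tfrac1{2\eta}\theta(x)-q(x)\big|\le\sum_x\sqrt{\tfrac1{4n\eta^2}}=\tfrac{T}{2\eta\sqrt n}$, and Markov's inequality gives $\Pr\big[\|\tfrac1{2\eta}\vec\theta-\vec q\|_1>\alpha\big]\le\tfrac{T}{2\eta\alpha\sqrt n}$, which is at most $\tfrac13$ once $n=\Omega(T^2/(\eta\alpha)^2)$. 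Since $\eta=\Theta(\epsilon)$ for $\epsilon<1$, this is exactly the claimed $n=\Omega\big((T/(\alpha\epsilon))^2\big)$, and on the complementary event $d_{\rm TV}(\tfrac1{2\eta}\vec\theta,\vec q)=\tfrac12\|\tfrac1{2\eta}\vec\theta-\vec q\|_1\le\alpha/2$.

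The step I expect to be the main obstacle is the unbiasedness computation, since the recorded vector $\vec g_i^{y_i}$ is correlated with the observed signal $y_i$ (the signal is biased toward the value $\vec g_i^{y_i}$ favors), so the expectation must be computed in two nested stages --- over $y_i$ given $\vec\sigma_i$, and then over $\vec\sigma_i$ --- and it is the cancellation of the off-diagonal terms in $\E[\sigma_i(x)(\vec\sigma_i\T\vec q)]$ that makes $\vec\theta$ come out proportional to $\vec q$. A secondary point worth flagging is that one should bound $\E\|\cdot\|_1$ directly by summing per-coordinate standard deviations rather than by a per-coordinate tail bound together with a union bound over $\X$, since the latter route would introduce a spurious $\log T$ factor and overshoot the target sample complexity.
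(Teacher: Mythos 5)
Your proposal is correct and achieves the stated sample complexity; at its core it is the same second-moment argument as the paper's proof --- compute the mean of $\vec\theta$, bound its per-coordinate variance by $O(1/n)$, and convert an $O\big(T/(\eta\sqrt n)\big)$ expected deviation into a constant-probability guarantee --- and your unbiasedness computation (condition on $\vec\sigma_i$, then use $\E[\sigma_i(x)\sigma_i(x')]=0$ for $x\neq x'$ to kill the off-diagonal terms) is exactly the cancellation the paper exploits, just phrased through the sign pattern $\vec\sigma_i$ rather than through the entries $g_i^{y_i}(x)$ directly. The two write-ups differ in bookkeeping. The paper conditions on the realized types, proves $\E[\vec\theta]=2\eta\vec f$ and $\E[(\vec\theta-2\eta\vec f)(\vec\theta-2\eta\vec f)\T]\preceq\tfrac1n I$ for the empirical frequency vector $\vec f$, applies Chebyshev separately to $\|\tfrac1{2\eta}\vec\theta-\vec f\|$ and to $\|\vec f-\vec q\|$, and pays the dimension factor at the end via $\|\cdot\|_1\le\sqrt T\,\|\cdot\|_2$. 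You instead marginalize over the types in one step, so that $\tfrac1{2\eta}\vec\theta$ is unbiased for $\vec q$ outright, and you pay the dimension factor by summing per-coordinate standard deviations (Jensen) and applying Markov to the $L_1$-norm itself. Both routes land on $n=\Omega\big(T^2/(\eta\alpha)^2\big)$ with the same constants up to unimportant factors; yours is marginally leaner (one tail bound instead of two, no triangle inequality through $\vec f$), whereas the paper's two-stage decomposition through $\vec f$ is the form that gets reused verbatim in the independence tester of Theorem~\ref{thm:independence_nonsymmetricRR}, where the marginal frequency vectors $\vec f^j$ are needed as intermediate objects. Your closing remark about avoiding a per-coordinate union bound is well taken, though both your summed-standard-deviation route and the paper's $\sqrt T\cdot L_2$ route avoid the spurious $\log T$ equally well.
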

	\onlyconf{\vspace{-0.3cm}}
	The correctness of the tester now follows from checking for the two cases where either $\vec p = \vec q$ or $d_{\rm TV}(\vec p,\vec q) >\alpha$.
			\DeclareRobustCommand{\proofProVarianceOfTheta}{
				The columns of each $G_i$ are chosen independently, and moreover, the signal $y_i$ depends only on a single column. Therefore, it is clear that for each $x\neq x'$ we have that 
							\myinlineequationcomma{\E \left[ (\theta(x) - 2\eta f(x))(\theta(x') - 2\eta f(x')) \right] = \E [ \theta(x) - 2\eta f(x)]\cdot\E[\theta(x') - 2\eta f(x') ] = 0} so all the off-diagonal entries of the variance-matrix are $0$. And for each type $x\in \X$ we have that 
							\toggle{
								\begin{align*}
								&\E[ (\theta(x) - 2\eta f(x))^2  ] 
								\cr &~= \tfrac 1 {n^2}\sum_{i,i'}\resizebox{0.85\hsize}{!}{ $\E[\left(  \tfrac 1 \eta (g_i^{y_i}(x)-\tfrac 1 2) - 2\eta f(x) \right)
									\left(  \tfrac 1 \eta (g_{i'}^{y_{i'}}(x)-\tfrac 1 2) - 2\eta f(x) \right)]$}
								\cr & \stackrel{(\ast)}= \tfrac 1{n^2} \sum_i \E[  \left(  \tfrac 1 \eta (g_i^{y_i}(x)-\tfrac 1 2) - 2\eta f(x) \right)^2 ] 
								\cr &= \tfrac 1{n^2} \sum_i \resizebox{0.85\hsize}{!}{ $\E[  \left(  \tfrac 1 \eta (g_i^{y_i}(x)-\tfrac 1 2)\right)^2 ] -4\eta f(x)\cdot  \E[\tfrac 1 \eta (g_i^{y_i}(x)-\tfrac 1 2)] + 4\eta^2f(x)^2$}
								\cr& \stackrel{(\ast\ast)}=	\tfrac 1 {n^2} \left( \sum_i 1 - 4\eta f(x)  \cdot 2\eta \sum_i e_{x_i}(x) + \sum_i 4\eta^2f(x)^2 \right) 
								\cr & = \tfrac 1 n  - \tfrac 8 n \eta^2 f(x)^2 + \tfrac 4 n\eta^2 f(x)^2 = \tfrac {1-4\eta^2f(x)^2} n \leq \tfrac 1 n
								\end{align*} where $(\ast)$ follows from the independence of the $i$th and the $i'$th sample, and $(\ast\ast)$ holds because $\tfrac 1 \eta (g_i^{y_i}(x)-\tfrac 1 2) \in \{1,-1\}$.}
							{ 
								\begin{align*}
								\E[ (\theta(x) - 2\eta f(x))^2  ] &= \tfrac 1 {n^2}\sum_{i,i'} \E[\left(  \tfrac 1 \eta (g_i^{y_i}(x)-\tfrac 1 2) - 2\eta f(x) \right)\left(  \tfrac 1 \eta (g_{i'}^{y_{i'}}(x)-\tfrac 1 2) - 2\eta f(x) \right)]
								\intertext{independence between the $i$th sample and the $i'$-th sample gives}
								& = \tfrac 1{n^2} \sum_i \E[  \left(  \tfrac 1 \eta (g_i^{y_i}(x)-\tfrac 1 2) - 2\eta f(x) \right)^2 ] 
								\cr &= \tfrac 1{n^2} \sum_i \E[  \left(  \tfrac 1 \eta (g_i^{y_i}(x)-\tfrac 1 2)\right)^2 ] -4\eta f(x)\cdot  \E[\tfrac 1 \eta (g_i^{y_i}(x)-\tfrac 1 2)] + 4\eta^2f(x)^2
								\intertext{since we always have $\tfrac 1 \eta (g_i^{y_i}(x)-\tfrac 1 2) \in \{1,-1\}$ we get}
								& =	\frac 1 {n^2} \left( \sum_i 1 - 4\eta f(x)  \cdot 2\eta \sum_i e_{x_i}(x) + \sum_i 4\eta^2f(x)^2 \right) 
								\cr & = \tfrac 1 n  - \tfrac 8 n \eta^2 f(x)^2 + \tfrac 4 n\eta^2 f(x)^2 = \tfrac {1-4\eta^2f(x)^2} n \leq \tfrac 1 n		\end{align*}
							}
			}
		\onlyconf{\vspace{-0.25cm}}
		\begin{proof}
		In the first part of the proof we assume the types of the $n$ users were already drawn and are now fixed. We denote $x_i$ as the type of user $i$. We denote the frequency vector $\vec f = \langle \tfrac{n_x} n\rangle_{x\in \X}$, generated by counting the number of users of type $x$ and normalizing it by $n$. 
		
		Given $\vec f$, we examine the estimator $\vec \theta$. For each user $i$ we have that $\tfrac 1 \eta (\vec g_i^{y_i} - \tfrac 1 2 \vec 1) \in \{-1,1\}^T$. Because $x_i$, the type of user $i$, is fixed, then for each coordinate $x'\neq x_i$, the signal $y_i$ is \emph{independent} of the $x'$-column in $G_i$ ($y_i$ depends solely on the entries in the $x_i$-column). We thus have that $g_i^{y_i}(x')$ is distributed uniformly among $\{\tfrac 1 2 \pm \eta\}$ and so $\E[ \tfrac 1 \eta( g_i^{y_i}(x')-\tfrac 1 2) ] = 0$. In contrast, 
		\toggle{ $\Pr[~\tfrac 1 \eta( g_i^{y_i}(x_i)~-~\tfrac 1 2)~=~1~]$\\$ 
			= \sum_{s\in \{-1,1\}}\Pr[\tfrac 1 \eta( g_i^{s}(x_i)-\tfrac 1 2)=1 \textrm{ and } y_i = s]\allowbreak
			{= 2\cdot \tfrac 1 2\cdot (\tfrac 1 2 +\eta)}= \tfrac 1 2 +\eta 
			$. }{
		\begin{align*}
			\Pr[ \tfrac 1 \eta( g_i^{y_i}(x_i)-\tfrac 1 2)=1 ] &= \sum_{s\in \{-1,1\}}\Pr[\tfrac 1 \eta( g_i^{s}(x_i)-\tfrac 1 2)=1 \textrm{ and } y_i = s]
			\cr &= \sum_{s\in \{-1,1\}}\Pr[y_i = s |~ \tfrac 1 \eta( g_i^{s}(x_i)-\tfrac 1 2)=1 ]\cdot \Pr[\tfrac 1 \eta( g_i^{s}(x_i)-\tfrac 1 2)=1]
			\cr &= (\tfrac 1 2 +\eta)\cdot \tfrac 1 2 + (\tfrac 1 2 +\eta)\cdot \tfrac 1 2 = \tfrac 1 2 +\eta 
		\end{align*}}
		Therefore, \toggle{$\E[ \tfrac 1 \eta( g_i^{y_i}(x_i)-\tfrac 1 2) ] = (\tfrac 1 2 + \eta) - (\tfrac 1 2 -\eta) = 2\eta$.}{$\E[ \tfrac 1 \eta( g_i^{y_i}(x_i)-\tfrac 1 2) ] = 1\cdot (\tfrac 1 2 + \eta) + (-1)\cdot (\tfrac 1 2 -\eta) = 2\eta$.} It follows that $\E[\tfrac 1 \eta (\vec g_i^{y_i} - \tfrac 1 2 \vec 1)] = 2\eta \vec e_{x_i}$ and so $\E[\vec \theta] = 2\eta \vec f$.
		
		Next we examine the variance of $\vec \theta$ 
		\toggle{, and argue the following (proof deffered to \myAppendix).
			\vspace{-0.1cm}
			\begin{proposition}
				\label{pro:variance_of_theta}
				$\E [ (\vec \theta - 2\eta\vec f)(\vec \theta - 2\eta\vec f)\T ] \preceq \tfrac 1 n  I$
			\end{proposition}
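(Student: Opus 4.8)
The plan is to compute the covariance matrix $\Sigma = \E[(\vec\theta - 2\eta\vec f)(\vec\theta - 2\eta\vec f)\T]$ coordinate-by-coordinate, under the running assumption that the types $x_1,\dots,x_n$ are fixed so that the only randomness is in the matrices $G_i$ and the users' private coins. Writing $\vec v_i = \tfrac1\eta(\vec g_i^{y_i} - \tfrac12\vec 1) \in \{-1,+1\}^T$ we have $\vec\theta = \tfrac1n\sum_i \vec v_i$, and the preceding computation already gives $\E[\vec v_i] = 2\eta\vec e_{x_i}$, hence $\vec\theta - 2\eta\vec f = \tfrac1n\sum_i(\vec v_i - 2\eta\vec e_{x_i})$ is a sum of independent, mean-zero terms (independence because each user runs her mechanism on independently drawn $G_i$ and private coins). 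Expanding the product and dropping the vanishing $i\neq j$ cross terms, I would reduce the claim to bounding each per-user covariance, i.e.\ to showing $\E[(\vec v_i - 2\eta\vec e_{x_i})(\vec v_i - 2\eta\vec e_{x_i})\T] \preceq I$, after which $\Sigma \preceq \tfrac1{n^2}(nI) = \tfrac1n I$ follows immediately.

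For a fixed user $i$ the two structural facts I would invoke are: (a) the columns of $G_i$ are independent, the $x$-th column is a single fair coin (since $\vec g_i^1 + \vec g_i^{-1} = \vec 1$), and $y_i$ is a randomized function of only the $x_i$-th column, so for every $x \neq x_i$ the coordinate $v_i(x)$ remains a symmetric $\pm1$ sign even after conditioning on $y_i$, and is conditionally independent of every $v_i(x')$ with $x' \neq x$; and (b) each coordinate of $\vec v_i$ is $\pm1$, so its square equals $1$. For an off-diagonal pair $x \neq x'$, at least one index, say $x$, differs from $x_i$; conditioning on $(y_i, \text{the } x'\text{-column of }G_i)$ makes $v_i(x') - 2\eta\mathbf 1[x'=x_i]$ a constant while $\E[v_i(x) \mid y_i, x'\text{-column}] = 0$, so the off-diagonal entry is $0$. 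For a diagonal entry $x$, setting $c = \E[v_i(x)] = 2\eta\mathbf 1[x = x_i]$, the entry is $\mathrm{Var}(v_i(x)) = \E[v_i(x)^2] - c^2 = 1 - c^2 \leq 1$. Thus the per-user covariance matrix is diagonal with entries in $\{1,\,1-4\eta^2\}$, hence $\preceq I$, and summing over the $n$ independent users yields $\Sigma \preceq \tfrac1n I$; since a diagonal matrix with nonnegative entries at most $1/n$ is trivially $\preceq \tfrac1n I$, we are done.

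The step I expect to require the most care is fact (a): $v_i(x)$ is a function of both the $x$-th column of $G_i$ and of $y_i$, and $y_i$ itself depends on the $x_i$-th column together with the user's private coin. The point is that the $x$-th column is independent of the $x_i$-th column and of that private coin, so conditioning on $y_i$ does not bias the $x$-th column's fair coin; this is what makes $v_i(x)\mid y_i$ a clean symmetric sign and decouples it from every other coordinate. Everything after that is a one-line second-moment calculation, so the conditional-independence bookkeeping is really the only nontrivial part of the argument.
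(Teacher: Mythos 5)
Your proof is correct and follows essentially the same route as the paper's: both rely on the facts that the columns of $G_i$ are independent with $y_i$ a function of only the $x_i$-th column (killing the off-diagonal entries) and that each coordinate of $\tfrac1\eta(\vec g_i^{y_i}-\tfrac12\vec 1)$ is $\pm1$ (bounding the diagonal by $1-4\eta^2 e_{x_i}(x)^2\le 1$). Your bookkeeping is in fact slightly cleaner, since centering each user's term at its own mean $2\eta\vec e_{x_i}$ (rather than at $2\eta f(x)$ as the paper does) makes the $i\neq i'$ cross terms vanish exactly rather than only in aggregate.
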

		\vspace{-0.3cm} So as a result, the expected $L_2$-difference}{
		. We argue that $\E [ (\vec \theta - 2\eta\vec f)(\vec \theta - 2\eta\vec f)\T ] \preceq \tfrac 1 n  I$. \proofProVarianceOfTheta
		
		It thus follows that}
		\myinlineequationdot{{\E[ \|\vec\theta-2\eta\vec f\|^2 ]} = \E[ {\rm trace}((\vec \theta - 2\eta\vec f)(\vec \theta - 2\eta\vec f)\T) ] = {\rm trace}( \E[(\vec \theta - 2\eta\vec f)(\vec \theta - 2\eta\vec f)\T] ) \leq \tfrac T n}
		Chesbyshev's inequality assures us that therefore $\Pr[ \tfrac 1 {2\eta}\|\vec\theta-2\eta\vec f\| > \tfrac{\sqrt{6T}}{2\eta \sqrt{n}} ] \leq \frac{T/n}{6T/n} = \tfrac 1 6$.
		
		So far we have assumed $\vec f$ is fixed, and only looked at the event that the coin-tosses of the mechanism yielded an estimator far from its expected value. We now turn to bounding the distance between $\vec f$ and its expected value $\vec q$ (the distribution that generated the types).
		\onlyfull{ 
			
		}
		Indeed, it is clear to see that the expected value of $\vec f = \tfrac 1 n\sum_i \vec e_{x_i}$  is $\E[\vec f]=\vec q$. Moreover, it isn't hard (and has been computed before many times, e.g.~Agresti~\mycite{Agresti03}) to see that $\E[ (\vec f-\vec q)(\vec f-\vec q)\T ] = \tfrac 1 n \left(  \diag(\vec q) - \vec q \vec q\T\right)$. Thus $\E[ \|\vec f-\vec q\|^2 ] = {\rm trace}(\tfrac 1 n \left(  \diag(\vec q) - \vec q \vec q\T\right))\onlyfull{=\tfrac 1 n \sum\limits_{x\in \X} q(x)-q^2(x)} = \tfrac 1 n (1- \|\vec q\|^2)$. Therefore, applying Chebyshev again, we get that w.p. at most $1/6$ over the choice of types by $\vec q$, we have that
		$\Pr[ \|\vec f-\vec q\| > \sqrt{6/n} ] \leq \tfrac{1/n}{6/n} = \tfrac 1 6$.
		
		Combining both results we get that w.p. $\geq 2/3$ we have that
		\myinlineequation{\|\tfrac 1 {2\eta} \vec \theta - \vec q\|_1 \leq  \sqrt T \|\tfrac 1 {2\eta} \vec \theta - \vec q\| \leq \sqrt T\left( \|\tfrac 1 {2\eta} \vec \theta - \vec f\| + \| \vec f - \vec q\|\right)  \leq \sqrt{ \tfrac{6T^2}{4\eta^2 n}} + \sqrt{ \tfrac {6T} n } \leq \alpha} since we have  $n=\Omega(\tfrac{T^2}{\eta^2\alpha^2})$. Recall that $\eta = \Theta(\epsilon)$ and that $d_{\rm TV}(\vec x,\vec y) = \tfrac 1 2 \|\vec x-\vec y\|_1$, and the bound of $\tfrac \alpha 2 $ is proven.
	\end{proof}
	\toggle{\vspace{-0.3cm}}{
	
	}
	\paragraph{Independence Testing.} Similarly to the identity tester, we propose a similar tester for independence. Recall that in this case, $\X$ is composed of $d$ features, hence $\X = \X^1\times \X^2 \times...\times \X^d$, with our notation $T = |\X|$ and $T^j = |\X^j|$ for each $j$. Our tester should {\tt accept} when the underline distribution over the types is some product distribution $\vec p$, and should {\tt reject} when the underline distribution over the types is $\alpha$-far from any product distribution. \onlyfull{
	
	}The tester, whose input is the $n$ signals and a distance parameter $\alpha~>~0$, is as follows.
	\onlyconf{\vspace{-0.3cm}}
	\begin{enumerate}
		\setlength\itemsep{0em}
		\item Given the $n$ matrices $G_1,..., G_n$ and the $n$ observed signals $y_1,...,y_n$, compute the estimator ${\vec \theta = \tfrac 1 n \sum_i \tfrac 1 \eta\left( \vec g_i^{y_i} - \tfrac 1 2 \vec 1\right)}$.
		\item For each feature $j$ compute $\vec \theta^j$ --- the $j$th marginal of $\tfrac 1 {2\eta}\vec \theta$ (namely, for each $x^j\in\X^j$ sum all types whose $j$th feature is $x^j$). Denote $\bar{\vec \theta} = \vec \theta^1 \times ... \times \vec \theta^d$.
		\item If $d_{\rm TV}(\tfrac{1}{2\eta}\vec \theta,\bar{\vec \theta}) \leq \tfrac \alpha 2$ then {\tt accept}, else {\tt reject}.
	\end{enumerate}
	\onlyconf{\vspace{-0.3cm}}	
	\DeclareRobustCommand{\thmIndependenceNonsymmetricRR}{Assume $\epsilon<1$. Given $n = \Omega(  \frac{T}{\alpha^2\epsilon^2} \left(T +  d^2\sum_j T^j  \right) )$ iid drawn signals from the non-symmetric locally-private mechanism under a dataset whose types were drawn iid from some distribution $\vec q$, then w.p. $\geq 2/3$ over the matrices $G_i$ we generate and the types in the dataset we have the following guarantee. If $\vec q$ is a product distribution, then $d_{\rm TV}(  \tfrac 1 {2\eta} \vec \theta, \bar{\vec \theta}) \leq \tfrac \alpha 2$, and if $\vec q$ is $\alpha$-far from any product distribution then $d_{\rm TV}(  \tfrac 1 {2\eta} \vec \theta, \bar{\vec \theta}) > \tfrac \alpha 2$.}
	\begin{theorem}
		\label{thm:independence_nonsymmetricRR}
		\thmIndependenceNonsymmetricRR
	\end{theorem}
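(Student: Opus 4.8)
The plan is to piggy-back on the estimator analysis already carried out for Theorem~\ref{thm:identity_testing_nonsymmetric_RR} and to add a separate, dimension-aware control of each marginal. Write $\hat{\vec\theta}=\tfrac1{2\eta}\vec\theta$, let $\vec f$ be the empirical type-frequency vector of the (drawn, then fixed) dataset, let $\vec f^j,\vec q^j,\vec\theta^j$ be the $j$th marginals of $\vec f,\vec q,\hat{\vec\theta}$, and let $\bar{\vec q}=\vec q^1\times\cdots\times\vec q^d$ be the product of the true marginals (so that $\bar{\vec\theta}=\vec\theta^1\times\cdots\times\vec\theta^d$). First I would reuse the argument of Theorem~\ref{thm:identity_testing_nonsymmetric_RR} verbatim --- with the distance parameter set to a small constant multiple of $\alpha$ and the constants in its two Chebyshev steps strengthened --- to conclude that whenever $n=\Omega(T^2/(\alpha^2\epsilon^2))$ we have $d_{\rm TV}(\hat{\vec\theta},\vec q)\le\alpha/8$ with probability $\ge 8/9$; this is exactly where the $T^2/(\alpha^2\epsilon^2)$ term of the bound comes from.

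The genuinely new ingredient is a second-moment bound on each marginal estimator $\vec\theta^j$. With the types fixed one still has $\E[\hat{\vec\theta}]=\vec f$, hence $\E[\vec\theta^j]=\vec f^j$ by linearity of marginalization. Fix $x^j\in\X^j$ and let $C_{x^j}=\{x\in\X:\ \text{the }j\text{th feature of }x\text{ is }x^j\}$, so $|C_{x^j}|=T/T^j$ and $\theta^j(x^j)=\tfrac1{2\eta}\cdot\tfrac1n\sum_i\sum_{x\in C_{x^j}}\tfrac1\eta(g_i^{y_i}(x)-\tfrac12)$, a normalized sum of terms in $\{\pm1\}$. The point is that, because the columns of each $G_i$ are drawn independently and $y_i$ is a function only of the column $x_i$, the $|C_{x^j}|$ summands $\tfrac1\eta(g_i^{y_i}(x)-\tfrac12)$ over $x\in C_{x^j}$ are pairwise uncorrelated: for $x,x'\neq x_i$ they are genuinely independent and uniform, and the single possibly-biased term (for $x=x_i$, when $x_i\in C_{x^j}$) is uncorrelated with all the others. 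Thus the per-sample variance of the inner sum is at most $|C_{x^j}|=T/T^j$, which gives $\mathrm{Var}(\theta^j(x^j))\le\tfrac1{4\eta^2}\cdot\tfrac1n\cdot\tfrac{T}{T^j}$ and hence $\E\|\vec\theta^j-\vec f^j\|^2=\sum_{x^j}\mathrm{Var}(\theta^j(x^j))\le\tfrac{T}{4\eta^2 n}$. Combined with the multinomial bound $\E\|\vec f^j-\vec q^j\|^2\le\tfrac1n$ (already established inside the proof of Theorem~\ref{thm:identity_testing_nonsymmetric_RR}), this is the whole analytic content, and I expect the careful justification of the ``$T/T^j$, not $(T/T^j)^2$'' variance blow-up --- isolating the biased coordinate and invoking column-independence --- to be the main obstacle.

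Next I would turn these into per-feature $L_1$ bounds by Chebyshev at the right radius. For each $j$, applying Chebyshev with radius $\tfrac{\alpha}{16d\sqrt{T^j}}$ to $\|\vec\theta^j-\vec f^j\|$ and to $\|\vec f^j-\vec q^j\|$ shows that each of those deviations happens with probability $O(\tfrac{d^2T^jT}{\eta^2n\alpha^2})$; summing the $2d$ bounds and asking the total to be $\le1/9$ forces precisely $n=\Omega(\tfrac{d^2T\sum_jT^j}{\eta^2\alpha^2})$ (using $T/\eta^2\ge1$, so this term dominates). On the resulting good event $\|\vec\theta^j-\vec q^j\|\le\tfrac{\alpha}{8d\sqrt{T^j}}$ for every $j$, so $\|\vec\theta^j-\vec q^j\|_1\le\sqrt{T^j}\,\|\vec\theta^j-\vec q^j\|\le\tfrac{\alpha}{8d}$ and therefore $\sum_j d_{\rm TV}(\vec\theta^j,\vec q^j)\le\alpha/16$. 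By the subadditivity of total-variation distance over product distributions --- with an extra factor $2$ that is harmless since each $\vec\theta^j$ has $L_1$-norm within $\alpha/(8d)$ of $1$ --- this yields $d_{\rm TV}(\bar{\vec\theta},\bar{\vec q})\le\alpha/8$.

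Finally I would take a union bound over the two good events (total failure $\le2/9<1/3$) and finish with the triangle inequality for $d_{\rm TV}$. In the completeness case $\vec q=\bar{\vec q}$, so $d_{\rm TV}(\hat{\vec\theta},\bar{\vec\theta})\le d_{\rm TV}(\hat{\vec\theta},\vec q)+d_{\rm TV}(\bar{\vec q},\bar{\vec\theta})\le\alpha/8+\alpha/8<\alpha/2$ and the tester accepts; in the soundness case $\vec q$ is $\alpha$-far from every product distribution, in particular $d_{\rm TV}(\vec q,\bar{\vec q})\ge\alpha$, so $d_{\rm TV}(\hat{\vec\theta},\bar{\vec\theta})\ge d_{\rm TV}(\vec q,\bar{\vec q})-d_{\rm TV}(\vec q,\hat{\vec\theta})-d_{\rm TV}(\bar{\vec q},\bar{\vec\theta})\ge\alpha-\alpha/8-\alpha/8>\alpha/2$ and the tester rejects. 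Adding the two sample-size requirements and substituting $\eta=\Theta(\epsilon)$ gives $n=\Omega(\tfrac{T}{\alpha^2\epsilon^2}(T+d^2\sum_jT^j))$, as claimed.
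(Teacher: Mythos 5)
Your proposal is correct and is essentially the same proof as the paper's: identical decomposition into $(\hat{\vec\theta}-\vec f)$ and $(\vec f-\vec q)$, the same per-feature second-moment bound $\E\|\vec\theta^j-\vec f^j\|^2\le T/(4\eta^2 n)$, Chebyshev plus a union bound over features, the product-norm inequality (the paper's Proposition~\ref{apx_pro:l1norm_tensor} is exactly your ``subadditivity with an extra constant'' step, and like you it needs the $\|\vec\theta^j\|_1$ to be close to $1$ rather than merely bounded), and the closing triangle inequality. The only cosmetic difference is how the marginal variance bound is obtained: the paper applies the marginalization operator $M^j$ to the already-established diagonal covariance bound and uses $(M^j)(M^j)^{\mathsf T}=(T/T^j)I$, whereas you re-derive it directly from column-independence and pairwise-uncorrelatedness of the $\pm1$ summands --- the same underlying fact in different notation.
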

	\DeclareRobustCommand{\proofThmIndependenceNonsymmetricRR}{
	\begin{proof}
		The proof follows the derivations made at the proof of Theorem~\ref{thm:identity_testing_nonsymmetric_RR}. For the time being, we assume the types of the $n$ users are fixed and denote the frequency vector $\vec f = \langle \tfrac{n_x}n \rangle_T$. Moreover, for each feature $j$ we denote the marginal frequency vector as $\vec f^j$. Recall that we have shown that $\E[ \tfrac 1 {2\eta} \vec \theta ] = \vec f$ and that $\E[ (\tfrac 1 {2\eta}\vec \theta-\vec f)(\tfrac 1 {2\eta}\vec \theta-\vec f)\T  ] \preceq \tfrac 1 {4\eta^2 n} I$.
		
		Fix a feature $j$. The way we obtain $\vec \theta^j$ is by summing the entries of $\tfrac 1 {2\eta}\vec \theta$ for each type $x^j \in \X^j$. This can be viewed as a linear operator $M^j$, of dimension $T^j \times T$, where the $x^j$-row of $M^j$ has $1$ for each $x\in \X$ whose $j$-th feature is $x^j$ and $0$ anywhere else. Since each column has a single $1$, it follows that for every two distinct types $x^j$ and $y^j$, the dot-product of the $x^j$-row and the $y^j$-row of $M^j$ is $0$. Thus, since each row has exactly $\prod\limits_{j'\neq j} T^{j'} = \frac T{T^j}$ ones, we have that $(M^j)(M^j)\T = \frac T {T^j} I_{\X^j \times \X^j}$. 
		
		And so, for each feature $j$ we have that $\E[\vec \theta^j] = \E[M^j (\tfrac 1 {\eta}\vec \theta)] = M^j \vec f = \vec f^j$. Moreover, we also have
		\toggle{
			$
			\E[ (\vec \theta^j-\vec f^j)(\vec \theta^j-\vec f^j) ] = \E[ M^j (\tfrac 1 {2\eta}\vec \theta-\vec f)(\tfrac 1 {2\eta}\vec \theta-\vec f)\T   {M^j}\T  ] \preceq  \tfrac T{4\eta^2nT^j} I
			$.}{
		\begin{align*}
		 \E[ (\vec \theta^j-\vec f^j)(\vec \theta^j-\vec f^j) ] &= \E[ M^j (\tfrac 1 {2\eta}\vec \theta-\vec f)(\tfrac 1 {2\eta}\vec \theta-\vec f)\T   (M^j)\T  ] \preceq \tfrac 1 {4\eta^2} (M^j)(M^j)\T = \tfrac T{4\eta^2nT^j} I
		\end{align*}}
		As a result, $\E[ \|\vec \theta^j - \vec f^j\|^2 ] \leq {\rm trace}(\tfrac T{4\eta^2nT^j} I)= \tfrac T {4\eta^2n}$, and the Union-bound together with Chebyshev inequality gives that
		\toggle{
			$\Pr[ \exists~j, \textrm{ s.t. } \| \vec p^j - \vec f^j\| > \tfrac 1 {2\eta}\sqrt{\tfrac {12dT} {n}} ] < \sum_{j=1}^d\tfrac 1 {12 d} = \tfrac 1 {12}$.}{
		\[  \Pr[ \exists~{\rm coordinate}~j, \textrm{ s.t. } \| \vec p^j - \vec f^j\| > \tfrac 1 {2\eta}\sqrt{\tfrac {12dT} {n}} ] < \sum_{j=1}^d\tfrac 1 {12 d} = \tfrac 1 {12}\]}

		We now consider the randomness in $\vec f$.  For every $j$ we denote $\vec q^j$ as the marginal of $\vec q$ on the $j$th feature. Not surprisingly we have that $\E[\vec f] = \vec q$ and that for each feature $\E[ \vec f^j ]= M^j \E[\vec f]= \vec q^j$. Moreover, some calculations give that $\E[ (\vec f^j - \vec q^j)(\vec f^j-\vec q^j)\T  ] = \tfrac 1 n M^j\left( \diag(\vec q ) - \vec q \vec q\T \right)(M^j)\T = \tfrac 1 n \left(  \diag(\vec q^j) - (\vec q^j)(\vec q^j)\T \right)    $. As a result, for each $j$ we have $\E [ \|\vec f^j -\vec q^j\|^2 ] = \tfrac 1 n (1 - \|\vec q^j\|^2)\leq \tfrac 1 n$. Again, the union-bound and the Chebyshev inequality give that
		\myinlineequationdot{\Pr[ \exists j, \textrm{ s.t. } \|\vec f^j - \vec q^j\| > \sqrt{\tfrac{12dT} {n}} ] < \tfrac d {12d} = \tfrac 1 {12}}
		
		And so, w.p. $\geq 5/6$ we get that for each features $j$ we have
		\myinlineequationcomma{\|\vec\theta^j - \vec q^j\|_1 \leq \sqrt{T^j} \|\vec \theta^j  - \vec q^j\| \leq \sqrt{T^j}( 1+\tfrac 1 {2\eta} )\sqrt{\tfrac {12dT} {n}} \leq \sqrt{T^j} \cdot \sqrt{\tfrac {12dT} {\eta^2n}} } where in the last step we used the fact that $\eta<\tfrac 1 2$ hence $( 1+\tfrac 1 {2\eta} )<\tfrac 1 \eta$. We set $n$ large enough to have $\|\vec\theta^j - \vec q^j\|_1 \leq 1$, and in particular it implies that for any $j$ we also have $\|\vec \theta^j\|\leq 2$. We thus apply the bound on the product of the $\vec{\theta}^j$s to derive that (Proposition~\ref{apx_pro:l1norm_tensor} in the 
		\myAppendix)
		\myinlineequationdot{
		\| \vec \theta^1 \times ...\times \vec \theta^d - \vec q^1 \times ... \times \vec q^d\|_1 \leq 2\sum_{j} \sqrt{T^j} \sqrt{\frac {12dT} {\eta^2n}} \leq 2\sqrt d \cdot \sqrt{\sum_j T^j} \sqrt{\frac {12dT} {\eta^2n}} 
		}
 		
 		Moreover, in the proof of Theorem~\ref{thm:identity_testing_nonsymmetric_RR} we have shown that $\Pr[  \|\tfrac 1 {2\eta}\vec \theta  - \vec q\|_1 > \sqrt{ \tfrac{12T^2}{\eta^2n} }  ] < \tfrac 1 6$. In conclusion, setting $n = \Omega( \frac T {\alpha^2\eta^2} \left( T + d^2 \sum_j T^j\right) )$ we have that w.p. $\geq 2 / 3$ both of the following relations holds: 
 		\begin{align*}
 		&\| \bar{\vec \theta} -  \vec q^1 \times ... \times \vec q^d\|_1 \leq \tfrac 1 2 \alpha \cr 
 		&\|\tfrac 1 {2\eta}\vec \theta  - \vec q\|_1 \leq \tfrac 1 2 \alpha 
 		\end{align*}
 		
 		Now, if $\vec q$ is a product distribution that we have that $\vec q = \vec q^1 \times ... \times \vec q^d$ and hence $\|\tfrac 1 {2\eta} \vec \theta - \bar{\vec \theta}\|_1 \leq \alpha$. In contrast, if $\vec q$ is $\alpha$-far (in total-variation distance, and so $(2\alpha)$-far in $L_1$-norm) from any product distribution, then in particular $\|\vec q - \vec q^1 \times ... \times \vec q^d\|_1\geq 2\alpha$ and we get that
 		\myinlineequationdot{\|\tfrac 1 {2\eta} \vec \theta - \bar{\vec \theta}\|_1 \geq \|\vec q - \vec q^1 \times ... \times \vec q^d\|_1 - \| \bar{\vec \theta} -  \vec q^1 \times ... \times \vec q^d\|_1 -  \|\tfrac 1 {2\eta}\vec \theta  - \vec q\|_1 \geq \alpha}
	\end{proof}}
	\toggle{\onlyconf{\vspace{-0.25cm}} (Proof deffered to the \myAppendix, Section~\ref{apx_sec:proofs_nonsymmetric}.)}{\proofThmIndependenceNonsymmetricRR}
	
	\toggle{
	\paragraph{Open Problems.} (1) Is there a tester with a better sample complexity? The experiment in Section~\ref{subsec:experiment} leads us to conjecture that there exists a tester with sample complexity of $T^{1.5}/(\eta\alpha)^2$. There could exist better testers, of smaller sample complexity, which leads to the second question.\\
	(2) Can one derive lower bounds for identity/independence testing in this model, where each sample has its own distribution, related to the original distribution over types? In Section~\ref{apx_sec:figures} in the \myAppendix~we give more details as to possible venues to tackle both problems, relating them to the problem of learning a mixture-model of product distributions.
	}{
	\paragraph{Open Problem.} The above-mentioned testers are quite simple, and it is also quite likely that it is not optimal. In particular, we conjecture that the $\chi^2$-based test we experiment with is indeed a valid tester of sample complexity $T^{1.5}/(\eta\alpha)^2$. Furthermore, there could be other testers of even better sample complexity. Both the improved upper-bound and finding a lower-bound are two important open problem for this setting. We suspect that the way to tackle this problem is similar to the approach of Acharya et al~\mycite{AcharyaDK15}; however following their approach is difficult for two reasons. First, one would technically need to give a bound on the $\chi^2$-divergence between $\tfrac 1 {2\eta}\vec \theta$ and $\vec q$ (or $\vec f$). Secondly, and even more challenging, one would need to design a tester to determine whether the observed collection of random vectors in $\{1,-1\}^{T}$ is likely to come from the mechanism operating on a distribution close to $\tfrac 1 {2\eta}\vec \theta$. This distribution over vectors is a \emph{mixture model} of product-distributions (but not a product distribution by itself); and while each product-distribution is known (essentially each of the $T$ product distributions is a product of random $\{1,-1\}$ bits except for the $x$-coordinate which equals $1$ w.p. $\tfrac 1 2 +\eta$) it is the weights of the distributions that are either $\vec p$ or $\alpha$-far from  $\vec p$. Thus one route to derive an efficient tester can go through learning mixture models~--- and we suspect that is also a route for deriving lower bounds on the tester. A different route could be to follow the maximum-likelihood (or the loss-function $f$ from Equation~\eqref{eq:loss_func_nonsymmetricRR}), with improved convexity bounds proven directly on the $L_1/L_\infty$-norms.
	}
	
	\onlyconf{\vspace{-0.2cm}}
	\subsection{Experiment: Proposed $\chi^2$-Based Testers}
	\label{subsec:experiment}
	
	Following the derivations in the proof of Theorem~\ref{thm:identity_testing_nonsymmetric_RR}, we can see that ${\rm Var}(\vec \theta) = \tfrac 1 n \left(I - 4\eta^2 \diag( \vec f^2 )\right)$. As ever, we assume $\epsilon$ is a small constant and as a result the variance in $2\eta\vec f$ (which is approximately $\tfrac{4\eta^2}n \diag(\vec p)$) is significantly smaller than the variance of $\vec \theta$. This allows us to use the handwavey approximation $\vec f \approx \vec p$, and argue that we have the approximation ${\rm Var}(\vec \theta) \approx \tfrac 1 n \left(I - 4\eta^2 \diag( \vec p^2 )\right)\stackrel{\rm def}{=}\tfrac 1 nM$.
	\onlyfull{
		
	} Central Limit Theorem thus give that
	\myinlineequationdot{{\sqrt{n} M^{-1/2} ( \vec{\theta}-2\eta \vec p) \stackrel{n\to\infty}\rightarrow {\cal N}(0, I)}} Therefore, it stands to reason that the norm of the LHS is distributed like a $\chi^2$-distribution, namely,
	\onlyconf{\vspace{-0.2cm}}
	\[  P(\vec{\theta}) \stackrel{\rm def}{=}~~~n \sum_{x\in \X} \frac{(\theta(x)-2\eta \cdot p(x))^2}{1-4\eta^2 p(x)^2} \stackrel{n\to\infty}{\rightarrow} \chi^2_T\onlyconf{\vspace{-0.3cm}}\]
	Our experiment is aimed at determining whether $P(\vec\theta)$ can serve as a test statistic and assessing its sample complexity.
	
	\paragraph{Setting and Default Values.} We set a true ground distribution on $T$ possible types, $\vec p$. We then pick a distribution $\vec q$ which is $\alpha$-far from $\vec p$ using the counter example of Paninski~\mycite{Paninski08}: we pair the types and \emph{randomly} move $\tfrac {2\alpha}T$ probability mess between each pair of matched types.\onlyfull{\footnote{(Nit-picking) For odd $T$ we ignore the last type and shift $\tfrac {2\alpha} {T-1}$ mass.}} We then generate $n$ samples according to $\vec q$, and apply the non-symmetric $\epsilon$-differentially private mechanism of~\cite{BassilyNST17}. Finally, we aggregate the suitable vectors to obtain our estimator $\vec{\theta}$ and compute $P(\vec \theta)$. If we decide to {\tt accept/reject} we do so based on comparison of $P$ to the $\tfrac 2 3$-quantile of the $\chi^2_T$-distribution, so that in the limit we {\tt reject} only w.p. $1/3$ under the null-hypothesis. We repeat this \emph{entire} process $t$ times.
	\toggle{ We have set the default values}{
		
	Unless we vary a particular parameter, its value is set to the following defaults:} $T=10$, $\vec p = \vec u_T$ (uniform on $[T]$), $\alpha = 0.2$, $n=1000$, $\epsilon = 0.25$ and therefore $\eta = \tfrac 1 2 \tfrac {e^\epsilon-1}{e^\epsilon+1}$, and $t = 10000$. 
	
	\paragraph{Experiment 1: Convergence to the $\chi^2$-distribution in the null case.} First we ask ourself whether our approximation, denoting $P(\vec \theta) \approx \chi^2_T$ is correct when indeed $\vec p$ is the distribution generating the signals. To that end, we set $\alpha=0$ (so the types are distributed according to $\vec p$) and  plot the $t$ empirical values of $P$ we in our experiment, varying both the sample size $n \in \{10,100,1000,10000\}$ and the domain size~$T~\in~\{10,25,50,100\}$.\\
	The \emph{results} are consistent --- $P$ is distributed like a $\chi^2_T$-distribution. Indeed, the mean of the $t$ sample points is~$\approx T$ (the mean of a $\chi^2_T$-distribution). \onlyfull{The only thing we did find (somewhat) surprising is that even for fairly low values of $n$ the empirical distribution mimics quite nicely the asymptotic $\chi^2$-distribution. }The results themselves appear in Figure~\ref{fig:exp1} in the \myAppendix, Section~\ref{apx_sec:figures}.
		
	\paragraph{Experiment 2: Divergence from the $\chi^2$-distribution in the alternate case.} Secondly, we asked whether $P$ can serve as a good way to differentiate between the null hypothesis (the distribution over the types is derived from $\vec p$) and the alternative hypothesis (the distribution over the types if $\geq \alpha$-far from $\vec p$). We therefore ran our experiment while varying $\alpha$ (between $0.25$ and $0.05$) and increasing $n$.\\
	Again, \onlyfull{non surprisingly,} the \emph{results} show that the distribution does shift towards higher values as $n$ increases. \onlyfull{For low values of $n$ the distribution of outputs does seem to be close to the $\chi^2$-distribution, but as $n$ grows, the shift towards higher means begins. }The results are given in Figure~\ref{fig:exp2} in the \myAppendix, Section~\ref{apx_sec:figures}.
	
	\paragraph{Experiment 3: Sample Complexity.} Next, we set to find the required sample complexity for rejection. We fix the $\alpha$-far distribution from $\vec p$, and first do binary search to hone on an interval $[n_L, n_U]$ where the empirical rejection probability is between $30\%-35\%$; then we equipartition this interval and return the $n$ for which the empirical rejection probability is the closest to $33\%$. We repeat this experiment multiple times, each time varying just one of the 3 most important parameters, $T$, $\alpha$ and $\epsilon$. We maintain two parameters at default values, and vary just one parameter: $T~\in~\{5,10,15,..,100\}, \alpha~\in~\{ 0.05,0.1,0.15,...,0.5\}$, $ \epsilon~\in~\{0.05,0.1,0.15,...,0.5\}$.\\
	The \emph{results} are shown in Figure~\ref{fig:exp3}, where next to each curve we plot the curve of our conjecture in a dotted line.\footnote{We plot the dependency on $\alpha$ and $\epsilon$ on the same plot, as both took the same empirical values.}	We conjecture initially that $n \propto T^{c_T}\cdot \alpha^{c_\alpha} \cdot \epsilon^{c_\epsilon}$. And so, for any parameter $\xi \in \{T,\alpha,\epsilon\}$, if we compare two experiments $i,j$ that differ only on the value of this parameter and resulted in two empirical estimations $N_i, N_j$ of the sample complexity, then we get that $c_\xi \approx \frac{\log( N_i/N_j )}  {\log( \xi_i/\xi_j )}$. And so for any $\xi \in \{T,\alpha,\epsilon\}$ we take the median over of all pairs of $i$ and $j$ and we get the empirical estimations of $c_\epsilon = -1.900793, c_\alpha = -1.930947$ and $c_T = 1.486957$. This leads us to the conjecture that the actual sample complexity according to this test is $\frac {T^{1.5}}{\alpha^2\epsilon^2}$.

	\paragraph{Open Problem.} Perhaps even more interesting, is the experiment we wish we could have run: a $\chi^2$-based independence testing. Assuming the distribution of the type is a product distribution $\bar{\vec p} = \vec p^1 \times ... \times \vec p^d$, the proof of Theorem~\ref{thm:independence_nonsymmetricRR} shows that for each feature $j$ we have ${\rm Var}(\vec \theta^j - \vec p^j) \approx \tfrac 1 {4\eta^2 n} \tfrac T {T^j} I_{\X^j}$. Thus $4\eta^2 n \tfrac {T^j}T \|\vec{\theta}^j-\vec p^j\|^2 \stackrel{n\to\infty}{\to} \chi^2_{T^j}$. However, the $d$ estimators $\vec \theta^j$ are not independent, so it is \emph{not true} that $\sum_j 4\eta^2 n \tfrac {T^j}T \|\vec{\theta}^j-\vec p^j\|^2 \stackrel{n\to\infty}{\to} \chi^2_{\sum_j T^j}$. Moreover, even if the estimators of the marginals were independent\toggle{,\footnote{E.g. by assigning each example $i$ to one of the $d$ estimators, costing only $d=\log(T)$ factor in sample complexity}}{(say, by assigning each example $i$ to one of the $d$ estimators, costing only $d=\log(T)$ factor in sample complexity),} we are still unable to determine the asymptotic distribution of $\|\bar{\vec \theta}-\bar{\vec p}\|^2$ (only a bound, scaled by $O(\max_j T_j)$, using Proposition~\ref{apx_pro:l1norm_tensor} in the \myAppendix), let alone the asymptotic distribution of $\|\tfrac 1 {2\eta}\vec{\theta} - \bar{\vec{\theta}}\|^2$.
	
	Nonetheless, we did empirically measure the quantity $Q(\vec\theta) \stackrel{\rm def}=n \sum_x \frac{  (\tfrac 1 {2\eta} \theta(x) - \bar{\theta}(x))^2} {\bar \theta(x)}$ under the null ($\alpha=0$) and the alternative ($\alpha=0.25$) hypothesis with $n = 25,000$ samples in each experiment. The results (given in Figure~\ref{fig:exp4} in the \myAppendix) show that the distribution of $Q$~--- albeit not resembling a $\chi^2$-distribution~--- is different under the null- and the alternative-hypothesis, so we suspect that there's merit to using this quantity as a tester. We thus leave the design of a $\chi^2$-based statistics for independence in this model as an open problem. 

	\toggle{\begin{figure}[h]}{\begin{figure}[t]}
		\centering	\onlyconf{\vspace{-0.27cm}}	\onlyfull{\hspace{-1cm}}
		\begin{subfigure}[h]{0.48\textwidth}
			\toggle{\includegraphics[scale=0.305]{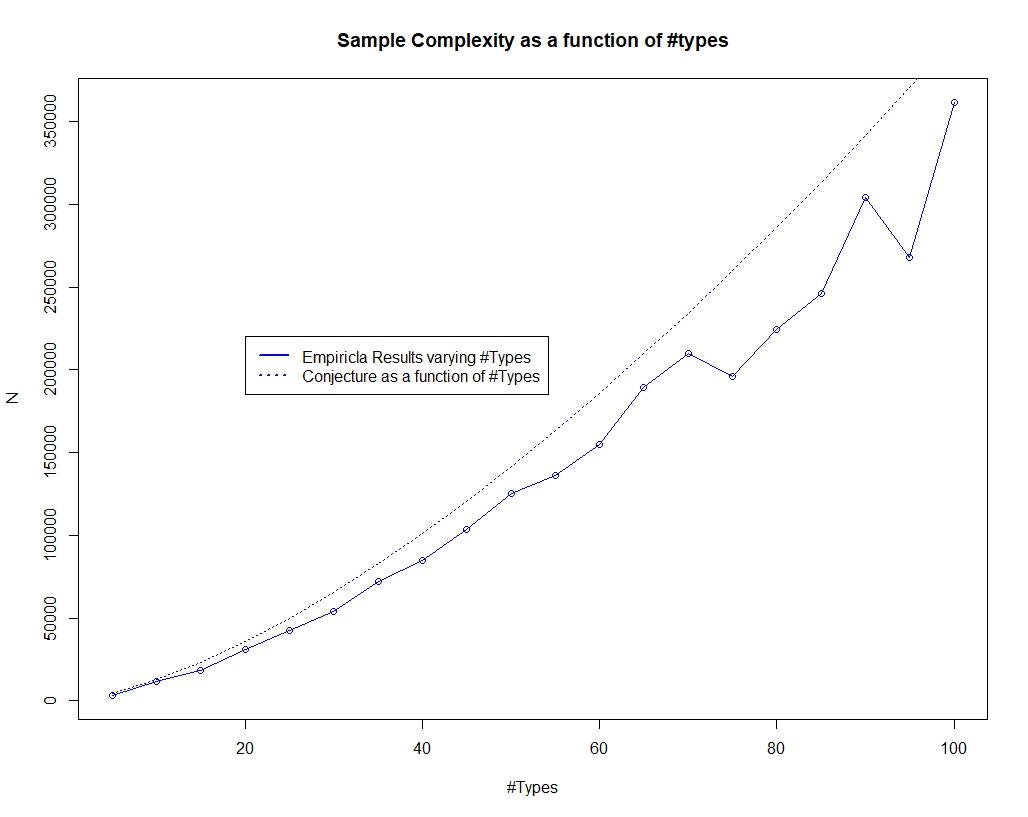}}{\includegraphics[scale=0.305]{exp3_T.jpeg}}
		\end{subfigure}
		\begin{subfigure}[h]{0.48\textwidth}
			\toggle{\includegraphics[scale=0.305]{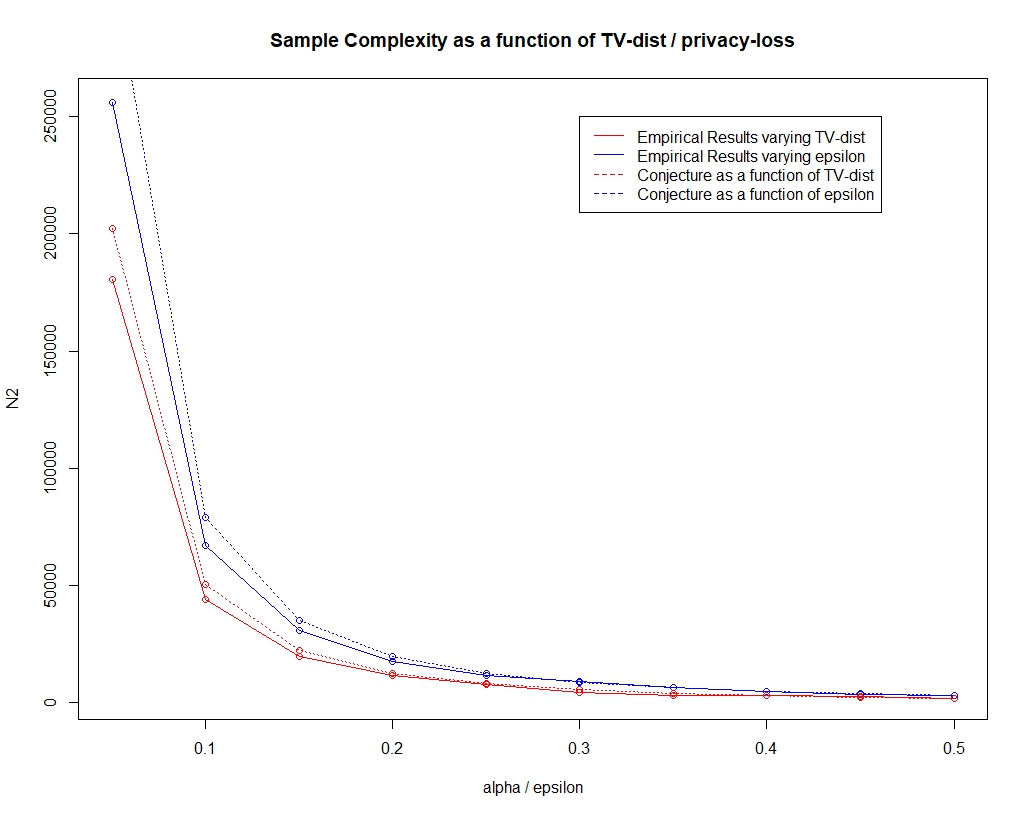}}{\includegraphics[scale=0.305]{exp3_alpha_epsilon.jpeg}}
		\end{subfigure}
		\centering
		\caption{ \label{fig:exp3} Empirical sample-complexity to have the tester reject w.p.$\sim 2/3$ under the alternative hypothesis.\onlyfull{\\  {\small (Best seen in color) }  We used binary search to zoom in on a sample complexity under which the rejection probability is $\approx 2 / 3$. We maintained the default value and only varied one parameters. (Both $\alpha$ and $\epsilon$ take the same empirical values, so we present those results in the same plot.) Next to each curve we present our conjecture for the required sample complexity: $\frac {T^{1.5}}{\alpha^2\epsilon^2}$ (dotted line).} }
	\end{figure}
	
	\onlyconf{\newpage}
	\bibliography{paper}

\newcommand{\etalchar}[1]{$^{#1}$}
\begin{thebibliography}{HKKA06}

\bibitem[ADK15]{AcharyaDK15}
Jayadev Acharya, Constantinos Daskalakis, and Gautam Kamath.
\newblock Optimal testing for properties of distributions.
\newblock In {\em {NIPS}}, pages 3591--3599. 2015.

\bibitem[Agr03]{Agresti03}
A.~Agresti.
\newblock {\em Categorical Data Analysis}.
\newblock Wiley Series in Probability and Statistics. 2003.

\bibitem[App17]{Apple17}
Differential Privacy~Team Apple.
\newblock Learning with privacy at scale.
\newblock {\em Apple Machine Learning Journal}, 1(8), 2017.
\newblock available on
  \url{http://machinelearning.apple.com/2017/12/06/learning-with-privacy-at-scale.html}.

\bibitem[BNST17]{BassilyNST17}
Raef Bassily, Kobbi Nissim, Uri Stemmer, and Abhradeep~Guha Thakurta.
\newblock Practical locally private heavy hitters.
\newblock In {\em {NIPS}}, pages 2285--2293, 2017.

\bibitem[BS15]{BassilyS15}
Raef Bassily and Adam~D. Smith.
\newblock Local, private, efficient protocols for succinct histograms.
\newblock In {\em {STOC}}, pages 127--135, 2015.

\bibitem[CbCG02]{CesaCG02}
Nicol\`{o} Cesa-bianchi, Alex Conconi, and Claudio Gentile.
\newblock On the generalization ability of on-line learning algorithms.
\newblock In {\em {NIPS}}, pages 359--366. 2002.

\bibitem[CDK17]{CaiDK17}
B.~Cai, C.~Daskalakis, and G.~Kamath.
\newblock Priv'it: Private and sample efficient identity testing.
\newblock In {\em {ICML}}, pages 635--644, 2017.

\bibitem[CH12]{ChaudhuriH12}
K.~Chaudhuri and D.~Hsu.
\newblock Convergence rates for differentially private statistical estimation.
\newblock In {\em {ICML}}, 2012.

\bibitem[DJW13a]{DuchiJW13}
J.~Duchi, M.~Jordan, and M.~Wainwright.
\newblock Local privacy and statistical minimax rates.
\newblock In {\em {FOCS}}, 2013.

\bibitem[DJW13b]{DuchiWJ13}
John~C. Duchi, Michael~I. Jordan, and Martin~J. Wainwright.
\newblock Local privacy and minimax bounds: Sharp rates for probability
  estimation.
\newblock In {\em {NIPS}}, pages 1529--1537, 2013.

\bibitem[DL09]{DworkL09}
C.~Dwork and J.~Lei.
\newblock Differential privacy and robust statistics.
\newblock In {\em {STOC}}, 2009.

\bibitem[DR14]{DworkR14}
Cynthia Dwork and Aaron Roth.
\newblock {\em The Algorithmic Foundations of Differential Privacy}.
\newblock Foundations and Trends in Theoretical Computer Science, NOW
  Publishers, 2014.

\bibitem[DSZ15]{DworkSZ15}
C.~Dwork, W.~Su, and L.~Zhang.
\newblock Private false discovery rate control.
\newblock {\em CoRR}, abs/1511.03803, 2015.

\bibitem[EPK14]{ErlingssonPK14}
{\'U}lfar Erlingsson, Vasyl Pihur, and Aleksandra Korolova.
\newblock Rappor: Randomized aggregatable privacy-preserving ordinal response.
\newblock In {\em {CCS}}, 2014.

\bibitem[HKKA06]{HazanKKA06}
Elad Hazan, Adam Kalai, Satyen Kale, and Amit Agarwal.
\newblock Logarithmic regret algorithms for online convex optimization.
\newblock In {\em {COLT}}, pages 499--513, 2006.

\bibitem[HMC05]{HoggMC05}
R.V. Hogg, J.W. McKean, and A.T. Craig.
\newblock {\em Introduction to Mathematical Statistics}.
\newblock Pearson education international. 2005.

\bibitem[JS13]{JohnsonS13}
Aaron Johnson and Vitaly Shmatikov.
\newblock Privacy-preserving data exploration in genome-wide association
  studies.
\newblock In {\em {KDD}}, pages 1079--1087, 2013.

\bibitem[KLN{\etalchar{+}}08]{KasiviswanathanLNRS08}
Shiva~Prasad Kasiviswanathan, Homin~K. Lee, Kobbi Nissim, Sofya Raskhodnikova,
  and Adam Smith.
\newblock What can we learn privately?
\newblock In {\em FOCS}, 2008.

\bibitem[KOPS15]{KamathOPS15}
Sudeep Kamath, Alon Orlitsky, Dheeraj Pichapati, and Ananda~Theertha Suresh.
\newblock On learning distributions from their samples.
\newblock In {\em {COLT}}, pages 1066--1100, 2015.

\bibitem[KV18]{KarwaV18}
Vishesh Karwa and Salil Vadhan.
\newblock Finite sample differentially private confidence intervals, 2018.

\bibitem[Pan08]{Paninski08}
Liam Paninski.
\newblock A coincidence-based test for uniformity given very sparsely sampled
  discrete data.
\newblock {\em {IEEE} Trans. Information Theory}, 54(10):4750--4755, 2008.

\bibitem[RVLG16]{RogersVLG16}
Ryan~M. Rogers, Salil~P. Vadhan, Hyun{-}Woo Lim, and Marco Gaboardi.
\newblock Differentially private chi-squared hypothesis testing: Goodness of
  fit and independence testing.
\newblock In {\em {ICML}}, pages 2111--2120, 2016.

\bibitem[She17]{Sheffet17}
Or~Sheffet.
\newblock Differentially private ordinary least squares.
\newblock In {\em ICML}, 2017.

\bibitem[Smi11]{Smith11}
A.~Smith.
\newblock Privacy-preserving statistical estimation with optimal convergence
  rates.
\newblock In {\em {STOC}}, 2011.

\bibitem[SV16]{SasonV16}
Igal Sason and Sergio Verd{\'{u}}.
\newblock f-divergence inequalities.
\newblock {\em {IEEE} Trans. Information Theory}, 62(11):5973--6006, 2016.

\bibitem[USF13]{UhlerSF13}
Caroline Uhler, Aleksandra~B. Slavkovic, and Stephen~E. Fienberg.
\newblock Privacy-preserving data sharing for genome-wide association studies.
\newblock {\em Journal of Privacy and Confidentiality}, 5(1), 2013.

\bibitem[Ver10]{Vershynin10}
Roman Vershynin.
\newblock Introduction to the non-asymptotic analysis of random matrices.
\newblock 2010.

\bibitem[VS09]{VuS09}
Duy Vu and Aleksandra Slavkovic.
\newblock Differential privacy for clinical trial data: Preliminary
  evaluations.
\newblock In {\em {ICDM}}, pages 138--143, 2009.

\bibitem[VV14]{ValiantV14}
Gregory Valiant and Paul Valiant.
\newblock An automatic inequality prover and instance optimal identity testing.
\newblock In {\em {FOCS}}, pages 51--60, 2014.

\bibitem[War65]{Warner65}
S.~Warner.
\newblock {Randomized Response: A Survey Technique for Eliminating Evasive
  Answer Bias}.
\newblock {\em Journal of the American Statistical Association}, 60(309), March
  1965.

\bibitem[WLK15]{WangLK15}
Y.~Wang, J.~Lee, and D.~Kifer.
\newblock Differentially private hypothesis testing, revisited.
\newblock {\em CoRR}, abs/1511.03376, 2015.

\bibitem[YFSU14]{YuFSU14}
F.~Yu, S.~Fienberg, A.~Slavkovic, and C.~Uhler.
\newblock Scalable privacy-preserving data sharing methodology for genome-wide
  association studies.
\newblock {\em Journal of Biomedical Informatics}, 50:133--141, 2014.

\bibitem[Zin03]{Zinkevich03}
Martin Zinkevich.
\newblock Online convex programming and generalized infinitesimal gradient
  ascent.
\newblock In {\em {ICML}}, pages 928--936, 2003.

\end{thebibliography}
	
	\onlyconf{\newpage
	\ 
	\newpage}
	\appendix
	\onlyconf{\twocolumn[\center\baselineskip 18pt
		\toptitlebar{\Large\bf Locally Private Hypothesis Testing~---~Supplementary Material}\bottomtitlebar]  \thispagestyle{empty}}
	\section{Additional Claims}
	\label{apx_sec:claims}
	
	\begin{proposition}
		\label{pro:UB_2/3_norm}
		For any $a,b>0$ we have $(a+b)^{2/3} \leq a^{2/3}+b^{2/3}$.
	\end{proposition}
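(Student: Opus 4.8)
The plan is to prove the slightly more general fact that $t\mapsto t^{p}$ is subadditive on $(0,\infty)$ for every exponent $p\in(0,1]$, then specialize to $p=2/3$. The cleanest route is via a normalization argument. First I would set $s=a+b$ and write $u=a/s$, $v=b/s$, so that $u,v\in(0,1)$ and $u+v=1$. The statement $(a+b)^{2/3}\le a^{2/3}+b^{2/3}$ is, after dividing through by $s^{2/3}$, exactly equivalent to $1\le u^{2/3}+v^{2/3}$.

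The key step is the elementary inequality $t^{2/3}\ge t$ for all $t\in(0,1]$. I would justify this by noting $t^{2/3}\ge t\iff t^{2/3-1}\ge 1\iff t^{-1/3}\ge 1$, which holds since $t\le 1$ (and trivially extends to $t=0$). Applying this with $t=u$ and $t=v$ and summing gives $u^{2/3}+v^{2/3}\ge u+v=1$, which is the normalized inequality; multiplying back by $s^{2/3}$ yields the claim.

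Alternatively — and I would probably just mention this as the conceptual reason — the same fact follows from concavity of $t\mapsto t^{2/3}$ together with the value $0^{2/3}=0$: for a concave $f$ with $f(0)\ge 0$ one has $f(a)=f\bigl(\tfrac{a}{a+b}(a+b)+\tfrac{b}{a+b}\cdot 0\bigr)\ge \tfrac{a}{a+b}f(a+b)$, and symmetrically $f(b)\ge\tfrac{b}{a+b}f(a+b)$; adding the two gives $f(a)+f(b)\ge f(a+b)$.

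There is no real obstacle here; the only thing to be slightly careful about is the boundary behavior (the proposition assumes $a,b>0$, so $s>0$ and the division is legitimate, and $u,v$ lie strictly inside $[0,1]$, so the power inequality applies without fuss).
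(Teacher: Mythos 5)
Your proof is correct. It takes a genuinely different route from the paper's: the paper homogenizes by setting $t=a/b$ and studies the single-variable function $f(t)=(t+1)^{2/3}-t^{2/3}-1$, showing $f(0)=0$ and $f'(t)<0$ on $(0,\infty)$ to conclude $f(t)<0$, i.e.\ a calculus/monotonicity argument. You instead normalize by the sum, reducing the claim to $u^{2/3}+v^{2/3}\ge 1$ for $u+v=1$, and dispatch that with the purely algebraic observation $t^{2/3}\ge t$ on $(0,1]$; your secondary concavity argument is a third valid route. Your main argument buys two things over the paper's: it avoids differentiation entirely, and it visibly generalizes to subadditivity of $t\mapsto t^{p}$ for every $p\in(0,1]$ with no change (the paper's derivative computation is specific to the exponent, and in fact contains a small typo, writing $t^{1/3}$ where $t^{-1/3}$ is meant, though the subsequent comparison makes clear which is intended). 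The paper's argument, for its part, delivers the strict inequality $f(t)<0$ explicitly, but your normalization also gives strictness since $u,v<1$ forces $u^{-1/3},v^{-1/3}>1$; either way only the non-strict version is used downstream.
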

	\begin{proof}
		Let $f(t) \stackrel{\rm def}{=} (t+1)^{2/3} - t^{2/3}-1$. Clearly, $f(0)=0$. Moreover, $f'(t) = \tfrac 2 3\left( (t+1)^{-1/3} - t^{1/3} \right)$. Since $t+1 > t > 0$ it follows that $(t+1)^{-1/3} < t^{-1/3}$ and so $f'(t)< 0$ for any $t \in (0,\infty)$. Therefore, for any $t>0$ we have $f(t)<f(0)=0$. Fix $a,b>0$ and now we have:
		\begin{align*}
		0 &> \left(\tfrac a b + 1\right)^{2/3} - \left(\tfrac a b\right)^{2/3} - 1 = \left(  \tfrac {a+b} b\right)^{2/3} -  \left(\tfrac a b\right)^{2/3} - 1
		\end{align*}
		hence $a^{2/3}+b^{2/3} > (a+b)^{2/3}$.
	\end{proof}
	\begin{proposition}
		\label{pro:bound_3/2_norm}
		For any $a,b>0$ we have $(a+b)^{3/2} = \Theta\left(a^{3/2} + b^{3/2} \right)  $.
	\end{proposition}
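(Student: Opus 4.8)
The plan is to prove the two-sided estimate
\[ \tfrac12\left(a^{3/2}+b^{3/2}\right) \;\le\; (a+b)^{3/2} \;\le\; 2\sqrt{2}\,\left(a^{3/2}+b^{3/2}\right) \qquad\text{for all }a,b>0, \]
which is exactly the asserted $\Theta(\cdot)$ relation, with absolute constants on both sides.

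For the lower bound I would use that $t\mapsto t^{3/2}$ is increasing on $[0,\infty)$: since $a\le a+b$ and $b\le a+b$, raising to the power $3/2$ gives $a^{3/2}\le (a+b)^{3/2}$ and $b^{3/2}\le (a+b)^{3/2}$; adding these and dividing by $2$ yields $(a+b)^{3/2}\ge \tfrac12\left(a^{3/2}+b^{3/2}\right)$.

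For the upper bound I would first note that $a+b\le 2\max\{a,b\}$, and then again by monotonicity of $t\mapsto t^{3/2}$,
\[ (a+b)^{3/2}\;\le\;\bigl(2\max\{a,b\}\bigr)^{3/2}\;=\;2\sqrt{2}\,\bigl(\max\{a,b\}\bigr)^{3/2}\;\le\;2\sqrt{2}\,\left(a^{3/2}+b^{3/2}\right), \]
the last step because $\bigl(\max\{a,b\}\bigr)^{3/2}$ is one of the two nonnegative summands on the right.

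The argument is entirely elementary, so I do not expect any genuine obstacle; the only real choice is how to phrase the upper bound, either via $a+b\le 2\max\{a,b\}$ as above or via convexity of $t\mapsto t^{3/2}$ (Jensen applied to $\tfrac{a+b}{2}$ gives the same constant), and the former is the shorter route. No results from earlier in the paper are needed.
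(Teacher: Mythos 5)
Your proof is correct and matches the paper's argument essentially verbatim: the upper bound via $a+b\le 2\max\{a,b\}$ with constant $2\sqrt2=\sqrt8$, and the lower bound via monotonicity of $t\mapsto t^{3/2}$ giving the factor $\tfrac12$. Nothing further is needed.
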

	\begin{proof}
		Clearly, due to the non-negativity of $a$ and $b$ we have $(a+b)^{3/2} \leq (2\max\{a,b\})^{3/2} \leq \sqrt{8}(a^{3/2}+b^{3/2})$. Similarly,  $a^{3/2}+b^{3/2} \leq 2(a+b)^{3/2}$.
	\end{proof}
	\begin{claim}
		\label{clm:concentration_gram_matrix}
		Fix two constants $0<\eta<\mu<1$. Let $\vec x_1, \vec x_2, ..., \vec x_n$ be a collection of $n$ vectors in $\R^d$ whose entries are generated iid and uniformly among $\{\mu-\eta,\mu+\eta \}$. If $n = \Omega(\frac{d^2\log^2(d/\delta\eta)}{\eta^2})$ then for any unit-length vector $\vec u \in \R^d$ we have $\tfrac 1 n \sum_i (\vec x_i\T\vec u)^2 > \eta^2/3$. 
	\end{claim}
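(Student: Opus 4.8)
I would first rewrite each sample as a deterministic shift plus a Rademacher vector: $\vec x_i=\mu\vec 1+\eta\vec\xi_i$ with $\vec\xi_i\in\{-1,+1\}^d$ having i.i.d.\ uniform entries, so that $\tfrac1n\sum_i(\vec x_i\T\vec u)^2=\vec u\T\hat\Sigma\vec u$ where $\hat\Sigma=\tfrac1n\sum_i\vec x_i\vec x_i\T=\mu^2\vec1\vec1\T+\mu\eta\big(\vec1\bar{\vec\xi}\T+\bar{\vec\xi}\vec1\T\big)+\eta^2\hat M$, with $\bar{\vec\xi}=\tfrac1n\sum_i\vec\xi_i$ and $\hat M=\tfrac1n\sum_i\vec\xi_i\vec\xi_i\T$. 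The point of this decomposition is that the quantity to be lower bounded is exactly $\lambda_{\min}(\hat\Sigma)$, while $\mathbb{E}[\hat\Sigma]=\mu^2\vec1\vec1\T+\eta^2 I\succeq\eta^2 I$ (the cross term vanishes in expectation since $\mathbb{E}[\vec\xi_i]=0$). The key algebraic step is to handle the cross term pathwise: for any unit $\vec u$, writing $a=\vec1\T\vec u$ and $b=\bar{\vec\xi}\T\vec u$, we have $\mu^2a^2+2\mu\eta ab=(\mu a+\eta b)^2-\eta^2 b^2\ge-\eta^2 b^2\ge-\eta^2\|\bar{\vec\xi}\|^2$ by Cauchy--Schwarz, which yields the matrix inequality $\hat\Sigma\succeq\eta^2\hat M-\eta^2\|\bar{\vec\xi}\|^2 I\succeq\eta^2\big(\lambda_{\min}(\hat M)-\|\bar{\vec\xi}\|^2\big)I$, valid for \emph{all} $\vec u$ simultaneously. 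So the whole claim reduces to two concentration facts: (i) $\|\bar{\vec\xi}\|^2\le\tfrac1{12}$ and (ii) $\lambda_{\min}(\hat M)\ge\tfrac12$, which together give $\tfrac1n\sum_i(\vec x_i\T\vec u)^2\ge\tfrac5{12}\eta^2>\tfrac{\eta^2}{3}$ for every unit $\vec u$.

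Fact (i) is easy: each coordinate of $\bar{\vec\xi}$ is an average of $n$ i.i.d.\ signs, so a Hoeffding bound plus a union bound over the $d$ coordinates controls $\|\bar{\vec\xi}\|_\infty$ (hence $\|\bar{\vec\xi}\|^2\le d\|\bar{\vec\xi}\|_\infty^2$) once $n=\Omega(d\log(d/\delta))$. Fact (ii) is where the real work lies: $\hat M$ is an average of i.i.d.\ rank-one PSD matrices $\vec\xi_i\vec\xi_i\T$ with $\mathbb{E}[\hat M]=I$ and $\lambda_{\max}(\vec\xi_i\vec\xi_i\T)=\|\vec\xi_i\|^2=d$, so the matrix Chernoff lower-tail bound gives $\Pr[\lambda_{\min}(\hat M)\le\tfrac12]\le d\cdot c^{\,n/d}$ for an absolute constant $c<1$, which is below $\delta$ when $n=\Omega(d\log(d/\delta))$. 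Combining, the claim in fact holds already for $n=\Omega(d\log(d/\delta))$, which is comfortably subsumed by the stated hypothesis.

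I expect Fact (ii) --- a uniform (over directions) lower-tail bound on the smallest eigenvalue of a Wishart-type matrix --- to be the main obstacle; the cross-term cancellation above is exactly what lets us reduce to this single clean sub-problem instead of analyzing $\hat\Sigma-\mathbb{E}[\hat\Sigma]$ as a whole (whose $\vec 1$-direction fluctuates on the order of $d$, which is harmless here but annoying to track). If one prefers to avoid matrix-concentration machinery, the same reduction can be proved by an $\varepsilon$-net over the sphere together with a scalar Bernstein bound for each net direction: for fixed $\vec v$ the summand $(\vec x_i\T\vec v)^2$ lies in $[0,O(d)]$, has mean $\ge\eta^2$ and variance $O(d\eta^2)$ (using $(\vec1\T\vec v)^2\le d$ and the bounded fourth moment of a Rademacher quadratic form), giving failure probability $\exp(-\Omega(n\eta^2/d))$ per direction; choosing net resolution $\varepsilon=\Theta(\eta^2/d)$ (so the $O(d\varepsilon)$-Lipschitz dependence of the quadratic form on $\vec v$ is negligible relative to $\eta^2$) makes the $\exp(O(d\log(d/\eta)))$-size union bound go through precisely when $n=\Omega\big(d^2\log(d/\delta\eta)/\eta^2\big)$, matching the form stated in the claim.
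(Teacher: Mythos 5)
Your proof is correct and takes a genuinely different — and in fact tighter — route than the paper's. Both proofs start from the decomposition $\vec x_i=\mu\vec 1+\eta\vec\xi_i$ and split $\vec u\T\hat\Sigma\vec u$ into the three terms $\mu^2(\vec1\T\vec u)^2$, $2\mu\eta(\vec1\T\vec u)(\bar{\vec\xi}\T\vec u)$, and $\eta^2\vec u\T\hat M\vec u$. The paper then simply drops the first (nonnegative) term and bounds the cross term by $-2\mu\eta\,|\vec1\T\vec u|\,|\bar{\vec\xi}\T\vec u|\ge-2\eta\sqrt d\,|\bar{\vec\xi}\T\vec u|$; for this to be smaller in magnitude than the $\eta^2\lambda_{\min}(\hat M)$ contribution, the paper is forced to prove the very strong uniform bound $|\bar{\vec\xi}\T\vec u|\le\eta/(6\sqrt d)$, which it does via an explicit cover of the sphere plus Hoeffding, and this is precisely what drives its $n=\Omega(d^2\log^2(d/\delta\eta)/\eta^2)$ requirement. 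Your completing-the-square step $\mu^2a^2+2\mu\eta ab=(\mu a+\eta b)^2-\eta^2b^2\ge-\eta^2b^2$ absorbs the $\mu^2a^2$ term rather than discarding it, and thus replaces the leftover cross term by $-\eta^2b^2\ge-\eta^2\|\bar{\vec\xi}\|^2$; this needs only $\|\bar{\vec\xi}\|^2\le\tfrac1{12}$, a constant rather than an $\eta/\sqrt d$-level bound, which a coordinatewise Hoeffding plus union bound gives already at $n=\Omega(d\log(d/\delta))$. Combined with the matrix Chernoff lower-tail bound for $\lambda_{\min}(\hat M)\ge\tfrac12$ (also $n=\Omega(d\log(d/\delta))$), you obtain $\hat\Sigma\succeq\eta^2\big(\lambda_{\min}(\hat M)-\|\bar{\vec\xi}\|^2\big)I\succeq\tfrac{5}{12}\eta^2I$ uniformly, which strictly subsumes the claimed $n=\Omega(d^2\log^2(d/\delta\eta)/\eta^2)$; your observation that the claim's stated sample-size threshold is overly conservative is correct, and the alternative net-plus-Bernstein sketch you append shows how one would recover the paper's cruder threshold if matrix-concentration tools were disallowed. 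Two further minor points in your favor: your argument does not need the bound $\mu\le1$ that the paper uses when estimating the cross term, and it delivers $\tfrac{5}{12}\eta^2>\tfrac{\eta^2}{3}$ strictly, whereas the paper's final display actually only reaches $\eta^2/3$ with equality.
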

	\begin{proof}
		Denote $\vec y_1, ..., \vec y_n$ the collection of $n$ vectors such that $\vec y_i = \tfrac 1 \eta \left( \vec x_i - \mu \vec 1\right)$. Therefore, for each $\vec y_i$, its coordinates are chosed iid an uniformly among $\{-1,1\}$. Therefore $\E[\vec y_i]=\vec 0$, and $\E[\vec y_i \vec y_i\T] = I_{d\times d}$. As a $\tfrac {\eta}{12\sqrt d}$-cover of the unit-sphere in $\R^d$ contain $O(\eta^{-d\log(d)})$ points (see Vershynin~\mycite{Vershynin10} for proof), standard Hoeffding-and-union bound yield that \myinlineequationdot{\Pr\left[ \exists \vec u \textrm{ in the cover s.t. } \left|\left(\tfrac 1 n \sum_i \vec y_i \right)\T \vec u - 0\right| >  \tfrac{\eta}{12\sqrt d}\right] \leq (\tfrac 1 \eta)^{100d\log(d)}\cdot 2\exp(-2n \tfrac{\eta^2}{144d}) < \tfrac{\delta}{2}} The triangle inequality thus assures us that for any unit-length vector in $\R^d$ we have $\left|\left(\tfrac 1 n \sum_i \vec y_i \right)\T \vec u\right| \leq  \tfrac{\eta}{6\sqrt d}$.
		Moreover,	standard matrix-concentration results~\cite{Vershynin10} on the spectrum of the matrix $\tfrac 1 n \sum_i \vec y_i \vec y_i\T$ give that w.p. $1-\tfrac\delta 2$ we have that for any unit-length vector $\vec u$ it holds that 
		\[  \vec u \T \left( \tfrac 1 n \sum_i \vec y_i \vec y_i\T - I  \right) \vec u = O\left(  \frac {\sqrt d + \log(1/\delta)} {\sqrt n}  \right) \leq \frac{1}{3}   \] by our choice of $n$. 
		
		Assume both events hold. As for each $i$ we have $\vec x_i = \mu \vec 1 + \eta \vec y_i$ then it holds that $\tfrac 1 n\sum_i \vec x_i \vec x_i\T  = \mu^2 1_{d\times d} + \frac{\mu\eta}n ( \sum_i \vec 1 \vec y_i\T + \vec y_i \vec 1\T ) +  \frac{\eta^2}n \sum_i \vec y_i \vec y_i\T$, thus for each unit-length $\vec u$ we have 
		\begin{align*}
		\frac 1 n \sum_i (\vec x_i\T \vec u)^2 &= \mu^2 (\vec 1\T \vec u)^2 +  {2\mu\eta} (\vec 1\T \vec u)\cdot \left(\tfrac 1 n \sum_i \vec y_i \right)\T \vec u 
		\onlyconf{\cr& ~~~~~~~~~}+ \eta^2 \vec u\T\left(\tfrac 1 n \sum_i \vec y_i \vec y_i\T \right)\vec u
		\cr &\geq 0  - 2\cdot 1 \cdot\eta \sqrt{d}  \cdot \tfrac{\eta}{6\sqrt d} + \eta^2 \cdot \tfrac 2 3= \eta^2/3
		\end{align*}
	\end{proof}
	
	\begin{proposition}
		\label{apx_pro:l1norm_tensor}
		Let $\|\cdot\|$ be any norm satisfying $\|\vec u~\otimes~\vec v\| = \|\vec u\|\|\vec v\|$ (such as the $L_p$-norm fro any $p\geq 1$).
		Let $\vec x_1, \vec x_2,\vec y_1,\vec y_2$ be vectors whose norms are all bounded by some $c$. Then $\| \vec x_1 \otimes \vec y_1 - \vec x_2\otimes \vec y_2\| \leq c\left(\|\vec x_1-\vec x_2\|+\|\vec y_1-\vec y_2\|\right)$.
	\end{proposition}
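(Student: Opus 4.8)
The plan is to reduce everything to the triangle inequality via a standard add-and-subtract (telescoping) trick on the tensor product. First I would insert an intermediate term and use bilinearity of $\otimes$ to write
\[
\vec x_1\otimes\vec y_1 - \vec x_2\otimes\vec y_2
= \vec x_1\otimes(\vec y_1-\vec y_2) + (\vec x_1-\vec x_2)\otimes\vec y_2 .
\]
Then I would apply the triangle inequality for $\|\cdot\|$ to bound the norm of this sum by $\|\vec x_1\otimes(\vec y_1-\vec y_2)\| + \|(\vec x_1-\vec x_2)\otimes\vec y_2\|$, and invoke the hypothesis $\|\vec u\otimes\vec v\| = \|\vec u\|\,\|\vec v\|$ on each of the two terms to get $\|\vec x_1\|\,\|\vec y_1-\vec y_2\| + \|\vec x_1-\vec x_2\|\,\|\vec y_2\|$. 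Finally, using $\|\vec x_1\|\le c$ and $\|\vec y_2\|\le c$ yields the claimed bound $c\left(\|\vec x_1-\vec x_2\| + \|\vec y_1-\vec y_2\|\right)$.

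There is essentially no obstacle here. The only minor choice is which intermediate term to route through: one could equally pass through $\vec x_2\otimes\vec y_1$ instead, producing the symmetric estimate with $\|\vec x_2\|$ and $\|\vec y_1\|$ in place of $\|\vec x_1\|$ and $\|\vec y_2\|$; since all four norms are at most $c$, either routing gives the stated inequality. One should also note that the parenthetical assertion about $L_p$-norms is justified separately: the entries of $\vec u\otimes\vec v$ are the products $u_iv_j$, so $\sum_{i,j}|u_iv_j|^p = \bigl(\sum_i|u_i|^p\bigr)\bigl(\sum_j|v_j|^p\bigr)$, and taking $p$-th roots gives $\|\vec u\otimes\vec v\|_p = \|\vec u\|_p\,\|\vec v\|_p$; this is needed only to legitimize the example in the statement, not the proposition itself, which holds for any norm with the stated multiplicative property.
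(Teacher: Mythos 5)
Your proof is correct and matches the paper's argument exactly: both insert the intermediate term $\vec x_1\otimes\vec y_2$ (equivalently, apply bilinearity), use the triangle inequality, invoke the multiplicative property of the norm, and bound by $c$. The remark about the alternative routing through $\vec x_2\otimes\vec y_1$ and the verification for $L_p$-norms are fine but not needed.
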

	\begin{proof}
		\toggle{
			\begin{align*}
			&\| \vec x_1 \otimes \vec y_1 - \vec x_2\otimes \vec y_2\|
			\cr &~~= \| \vec x_1 \otimes \vec y_1 - \vec x_1 \otimes \vec y_2 + \vec x_1\otimes \vec y_2 - \vec x_2\otimes \vec y_2\|
			\cr &~~ \leq  \| \vec x_1 \otimes \vec y_1 - \vec x_1 \otimes \vec y_2\| + \|\vec x_1\otimes \vec y_2 - \vec x_2\otimes \vec y_2\|
			\cr &~~= \|\vec x_1\| \cdot \|\vec y_1 -\vec y_2\| + \|\vec y_2\| \cdot \|\vec x_1-\vec x_2\| 
			\cr &~~\leq c\left(\|\vec x_1 -\vec x_2\| + \|\vec y_1-\vec y_2\|\right)\qquad\qquad\qquad\qedhere
			\end{align*}
			}{
		\begin{align*}
		 \| \vec x_1 \otimes \vec y_1 - \vec x_2\otimes \vec y_2\|
		 & = \| \vec x_1 \otimes \vec y_1 - \vec x_1 \otimes \vec y_2 + \vec x_1\otimes \vec y_2 - \vec x_2\otimes \vec y_2\|
		 \cr & \leq  \| \vec x_1 \otimes \vec y_1 - \vec x_1 \otimes \vec y_2\| + \|\vec x_1\otimes \vec y_2 - \vec x_2\otimes \vec y_2\|
		 \cr &= \|\vec x_1\| \cdot \|\vec y_1 -\vec y_2\| + \|\vec y_2\| \cdot \|\vec x_1-\vec x_2\| 
		 \cr &\leq c\left(\|\vec x_1 -\vec x_2\| + \|\vec y_1-\vec y_2\|\right)\qquad\qquad\qquad\qedhere
		 \end{align*}
		}
	\end{proof}
	
	\onlyconf{
	\section{Missing Proofs: Symmetric Scheme}
	\label{apx_sec:proofs_RR}

	\begin{corollary}[Corollary~\ref{cor:close_form_solution} restated.]
		\corCloseFormSolution
	\end{corollary}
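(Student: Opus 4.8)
The plan is to eliminate $\vec p$ in favor of the auxiliary variable $\vec q = G\vec p$. Since all entries of $G$ are non-negative and $\|G\|_1 = 1$, the map $G$ sends a distribution on $\X$ to a distribution on $\S$; and the log-loss $f(\vec p) = -\sum_{s\in\S}\tfrac{n_s}{n}\log(\vec g_s\T\vec p)$ depends on $\vec p$ only through $\vec q$, because $q(s) = \vec g_s\T\vec p$. So I would rewrite the problem as minimizing $\phi(\vec q) = -\sum_{s} n_s\log q(s)$ over the set $G(H)$, which is a subset of the probability simplex on $\S$.

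First I would solve the \emph{relaxed} problem: minimize $\phi(\vec q)$ over the entire simplex $\{\vec q\geq 0:~\sum_s q(s) = 1\}$, ignoring the constraint $\vec q\in G(H)$. This is the classical multinomial-MLE computation: $\phi$ is separable and strictly convex on the relevant face, and a Lagrange-multiplier argument on $\sum_s q(s) = 1$ forces the minimizer to equate the ratios $n_s/q(s)$ across all $s$, hence $\vec q = \vec q^* = \langle n_s/n\rangle$. (When some $n_s = 0$ the corresponding term drops out and $q^*(s) = 0$ is consistent.) I expect this step to be routine.

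Next I would show that $\vec q^*$ is actually attainable from within $H$ — which is where the remaining hypotheses enter. By assumption $\big(G^\dagger\vec q^* + \ker(G)\big)\cap H \neq \emptyset$, so I may pick $\vec p = \vec p^* + \vec u \in H$ with $\vec p^* = G^\dagger\vec q^*$ and $\vec u\in\ker(G)$. Because $|\S|\leq|\X|$ and $G$ has full rank, its rows are linearly independent, so $GG^\dagger = I_{|\S|}$; hence $G\vec p = GG^\dagger\vec q^* + G\vec u = \vec q^*$, which confirms both that $\vec q^*\in G(H)$ and that $\vec q^*$ is a genuine distribution. Since this $\vec p$ already attains the global minimum of $\phi$ over the whole simplex, and $G(H)$ is contained in that simplex, $\vec p$ is a fortiori a minimizer of $\phi$ over $G(H)$, i.e. a maximum-likelihood hypothesis in $H$; and the argument applies verbatim to any $\vec u\in\ker(G)$ with $\vec p^*+\vec u\in H$, so the whole affine slice in the statement is optimal.

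The only place needing care — and the main (mild) obstacle — is the interface between the relaxed and the constrained problem: a priori the unconstrained optimizer $\vec q^*$ of $\phi$ might fail to lie in $G(H)$, in which case the closed form would be vacuous. The assumption $(G^\dagger\vec q^* + \ker(G))\cap H \neq \emptyset$ is precisely the condition ruling this out, and once one feasible $\vec p$ attaining the relaxed optimum is exhibited, optimality over $H$ follows with no further work. I would also double-check that $G^\dagger$ behaves as expected here, using that full rank with $|\S|\le|\X|$ means $G$ has linearly independent rows, so $G^\dagger = G\T(GG\T)^{-1}$ and $GG^\dagger = I$.
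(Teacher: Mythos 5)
Your proposal is correct and follows essentially the same route as the paper's proof: relax to the full simplex on $\S$, identify $\vec q^*$ as the unconstrained minimizer via the Lagrange/multinomial-MLE computation, and then use the hypothesis $(G^\dagger\vec q^*+\ker(G))\cap H\neq\emptyset$ together with $GG^\dagger = I$ (from full rank and $|\S|\le|\X|$) to exhibit a feasible $\vec p\in H$ attaining that global minimum. Your explicit framing of the relaxed-versus-constrained interface, and the remark about the $n_s=0$ terms, are slightly more careful than the paper's write-up but do not change the argument.
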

	\proofCorCloseFormSolution
	
	\begin{claim}[Claim~\ref{clm:likelihood_strongly_convexity_RR}]
		\clmLikelihoodStronglyConvexityRR
	\end{claim}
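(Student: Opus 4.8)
The plan is a direct computation of the Hessian of the log-loss $f$, followed by elementary estimates on the two randomized-response parameters $\rho$ and $\gamma$. First I would recall from Section~\ref{subsec:RR_hyp_testing} that under randomized response $\vec g_x\T\vec p = \rho + \gamma p(x)$, so that $f(\vec p) = -\tfrac1n\sum_{x\in\X} n_x\log(\rho+\gamma p(x))$. Differentiating twice, the gradient has $x$-th coordinate $-\gamma n_x/\big(n(\rho+\gamma p(x))\big)$ and the Hessian is the \emph{diagonal} matrix $\nabla^2 f = \diag\big(\langle \gamma^2 n_x/\big(n(\rho+\gamma p(x))^2\big)\rangle_{x\in\X}\big)$; since it is diagonal, its eigenvalues are exactly these diagonal entries, and the strong-convexity parameter of $f$ is the smallest of them.

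Next I would control the parameters. Since $0<\epsilon<1$ we have $\epsilon \le e^\epsilon-1\le 2\epsilon$, and because $\gamma=(e^\epsilon-1)\rho$ this yields $\epsilon\rho\le\gamma\le2\epsilon\rho$. As $\vec p$ ranges over the simplex we have $0\le p(x)\le 1$, so $\rho\le\rho+\gamma p(x)\le\rho(1+2\epsilon)$. Substituting these bounds into a single diagonal entry gives
\[\frac{\epsilon^2}{(1+2\epsilon)^2}\cdot\frac{n_x}{n}\ \le\ \frac{\gamma^2 n_x}{n(\rho+\gamma p(x))^2}\ \le\ 4\epsilon^2\cdot\frac{n_x}{n}.\]

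Taking the minimum over $x\in\X$ and using $\epsilon<1$ (so $(1+2\epsilon)^2<9$) shows that the least eigenvalue of $\nabla^2 f$ lies between $\tfrac{\epsilon^2}{9}\cdot\tfrac{\min_x n_x}{n}$ and $4\epsilon^2\cdot\tfrac{\min_x n_x}{n}$, which is exactly the claimed $\Theta\big(\epsilon^2\,\tfrac{\min_x n_x}{n}\big)$ bound. There is no real obstacle here beyond bookkeeping of constants; the one point worth a second of care is that the lower bound on $\rho+\gamma p(x)$ must be taken uniformly over the simplex (using only $p(x)\le 1$), and that the matching upper bound on the least eigenvalue — needed so that the $\Theta$ is honest — follows because for the type attaining $\min_x n_x$ the corresponding diagonal entry is also $O(\epsilon^2\,\min_x n_x/n)$.
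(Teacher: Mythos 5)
Your proof is correct and follows essentially the same route as the paper: compute the diagonal Hessian $\diag\big(\gamma^2 n_x/\big(n(\rho+\gamma p(x))^2\big)\big)$, bound $\gamma$ between $\epsilon\rho$ and $2\epsilon\rho$, and bound $\rho+\gamma p(x)$ between $\rho$ and $\rho(1+2\epsilon)$ to get the $\epsilon^2/9$ lower bound on the smallest eigenvalue. The one small improvement over the paper's write-up is that you also supply the matching $O(\epsilon^2\min_x n_x/n)$ upper bound (via the diagonal entry at the minimizing $x$), so the stated $\Theta$ is actually justified rather than only the lower direction, as in the paper.
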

	\proofClmLikelihoodStronglyConvexityRR
	
	\begin{corollary}[Corollary~\ref{cor:identity_testing_RR} restated.]
		\corIdentityTestingRR
	\end{corollary}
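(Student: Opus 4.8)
The plan is to use Theorem~\ref{thm:hyp_testing_RR} to reduce identity testing under randomized response to distinguishing $\varphi(H_0)=\{\rho\vec 1+\gamma\vec p\}$ from $\varphi(H_1)=\{\varphi(\vec q):d_{\rm TV}(\vec p,\vec q)\geq\alpha\}$, and then to invoke the tight identity-testing bounds of Valiant and Valiant~\mycite{ValiantV14} with reference distribution $\varphi(\vec p)$ and a rescaled distance parameter. The first observation is that the affine map $\varphi$ contracts total-variation distance by exactly $\gamma$: for any $\vec q$ with $d_{\rm TV}(\vec p,\vec q)\geq\alpha$ we have $d_{\rm TV}(\varphi(\vec p),\varphi(\vec q))=\tfrac\gamma 2\|\vec p-\vec q\|_1\geq\gamma\alpha$. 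Hence every member of $\varphi(H_1)$ is $\gamma\alpha$-far from $\varphi(\vec p)$, and running the Valiant--Valiant tester on the signals with parameters $\big(\varphi(\vec p),\gamma\alpha\big)$ correctly accepts/rejects using $\Theta\big(\|\varphi(\vec p)\|_{\frac 2 3}/(\gamma\alpha)^2\big)$ signals.

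The next step is to estimate $\|\varphi(\vec p)\|_{\frac 2 3}=\|\rho\vec 1+\gamma\vec p\|_{\frac 2 3}$. Since $\epsilon<1$, the definitions of $\rho$ and $\gamma$ give $\rho=\Theta(1/T)$ and $\gamma=\Theta(\epsilon/T)$. Bounding $\sum_x(\rho+\gamma p(x))^{2/3}$ from below by $\max\{T\rho^{2/3},\gamma^{2/3}\|\vec p\|_{\frac 2 3}^{2/3}\}$ and from above, via Proposition~\ref{pro:UB_2/3_norm}, by $T\rho^{2/3}+\gamma^{2/3}\|\vec p\|_{\frac 2 3}^{2/3}$, shows $\|\rho\vec 1+\gamma\vec p\|_{\frac 2 3}^{2/3}=\Theta\big(T\rho^{2/3}+\gamma^{2/3}\|\vec p\|_{\frac 2 3}^{2/3}\big)$. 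Plugging in the magnitudes of $\rho,\gamma$, raising to the $3/2$ power while splitting the sum using Proposition~\ref{pro:bound_3/2_norm}, and dividing by $(\gamma\alpha)^2$ gives a sample complexity of $\Theta\big(T^{2.5}/(\epsilon^2\alpha^2)+T\|\vec p\|_{\frac 2 3}/(\epsilon\alpha^2)\big)$. Finally, any distribution over $T$ types satisfies $\|\vec p\|_{\frac 2 3}\leq\sqrt T$, so the first term dominates and the claimed upper bound $O(T^{2.5}/(\epsilon^2\alpha^2))$ follows.

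The part I expect to require the most care is the matching lower bound, because $\varphi(H_1)$ is only a strict subset of $\{\vec r:d_{\rm TV}(\varphi(\vec p),\vec r)\geq\gamma\alpha\}$, so the Valiant--Valiant lower bound does not immediately apply. To handle this I would open up their hard instance: the ensemble of distributions that cannot be told apart from $\vec p$ with $o(\|\vec p\|_{\frac 2 3}/\alpha^2)$ samples is obtained by picking suitable perturbations $\Delta(x)$ and taking the distributions $\{p(x)\pm\Delta(x)\}_x$. Because $\varphi$ is affine and injective, this ensemble maps exactly onto $\{\rho+\gamma p(x)\pm\gamma\Delta(x)\}_x$, which is precisely the analogous perturbation ensemble around $\varphi(\vec p)$ with perturbation sizes scaled by $\gamma$; in particular each member lies in $\varphi(H_1)$. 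The Valiant--Valiant indistinguishability argument then applies verbatim to this ensemble and yields an $\Omega\big(\|\varphi(\vec p)\|_{\frac 2 3}/(\gamma\alpha)^2\big)=\Omega(T^{2.5}/(\epsilon^2\alpha^2))$ lower bound on the number of signals, matching the upper bound.
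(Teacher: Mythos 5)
Your proposal is correct and follows essentially the same route as the paper's proof: reduce via the affine map $\varphi$ to a Valiant--Valiant instance with distance $\gamma\alpha$, bound $\|\rho\vec 1+\gamma\vec p\|_{\frac 2 3}$ from both sides using the same two propositions, and handle the lower bound by observing that the hard perturbation ensemble $\{p(x)\pm\Delta(x)\}$ is carried by $\varphi$ into $\varphi(H_1)$. No gaps.
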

	\proofCorIdentityTestingRR
	
	\begin{lemma}[Lemma~\ref{lem:z-close-to-p} restated.]
		\lemZCloseToP
	\end{lemma}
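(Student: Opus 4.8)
The plan is to handle each feature separately and then tensorize. First note that since $\bar{\vec p}=\vec p^1\times\cdots\times\vec p^d$ is a product distribution and $\varphi$ merely mixes in the uniform distribution, the $j$-th marginal of the signal distribution $\varphi(\bar{\vec p})$ is exactly $\varphi_j(\vec p^j):=\tfrac{1-\gamma}{T^j}\vec 1+\gamma\vec p^j$ --- the same affine map $\varphi$ but over $\X^j$ (recall $\gamma=1-T\rho$ and $T\rho/T^j=(1-\gamma)/T^j$). Hence the projected signals $y_1^j,\dots,y_n^j$ are i.i.d.\ from $\varphi_j(\vec p^j)$, so the add-$1$ estimator guarantee of Kamath et al.~\mycite{KamathOPS15} applies: by a union bound over the $d$ features, with probability $\ge 8/9$ we have $d_{\chi^2}(\tilde{\vec z}^j,\varphi_j(\vec p^j))=O(dT^j/n)$ for all $j$ simultaneously (throughout, and following the preprocessing, $T^j$ stands for the number of \emph{large} types of feature $j$ and $n$ for the number of signals on those types).

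Next I would push this bound through the affine inversion. On probability vectors, $\vec w\mapsto\tfrac1\gamma(I-\tfrac{1-\gamma}{T^j}1_{\X^j})\vec w$ is exactly $\varphi_j^{-1}$, so $\vec p^j=\tfrac1\gamma(I-\tfrac{1-\gamma}{T^j}1_{\X^j})\varphi_j(\vec p^j)$, giving $z^j(x^j)-p^j(x^j)=\tfrac1\gamma(\tilde z^j(x^j)-\varphi_j(p^j)(x^j))$ and $z^j(x^j)=\tfrac1\gamma(\tilde z^j(x^j)-\tfrac{1-\gamma}{T^j})$. Substituting into the definition of $\chi^2$ yields $d_{\chi^2}(\vec z^j,\vec p^j)=\tfrac1\gamma\sum_{x^j}\tfrac{(\tilde z^j(x^j)-\varphi_j(p^j)(x^j))^2}{\tilde z^j(x^j)-(1-\gamma)/T^j}$. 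I expect this step to be the crux: $\chi^2$-divergence is not robust under such a linear map, so the role of the preprocessing is precisely to keep the denominators bounded away from the ``base level'' $(1-\gamma)/T^j$. For a large type the preprocessing forces $\tilde z^j(x^j)-(1-\gamma)/T^j=\Omega(\gamma\tau)=\Omega(\gamma\alpha/(dT^j))$, while a crude bound (using $\epsilon<1$, hence $\gamma T^j\le\gamma T\le e^\epsilon-1<2$) gives $\tilde z^j(x^j)=O(1/T^j)$; so replacing the denominator by $\tilde z^j(x^j)$ costs only a factor $O(d/(\gamma\alpha))$, whence $d_{\chi^2}(\vec z^j,\vec p^j)\le O(d/(\gamma^2\alpha))\cdot d_{\chi^2}(\tilde{\vec z}^j,\varphi_j(\vec p^j))=O(dT^j/(\gamma^2\alpha\,n))$. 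The same inequality $\tilde z^j(x^j)-(1-\gamma)/T^j>\gamma\tau>0$ also certifies that each $\vec z^j$ is a genuine probability vector, since the inverse map preserves total mass.

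To combine features I would invoke the tensorization inequality for $\chi^2$-divergence of products (Reiss~\mycite{Reiss89}, Lemma~3.3.10; cf.~\mycite{AcharyaDK15}): once $\sum_j d_{\chi^2}(\vec z^j,\vec p^j)\le1$ (which holds for $n$ large enough), $d_{\chi^2}(\bar{\vec z},\bar{\vec p})\le \exp(\sum_j d_{\chi^2}(\vec z^j,\vec p^j))-1\le 2\sum_j d_{\chi^2}(\vec z^j,\vec p^j)=O(d^2\max_j T^j/(\gamma^2\alpha\,n))$. Finally I would pull this back through the global affine map $\varphi$ on $\X$: since $\varphi(\bar{\vec z})(\bar x)-\varphi(\bar{\vec p})(\bar x)=\gamma(\bar z(\bar x)-\bar p(\bar x))$ and $\varphi(\bar{\vec z})(\bar x)=\rho+\gamma\bar z(\bar x)\ge\gamma\bar z(\bar x)$, we get $d_{\chi^2}(\varphi(\bar{\vec z}),\varphi(\bar{\vec p}))\le\gamma\,d_{\chi^2}(\bar{\vec z},\bar{\vec p})=O(d^2\max_j T^j/(\gamma\alpha\,n))$. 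Taking $n$ to be a sufficiently large constant multiple of $d^2\max_j T^j/(\alpha^2\gamma^2)$ then pushes the right-hand side below the claimed bound, and the only failure events --- the $d$ instances of the Kamath et al.\ estimate --- together carry probability $<1/9$, which gives the $\ge 8/9$ success probability.
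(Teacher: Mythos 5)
Your proposal is correct and follows essentially the same route as the paper's proof: marginalize to see each feature's signals as drawn from $\varphi_j(\vec p^j)$, invoke the Kamath et al.\ add-$1$ guarantee, invert the affine map while using the preprocessing threshold to control the denominators (costing a factor $O(d/(\gamma\alpha))$), tensorize via Reiss's Lemma 3.3.10, and pull back through $\varphi$ at a cost of $\gamma$. The only differences are cosmetic --- you make the union bound over the $d$ features explicit (which the paper glosses over) --- and you inherit the same final accounting as the paper, so there is nothing further to add.
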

	\proofLemZCloseToP
	
	\begin{claim}[Claim~\ref{clm:preprocessing_removes_small_types} restated.]
	\clmPreprocessingRemovesSmallTypes
	\end{claim}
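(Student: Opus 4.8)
The plan is to relate the combinatorial event ``$x^j$ is deemed \emph{small}'' to the probabilistic event ``$x^j$ carries negligible $\vec q$-mass'', and then close the argument with two union bounds: one over all types (to justify the relation) and one over the $d$ features (to aggregate the mass).

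First I would introduce an auxiliary notion: call a type $x^j\in\X^j$ \emph{infrequent} if $q^j(x^j)\leq \tfrac\alpha{3d}\cdot\tfrac1{T^j}$, where $\vec q^j$ is the marginal of $\vec q$ on the $j$th coordinate. The crux is to show that, with high probability over the random signals, \emph{every} type that preprocessing marks small is infrequent; equivalently, no frequent type gets marked small. Since the types are i.i.d.\ from $\vec q$ and the garblings are independent, the projections $y_1^j,\dots,y_n^j$ are i.i.d.\ and distributed as $\varphi(\vec q^j)=\tfrac{1-\gamma}{T^j}\vec 1+\gamma\vec q^j$, so $\E[n_{x^j}/n]=\tfrac{1-\gamma}{T^j}+\gamma q^j(x^j)$. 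For a frequent $x^j$ this exceeds the smallness threshold $\tfrac{1-\gamma}{T^j}+\gamma\tau$ (recall $\tau=\tfrac\alpha{10dT^j}$) by at least $\gamma\bigl(\tfrac\alpha{3dT^j}-\tau\bigr)=\Omega\bigl(\tfrac{\gamma\alpha}{dT^j}\bigr)$, so Hoeffding's inequality applied to the i.i.d.\ indicators $\mathds 1\{y_i^j=x^j\}$ bounds the probability that a fixed frequent $x^j$ is marked small by $\exp\bigl(-\Omega(n\gamma^2\alpha^2/(d^2(T^j)^2))\bigr)$.

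Next I would union-bound this failure over all $\sum_j T^j\le d\max_j T^j$ candidate types; by the hypothesis $n=\Omega\bigl(\tfrac{d^2(\max_j T^j)^2}{\alpha^2\gamma^2}\log(d\max_j T^j)\bigr)$ the total failure probability is at most $1/9$, and on the complementary good event every small type is infrequent. Conditioning on that event, for each feature $j$ the total $\vec q^j$-probability of small types is at most that of infrequent types, namely $\sum_{x^j\text{ infrequent}}q^j(x^j)\le T^j\cdot\tfrac\alpha{3dT^j}=\tfrac\alpha{3d}$; a final union bound over the $d$ features then yields $\Pr_{\vec q}[\exists j:\ x^j\text{ small}]\le\sum_j\tfrac\alpha{3d}=\tfrac\alpha3<\tfrac\alpha2$, as required.

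I do not expect a serious obstacle: the argument is a routine ``large deviations plus union bound'' pattern. The only points requiring care are choosing the constant inside the Hoeffding exponent so that the $\log(d\max_j T^j)$ factor in the sample-size hypothesis exactly pays for the union bound over all $\sum_j T^j$ types, and observing that $n_{x^j}$ is genuinely a sum of $n$ independent Bernoulli variables so that Hoeffding applies verbatim; the slack between the frequency cutoff $\tfrac\alpha{3d}$ and the smallness parameter $\tau=\tfrac\alpha{10d}$ (both divided by $T^j$) is built in precisely to give this deviation room.
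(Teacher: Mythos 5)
Your proposal is correct and follows essentially the same route as the paper's proof: the same notion of \emph{infrequent} types with threshold $\tfrac{\alpha}{3d}\cdot\tfrac{1}{T^j}$, the same Hoeffding-plus-union-bound argument showing that with probability $\geq 8/9$ every small type is infrequent, and the same final union bound over features giving total mass $\leq \alpha/3 < \alpha/2$. The gap you exploit between the cutoffs ($\tfrac{\alpha}{3d}$ versus $\tau = \tfrac{\alpha}{10d}$) is exactly the slack the paper uses.
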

	\proofClmPreprocessingRemovesSmallTypes
	}
	
	\toggle{ \section{Missing Proofs: Non-Symmetric Scheme} }{ \section{Missing Proofs} }
	\label{apx_sec:proofs_nonsymmetric}
	
	\onlyconf{
	\begin{theorem}[\ref{thm:max_likelihood_non_symmetric} restated.]
		\thmMaxLikelihoodNonSymmetric
	\end{theorem}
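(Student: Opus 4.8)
The plan is to mimic the proof of Theorem~\ref{thm:max_likelihood_symmetric}, the only change being that each user $i$ now garbles her type through her own publicly known matrix $G_i$ rather than through a common matrix $G$. First I would fix an arbitrary hypothesis $\vec p\in H$ (a probability vector over $\X$) and compute the distribution it induces on the signal $y_i$: conditioning on user $i$'s type being $x$ and averaging, $\Pr[y_i=s] = \sum_{x\in\X} G_i(s,x)\,p(x) = \vec e_s\T G_i\vec p$, which I abbreviate as ${\vec g_i^s}\T\vec p$ where $\vec g_i^s$ is the $s$-th row of $G_i$. Since the $n$ users garble independently, the likelihood of the observed tuple $(y_1,\dots,y_n)$ under $\vec p$ is $L(\vec p) = \prod_i {\vec g_i^{y_i}}\T\vec p$, so (exactly as in Equation~\eqref{eq:symmetric_case_optimization_problem}) maximizing $L$ over $H$ is equivalent to minimizing $f(\vec p) = -\tfrac 1 n\sum_i \log({\vec g_i^{y_i}}\T\vec p)$ over $H$.

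Next I would check convexity by differentiating. The gradient is $\nabla f = -\tfrac 1 n\sum_i \tfrac{1}{{\vec g_i^{y_i}}\T\vec p}\,\vec g_i^{y_i}$ and the Hessian is $\nabla^2 f = \tfrac 1 n\sum_i \tfrac{1}{({\vec g_i^{y_i}}\T\vec p)^2}\,\vec g_i^{y_i}{\vec g_i^{y_i}}\T$. Each summand is a rank-one PSD matrix scaled by a positive number, hence $\nabla^2 f\succeq 0$ and $f$ is convex. Because $H$ is convex, $\min_{\vec p\in H} f(\vec p)$ is a convex program and is therefore poly-time solvable, e.g.\ by projected gradient descent~\cite{Zinkevich03}; this gives the claim.

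I do not expect any genuine obstacle: the statement is the natural index-aware analogue of Theorem~\ref{thm:max_likelihood_symmetric} and the proof is the same three lines. The one subtlety worth a remark is that $f$ is finite only on the region where ${\vec g_i^{y_i}}\T\vec p>0$ for every $i$; this is automatic in all applications we care about (the mechanisms are $\epsilon$-differentially private, so every entry of $G_i$ is bounded away from $0$, forcing ${\vec g_i^{y_i}}\T\vec p>0$ for any distribution $\vec p$), and in any case that region is an intersection of $H$ with finitely many open halfspaces and hence still convex, so the convex-optimization conclusion is unaffected.
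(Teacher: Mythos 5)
Your proposal is correct and follows essentially the same route as the paper's proof: derive the per-user signal distribution ${\vec g_i^{y_i}}\T\vec p$, write the negative average log-likelihood, and observe that its Hessian is a non-negative combination of rank-one PSD matrices, so the problem is a convex program over the convex set $H$. The additional remark about the domain where $f$ is finite is a nice touch not present in the paper but does not change the argument.
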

	\proofThmMaxLikelihoodNonSymmetric
	}
	
	\begin{lemma}[Lemma~\ref{lem:max_likelihood_is_lipfshitz_stronglyconvex} restated.]
		\lemstatement
	\end{lemma}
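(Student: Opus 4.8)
The plan is to treat the Lipschitz bound and the strong-convexity bound separately; only the latter is delicate.

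\emph{Lipschitz-ness (no randomness needed).} Since $\epsilon<1$ we have $\tfrac12+\eta = \tfrac{e^\epsilon}{1+e^\epsilon} < \tfrac34$ and $\tfrac12-\eta = \tfrac{1}{1+e^\epsilon} > \tfrac14$, so every entry of every row vector $\vec g_i^{y_i}$ lies in $(\tfrac14,\tfrac34)$. Hence for any distribution $\vec p$ we have $\vec g_i^{y_i}\T\vec p \ge \tfrac14\|\vec p\|_1 = \tfrac14$ and $\|\vec g_i^{y_i}\| \le \tfrac34\sqrt T$. Plugging these into the gradient formula $\nabla f = -\tfrac1n\sum_i \tfrac{1}{\vec g_i^{y_i}\T\vec p}\,\vec g_i^{y_i}$ from the proof of Theorem~\ref{thm:max_likelihood_non_symmetric} gives $\|\nabla f(\vec p)\| \le \tfrac1n\sum_i \tfrac{\|\vec g_i^{y_i}\|}{\vec g_i^{y_i}\T\vec p} \le 3\sqrt T$ for every $\vec p$ in the simplex, which is the claimed bound (the same bound applies to each summand $-\log(\vec g_i^{y_i}\T\vec p)$).

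\emph{Strong convexity --- reduction.} Write each row as $\vec g_i^{y_i} = \tfrac12\vec 1 + \eta\,\vec\zeta_i$ with $\vec\zeta_i\in\{-1,1\}^T$. For a unit vector $\vec u$ with $\vec u\T\vec1 = 0$ the $\tfrac12\vec1$ part drops out, so $\vec g_i^{y_i}\T\vec u = \eta\,\vec\zeta_i\T\vec u$. Combining with the Hessian formula $\nabla^2 f(\vec p) = \tfrac1n\sum_i \tfrac{1}{(\vec g_i^{y_i}\T\vec p)^2}\,\vec g_i^{y_i}(\vec g_i^{y_i})\T$ and the bound $\vec g_i^{y_i}\T\vec p \le \tfrac12+\eta < \tfrac34$, we get for every such $\vec u$ and every $\vec p$
\[ \vec u\T\nabla^2 f(\vec p)\,\vec u = \tfrac1n\sum_i \frac{(\vec g_i^{y_i}\T\vec u)^2}{(\vec g_i^{y_i}\T\vec p)^2} \ \ge\ \frac{16\eta^2}{9}\cdot\frac1n\sum_i (\vec\zeta_i\T\vec u)^2 . \]
Since this lower bound does not involve $\vec p$, it suffices to show that, w.p. $\ge 1-\delta$ over the $G_i$'s and the observed signals, $\lambda_{\min}\!\big(\tfrac1n\sum_i\vec\zeta_i\vec\zeta_i\T\big) \ge \tfrac12$; then $\vec u\T\nabla^2 f\,\vec u \ge \tfrac{8\eta^2}{9} \ge \tfrac{\eta^2}{2}$ for every unit $\vec u$ orthogonal to $\vec 1$, i.e. $f$ is $\tfrac{\eta^2}{2}$-strongly convex on $\{\vec x:\vec x\T\vec1=0\}$.

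\emph{Strong convexity --- concentration, and the main obstacle.} The subtlety flagged in the statement is that $\vec g_i^{y_i}$ is \emph{not} a uniformly random sign pattern, because the observed signal $y_i$ is correlated with the column of $G_i$ indexed by the true type $x_i$ of user $i$; a priori $\vec\zeta_i$ is a uniform $\{-1,1\}^T$ vector with one ``corrupted'' coordinate at an unknown location, which could conceivably spoil the Hessian lower bound. I would resolve this by conditioning on the tuple of types $(x_1,\dots,x_n)$. Then the $\vec\zeta_i$ are independent, and for each $i$ the coordinates of $\vec\zeta_i$ are independent, uniform in $\{-1,1\}$ for every coordinate $x'\ne x_i$, while the single coordinate $x_i$ has mean $\E[\zeta_i(x_i)\mid x_i] = 2\eta$ (the one-line computation $\Pr[\tfrac1\eta(g_i^{y_i}(x_i)-\tfrac12)=1]=\tfrac12+\eta$ also carried out in the proof of Theorem~\ref{thm:identity_testing_nonsymmetric_RR}). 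The key point is that this bias only shifts the \emph{mean} of one coordinate: all cross terms $\E[\zeta_i(x')\zeta_i(x'')\mid x_i]$ with $x'\ne x''$ vanish because at least one of the two coordinates is unbiased and independent of the other, and the diagonal entries are $\E[\zeta_i(x')^2]=1$; hence $\E[\vec\zeta_i\vec\zeta_i\T\mid x_i] = I_T$ for \emph{every} value of $x_i$. Each $\vec\zeta_i\vec\zeta_i\T$ is PSD with operator norm $\|\vec\zeta_i\|^2 = T$, so a standard matrix Chernoff bound --- or the same $\Theta(\tfrac1T)$-net over the unit sphere plus Hoeffding argument used in the proof of Claim~\ref{clm:concentration_gram_matrix}, the relevant quadratic form having ``Lipschitz constant'' at most $\max_i\|\vec\zeta_i\|^2 = T$ --- yields $\lambda_{\min}(\tfrac1n\sum_i\vec\zeta_i\vec\zeta_i\T) \ge \tfrac12$ with probability $\ge 1-\delta$ once $n = \Omega(T^3\log(1/\delta))$ (in fact $n = \Omega(T\log(T/\delta))$ already suffices). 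Because this holds for every fixed choice of types, it holds unconditionally, and together with the reduction above it completes the proof. The only real work is the observation that conditioning on the types makes $\E[\vec\zeta_i\vec\zeta_i\T\mid x_i]$ exactly the identity irrespective of where the biased coordinate sits, after which ordinary matrix concentration applies verbatim.
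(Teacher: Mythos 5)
Your proposal is correct and follows essentially the same route as the paper: the Lipschitz bound is the identical pointwise estimate, and for strong convexity you likewise condition on the users' types to break the dependence between $G_i$ and $y_i$, observe that the conditional second moment of the centered rows is exactly $\eta^2 I$ (your $\E[\vec\zeta_i\vec\zeta_i\T\mid x_i]=I_T$ is the paper's $\E[(\vec g_i^{y_i}-\tfrac12\vec 1)(\vec g_i^{y_i}-\tfrac12\vec 1)\T]=\eta^2 I$ in normalized form), and then apply a concentration bound uniformly over the unit sphere. The only differences are cosmetic: you work with the full minimum eigenvalue rather than restricting to the subspace orthogonal to $\vec 1$, and you invoke matrix Chernoff to get a sharper sample requirement, whereas the paper uses a net plus scalar Hoeffding to reach the stated $n=\Omega(T^3\log(1/\delta))$.
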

	
	\begin{proof}
		Once the $G_i$s have been picked, we view the $n$ signals and are face with the maximum-likelihood problem as defined in Theorem~\ref{thm:max_likelihood_non_symmetric}. As a result of this particular construction, it is fairly evident to argue that the function $f$ whose minimum we seek is Lipfshitz: the contribution of each user to the gradient of $f$ is $({{\vec g_i^{y_i}} \T \vec p})^{-1} {\vec g_i^{y_i}}$. Since our optimization problem is over the probability simplex, then for each $\vec p$ we always have ${\vec g_i^{y_i}} \T \vec p \geq \tfrac 1 2-\eta >\tfrac 1 4$, whereas $\|{\vec g_i^{y_i}}\| \leq (\tfrac 1 2+\eta) \sqrt T \leq \tfrac 3 4 \sqrt T$. Therefore, our function $f(\vec p)$ is $({3\sqrt T})$-Lipfshitz.
		
		The argument which is hairier to make is that $f$ is also $\Theta(\eta^2)$-strongly convex over the subspace orthogonal to the all-$1$ vector; namely, we aim to show that for any unit-length vector $\vec u$ such that $\vec u\T\vec1 = 0$ we have that $\vec u (\nabla^2 f)\vec u \geq \tfrac{\eta^2}{2}$. Since each coordinate of each $\vec g_i^s$ is non-negative and upper bounded by $\tfrac 1 2+\eta \leq 1$, then it is evident that for any probability distribution $\vec p$ and any unit-length vector $\vec u$ we have $\vec u\T (\nabla^2 f(\vec p))\vec u \geq \tfrac 1 n \sum_{i} \tfrac{ ({\vec g_i^{y_i}}\T \vec u)^2 }{({\vec g_i^{y_i}}\T \vec p)^2} \geq \tfrac 1 n \sum_i ({\vec g_i^{y_i}}\T \vec u)^2$, it suffices to show that the least eigenvalue of $\left( \tfrac 1 n \sum_i {\vec g_i^{y_i}}{\vec g_i^{y_i}}\T \right)$ which is still orthogonal to $\vec 1$  is at least $\eta^2/2$.
		
		Let $h(\vec v_1,..., \vec v_n)$ be the function that maps $n$ vectors in $\{\tfrac 1 2+\eta,\tfrac 1 2-\eta\}^T$ to the least-eigenvalue of the matrix $\tfrac 1 n \sum_i \vec v_i\vec v_i\T$ on ${\cal U} = \{ \vec x\in \R^T:~ \vec x\T \vec 1 = 0  \}$. As ever, our goal is to argue that w.h.p we have that $h(\vec g_1^{y_1},....,\vec g_n^{y_n}) \approx \E_{\vec g_i^{y_i}}[h(\vec g_1^{y_1},....,\vec g_n^{y_n})]$. However, it is unclear what is $\E_{\vec g_i^{y_i}}[h(\vec g_1^{y_1},....,\vec g_n^{y_n})]$, and the reason for this difficulty lies in the fact that at each day $i$ we pick either the ``1''-signal or the ``-1''-signal \emph{based on the choice of $\vec g_i^{1}$ and $\vec g_i^{-1}$}. Namely, for each user $i$, the user's type $x$ is chosen according to $\vec p$ and that is independent of $G_i$. However, once $G_i$ is populated, the choice of the signal is determined by the column corresponding to type $x$. Had it been the case that each user's signal is fixed, or even independent of the entries of $G_i$, then it would be simple to argue that w.h.p. the value of $h$ is $\geq \eta^2/3$. However, the dependence between that two row vectors we choose for users and the signal sent by the user makes arguing about the expected value more tricky.

		So let us look at $\sum_i {\vec g_i^{y_i}}{\vec g_i^{y_i}}\T$. The key to unraveling the dependence between the vectors $\vec g_i^1, \vec g_i^{-1}$ and the signal $y_i$ is by fixing the types of the $n$ users in advance. After all, their type is chosen by $\vec p$ independently of the matrices $G_i$. Now, once we know user $i$ is of type $x_i$ then the signal $y_i$ is solely a function of the $x_i$-column of $G_i$, \emph{but the rest of the columns are independent of $y_i$}. Therefore, every coordinate $g_i^{y_i}(x')$ for any $x'\neq x_i$ is still distributed uniformly over $\{\tfrac 1 2+\eta, \tfrac 1 2-\eta \}$, and simple calculation shows that
		\toggle{ 
			\begin{align*} &\Pr[g_i^{y_i}(x_i) = \tfrac 1 2+\eta] 
			\cr &~~=  \sum_{s\in\{1,-1\}} \Pr[ g_i^{s}(x_i) = \tfrac 1 2+\eta \textrm{ and } y_i = s] 
			\cr & ~~= \sum_{s\in \{1,-1\}} \Pr[ y_i = s | g_i^s(x_i)=\tfrac 1 2+\eta]\Pr[g_i^s(x_i)=\tfrac 1 2+\eta]
			\cr & ~~= 2\cdot \tfrac 1 2  (\tfrac 1 2+\eta) = \tfrac 1 2+\eta
			\end{align*}}{
		\begin{align*} \Pr[g_i^{y_i}(x_i) = \tfrac 1 2+\eta] &=  \sum_{s\in\{1,-1\}} \Pr[ g_i^{s}(x_i) = \tfrac 1 2+\eta \textrm{ and } y_i = s] 
		\cr & = \sum_{s\in \{1,-1\}} \Pr[ y_i = s | g_i^s(x_i)=\tfrac 1 2+\eta]\Pr[g_i^s(x_i)=\tfrac 1 2+\eta]
		\cr & = \tfrac 1 2 \sum_{s\in \{1,-1\}} \Pr[y_i = s | g_i^s(x_i)=\tfrac 1 2+\eta] = \tfrac{ (\tfrac 1 2+\eta) + (\tfrac 1 2+\eta)  } 2 = \tfrac 1 2+\eta
		\end{align*}}
		hence, $\Pr[g_i^{y_i}(x_i)= \tfrac 1 2-\eta] = \tfrac 1 2-\eta$ and so $\E [ g_i^{y_i}(x_i)] = (\tfrac 1 2+\eta)^2+(\tfrac 1 2-\eta)^2 = 2\tfrac 1 4 + 2\eta^2 = \tfrac 1 2+2\eta^2$.
		We thus have that $\E[\vec g_i^{y_i}] = \tfrac 1 2\vec 1 + 2\eta^2 \vec e_{x_i}$.

		Note that $\E[ (g_i^{y_i}(x_i)-\tfrac 1 2)^2  ] = \eta^2$ as we always have that $g_i^{y_i}(x_i)-\tfrac 1 2 \in \{-\eta,\eta\}$. Thus, $\E [ ({\vec g_i^{y_i}-\tfrac 1 2 \vec 1})({\vec g_i^{y_i}}-\tfrac 1 2 \vec 1)\T ] = \eta^2 I$. It follows that 
		\myinlineequationdot{\E [ {\vec g_i^{y_i}}{\vec g_i^{y_i}}\T ]  = \eta^2 I -(\tfrac 1 2)^2 1_{\X \times \X} + \tfrac 1 2 \left( \E [ {\vec g_i^{y_i}}] \vec 1\T + \vec 1 \E [ {\vec g_i^{y_i}}]\T\right) = (\tfrac 1 2)^2 1_{\X\times \X} + \eta^2 I + \eta^2 \left(\vec e_{x_i}\vec 1\T + \vec 1 \vec e_{x_i}\T\right)}
		
		We therefore have that for any unit length $\vec u$ which is orthogonal to $\vec 1$ we have $\E[ \vec u\T \sum_i {\vec g_i^{y_i}}{\vec g_i^{y_i}}\T \vec u ] = \eta^2n$, or in other words: $\E [ P_{\cal U}(\tfrac 1 n \sum_{i} {\vec g_i^{y_i}}{\vec g_i^{y_i}}\T)  ]=\eta^2 I$ (with $P_{\cal U}$ denoting the projection onto the subspace ${\cal U}$). The concentration bound for any unit-length $\vec u\in{\cal U}$ follows from standard Hoeffding and Union bounds on a $\tfrac 1 4$-cover of the unit-sphere in ${\cal U}$. The argument is standard and we bring it here for completion.
		
		Let $\vec u_1,..., \vec u_m$ be a $\tfrac 1 8$-cover of the unit-sphere in ${\cal U}$. Standard arguments (see Vershynin~\mycite{Vershynin10} Lemma 5.2) give that $m  = O(20^T)$. Moreover, for any matrix $M$, suppose we know that for each $\vec u_j$ it holds that $\tfrac 3 4\eta^2 < \vec u_j\T M \vec u_j \leq \|M\vec u_j\|$. Then let $\vec u$ be the unit-length $\vec u\in {\cal U}$ on which $\vec u\T M\vec u$ is minimized (we denote the value at $\vec u$ as $\sigma_{\min}(M)$) and let $\vec u_j$ its vector in the cover. Then we get 
		\begin{align*} \sigma_{\min}(M) &= \vec u\T M \vec u  
		\cr & = \vec u_j\T M \vec u_j - \vec u_j\T M (\vec u_j - \vec u) + (\vec u_j-\vec u)\T M \vec u 
		\cr &\geq \tfrac 3 4 \eta^2 - \|M\vec u_j\| \cdot \tfrac 1 8 -\tfrac 1 8 \sigma_{\min}(M)
		\cr \Rightarrow \tfrac 9 8 \sigma_{\min}(M) &\geq \tfrac 5 8 \eta^2
		\end{align*} so $\sigma_{\min}(M) > \eta^2/2$. We therefore argue that for each $\vec u_j$ it holds that $\Pr[  \vec u_j\T \left(\tfrac 1 n \sum_{i} {\vec g_i^{y_i}}{\vec g_i^{y_i}}\T \right)\vec u_j < \tfrac 3 4 \eta^2  ] < \delta/20^T$ and then by the union-bound the required will hold.
		
		Well, as shown, $\E[\vec u_j\T \left(\tfrac 1 n \sum_{i} {\vec g_i^{y_i}}{\vec g_i^{y_i}}\T \right)\vec u_j] = \eta^2$. Denote $X_i$ as the random variable $(\vec u_j\T {\vec g_i^{y_i}})^2$ and note that due to orthogonality to $\vec 1$ we have that \[ 0\leq X_i = (\vec u_j \T ({\vec g_i^{y_i}} -\tfrac 1 2 \vec 1))^2 \leq \|\vec u_j\|^2 \cdot \|\vec g_i^{y_i}-\tfrac 1 2\vec 1\|^2 = \eta^2 T
		\] The Hoeffding bound now assures us that $\Pr[\vec u_j\T \left(\tfrac 1 n \sum_{i} {\vec g_i^{y_i}}{\vec g_i^{y_i}}\T \right)\vec u_j < \tfrac 3 4 \eta^2  ] = \Pr[ \vec u_j\T \left(\tfrac 1 n \sum_{i} {\vec g_i^{y_i}}{\vec g_i^{y_i}}\T \right)\vec u_j - \E[\vec u_j\T \left(\tfrac 1 n \sum_{i} {\vec g_i^{y_i}}{\vec g_i^{y_i}}\T \right)\vec u_j]< -\tfrac 1 4 \eta^2  ] \leq \exp(\tfrac{-2n^2\eta^4/16} {n\cdot \eta^4T^2} ) = \exp(-n/8T^2) \leq \delta/20^T$ for $n = \Omega(T^3 \ln(1/\delta))$.
	\end{proof}
	
	\onlyconf{
	\begin{proposition}[Proposition~\ref{pro:variance_of_theta} restated.]
		
		\[\E [ (\vec \theta - 2\eta\vec f)(\vec \theta - 2\eta\vec f)\T ] \preceq \tfrac 1 n  I\]
	\end{proposition}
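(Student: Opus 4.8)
The plan is to write $\vec \theta - 2\eta\vec f$ as an average of $n$ independent mean-zero random vectors, one per user, and then bound the second moment of each summand rather than attack the $(x,x')$-entries of the variance matrix one at a time.

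\textbf{Step 1 (decomposition).} For each user $i$ set $\vec v_i \stackrel{\rm def}{=} \tfrac 1 \eta\left( \vec g_i^{y_i} - \tfrac 1 2 \vec 1 \right) - 2\eta \vec e_{x_i}$, so that $\vec \theta - 2\eta \vec f = \tfrac 1 n \sum_i \vec v_i$. The computation already carried out in the body of the identity-testing proof gives $\E\left[\tfrac 1 \eta(\vec g_i^{y_i} - \tfrac 1 2\vec 1)\right] = 2\eta \vec e_{x_i}$, hence $\E[\vec v_i] = \vec 0$. Since the matrices $G_1,\dots,G_n$ are generated independently and each user tosses her own private coin, the $\vec v_i$ are mutually independent, so $\E[(\vec \theta - 2\eta\vec f)(\vec\theta-2\eta\vec f)\T] = \tfrac 1 {n^2}\sum_i \E[\vec v_i \vec v_i\T]$ with all cross terms vanishing.

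\textbf{Step 2 (the crux: coordinates of a single summand are independent).} With the type $x_i$ held fixed, I would argue that the $T$ coordinates of the $\{-1,+1\}$-vector $\tfrac 1 \eta(\vec g_i^{y_i} - \tfrac 1 2\vec 1)$ are mutually independent. The key is that $y_i$ is a deterministic function of user $i$'s private coin together with the single column of $G_i$ indexed by $x_i$; for every other type $x\neq x_i$, the sign of $g_i^{y_i}(x)-\tfrac 1 2$ records only whether the (uniformly chosen, independent) more-likely signal of column $x$ agrees with $y_i$, which conditioned on any fixed value of $y_i$ is a fresh uniform bit. Thus for each $x\neq x_i$ the coordinate $\tfrac 1\eta(g_i^{y_i}(x) - \tfrac 1 2)$ is uniform on $\{-1,+1\}$ with mean $0$ and variance $1$; and for $x=x_i$, as shown in the body, it equals $+1$ with probability $\tfrac 1 2+\eta$, so it has mean $2\eta$ and variance $1-4\eta^2$; and all $T$ coordinates are jointly independent.

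\textbf{Step 3 (assembling).} By Step 2, $\E[\vec v_i \vec v_i\T]$ is a diagonal matrix whose $x$-th diagonal entry is ${\rm Var}\big(\tfrac 1\eta(g_i^{y_i}(x) - \tfrac 1 2)\big)$, equal to $1$ when $x\neq x_i$ and to $1-4\eta^2\le 1$ when $x=x_i$; hence $\E[\vec v_i\vec v_i\T]\preceq I$ for every $i$, and summing gives $\E[(\vec \theta - 2\eta\vec f)(\vec\theta-2\eta\vec f)\T]\preceq \tfrac 1{n^2}\cdot nI = \tfrac 1 n I$. The only delicate point is the conditioning in Step 2 — one must genuinely use that $y_i$ touches $G_i$ only through one column, so that the remaining columns retain their unconditional product distribution; otherwise the off-diagonal terms need not vanish. (Alternatively, one can skip the independence claim and compute $\E[v_i(x)v_i(x')]$ directly, splitting into the cases $x=x'=x_i$, $x=x'\neq x_i$, and $x\neq x'$, using that $\tfrac 1\eta(g_i^{y_i}(x)-\tfrac 1 2)\in\{-1,1\}$ always to evaluate the diagonal entries.)
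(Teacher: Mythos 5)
Your proof is correct and rests on exactly the same two observations as the paper's: independence across users kills the $i\neq i'$ cross terms, and the fact that $y_i$ depends on $G_i$ only through the $x_i$-column makes the coordinates within a single user independent, so the per-user second-moment matrix is diagonal with entries $1$ or $1-4\eta^2$. The paper merely organizes the same computation entry-by-entry over the full covariance matrix rather than per user, so this is essentially the same argument.
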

	\begin{proof}
	\proofProVarianceOfTheta
	\end{proof}
}

	\onlyconf{		
	\begin{theorem}[Theorem~\ref{thm:independence_nonsymmetricRR} restated.]
		\thmIndependenceNonsymmetricRR
	\end{theorem}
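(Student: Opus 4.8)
The plan is to follow the proof of Theorem~\ref{thm:identity_testing_nonsymmetric_RR} almost verbatim, but tracking feature-marginals. First I would freeze the types $x_1,\dots,x_n$ of the $n$ users, so that the empirical frequency vector $\vec f = \langle n_x/n\rangle_{x\in\X}$ is fixed; that proof already establishes that, over the randomness of the mechanism, $\E[\tfrac 1{2\eta}\vec\theta] = \vec f$ and, crucially, $\E[(\tfrac1{2\eta}\vec\theta-\vec f)(\tfrac1{2\eta}\vec\theta-\vec f)\T]\preceq \tfrac1{4\eta^2 n}I$. Next, for each feature $j$ I would introduce the marginalization map $M^j$: the $(T^j\times T)$ zero/one matrix whose $x^j$-row indicates which full types have $j$-th coordinate $x^j$. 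Since every column of $M^j$ has a single $1$ and every row has exactly $T/T^j$ ones, $M^j(M^j)\T = \tfrac{T}{T^j}I$. Hence $\vec\theta^j = M^j(\tfrac1{2\eta}\vec\theta)$ has mean $\vec f^j = M^j\vec f$ and covariance at most $\tfrac1{4\eta^2 n}M^j(M^j)\T = \tfrac{T}{4\eta^2 n T^j}I$, so $\E\|\vec\theta^j-\vec f^j\|^2 \le T/(4\eta^2 n)$; a union bound over the $d$ features with Chebyshev's inequality then controls every $\|\vec\theta^j-\vec f^j\|$ simultaneously.

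Then I would unfreeze the types. As computed in the identity-testing proof, $\E[\vec f] = \vec q$ and $\E[(\vec f-\vec q)(\vec f-\vec q)\T] = \tfrac1n(\diag(\vec q)-\vec q\vec q\T)$, so $\E\|\vec f^j - \vec q^j\|^2\le 1/n$, and again a union bound plus Chebyshev handles all $d$ marginals. Combining the two steps yields, with probability $\ge 5/6$, $\|\vec\theta^j-\vec q^j\|_1 \le \sqrt{T^j}\,\|\vec\theta^j-\vec q^j\| = O\!\left(\sqrt{T^j}\sqrt{dT/(\eta^2 n)}\right)$ for every $j$. Taking $n$ large enough that each of these is at most $1$ (in particular $\|\vec\theta^j\|\le 2$), I would apply Proposition~\ref{apx_pro:l1norm_tensor} repeatedly along the tensor product to obtain $\|\bar{\vec\theta} - \vec q^1\times\cdots\times\vec q^d\|_1 = O\!\left(\sqrt d\,\sqrt{\textstyle\sum_j T^j}\,\sqrt{dT/(\eta^2 n)}\right)$, while the identity-testing proof already gives $\|\tfrac1{2\eta}\vec\theta - \vec q\|_1 = O(\sqrt{T^2/(\eta^2 n)})$ with probability $\ge 5/6$.

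Finally, choosing $n = \Omega\!\left(\tfrac{T}{\alpha^2\eta^2}(T + d^2\sum_j T^j)\right)$ makes both of the above bounds at most $\alpha/2$ with total failure probability below $1/3$; since $\eta=\Theta(\epsilon)$ this is the stated sample size. In the completeness case $\vec q$ is a product distribution, so $\vec q = \vec q^1\times\cdots\times\vec q^d$ and the triangle inequality gives $\|\tfrac1{2\eta}\vec\theta-\bar{\vec\theta}\|_1\le \|\tfrac1{2\eta}\vec\theta-\vec q\|_1 + \|\bar{\vec\theta}-\vec q\|_1 \le\alpha$, i.e.\ $d_{\rm TV}(\tfrac1{2\eta}\vec\theta,\bar{\vec\theta})\le\alpha/2$. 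In the soundness case $\|\vec q - \vec q^1\times\cdots\times\vec q^d\|_1\ge 2\alpha$, so $\|\tfrac1{2\eta}\vec\theta-\bar{\vec\theta}\|_1 \ge \|\vec q-\vec q^1\times\cdots\times\vec q^d\|_1 - \|\bar{\vec\theta}-\vec q^1\times\cdots\times\vec q^d\|_1 - \|\tfrac1{2\eta}\vec\theta-\vec q\|_1 \ge \alpha$, i.e.\ $d_{\rm TV}>\alpha/2$.

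The main obstacle is the middle step: because the coordinates of $\vec\theta$ are correlated (the signal $y_i$ depends only on the single column of $G_i$ indexed by user $i$'s type), the marginals cannot be analysed naively. The saving grace is that, exactly as in the identity-testing proof, the covariance of $\vec\theta$ is nonetheless diagonally dominated by $\tfrac1n I$, which is precisely what makes the push-forward $M^j(\cdot)(M^j)\T$ inflate the variance by the clean factor $T/T^j$ rather than by something harder to control. A secondary nuisance is that Proposition~\ref{apx_pro:l1norm_tensor} needs $\|\vec\theta^j\|$ to be bounded, which forces $n$ to be large enough that the leading term ends up being $T^{2}/(\alpha^2\epsilon^2)$; getting rid of this (and establishing a matching lower bound) is what the open problem at the end of the section is about.
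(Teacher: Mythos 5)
Your proposal is correct and follows essentially the same route as the paper's proof: fixing the types, pushing the covariance bound $\E[(\tfrac1{2\eta}\vec\theta-\vec f)(\tfrac1{2\eta}\vec\theta-\vec f)\T]\preceq\tfrac1{4\eta^2n}I$ through the marginalization matrices $M^j$ with $(M^j)(M^j)\T=\tfrac{T}{T^j}I$, Chebyshev plus a union bound over features for both sources of randomness, Proposition~\ref{apx_pro:l1norm_tensor} for the tensor product, and the triangle inequality for completeness and soundness. No gaps.
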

	\proofThmIndependenceNonsymmetricRR
	}

	\section{Additional Figures}
	\label{apx_sec:figures}
	
	For completion, we bring here the results of our experiments.
	
	Figure~\ref{fig:exp1} details the empirical distribution of $P(\vec \theta)$ we get under the null-hypothesis, under different sample complexities ($n=\{10,100,1000,10000\}$) for different sizes of domains ($T=\{10,25,50,100\}$). Next to the curves we also draw the curve of the $\chi^T$-distribution. Since all curves are essentially on top of one another, it illustrates our point: the distribution of $P(\vec \theta)$ under the null-hypothesis is (very close) to the $\chi^2_T$-distribution.
	
	\toggle{\begin{figure*}[b]
			\centering
			\begin{subfigure}[t]{0.45\textwidth}
				\centering
				\includegraphics[scale=0.25]{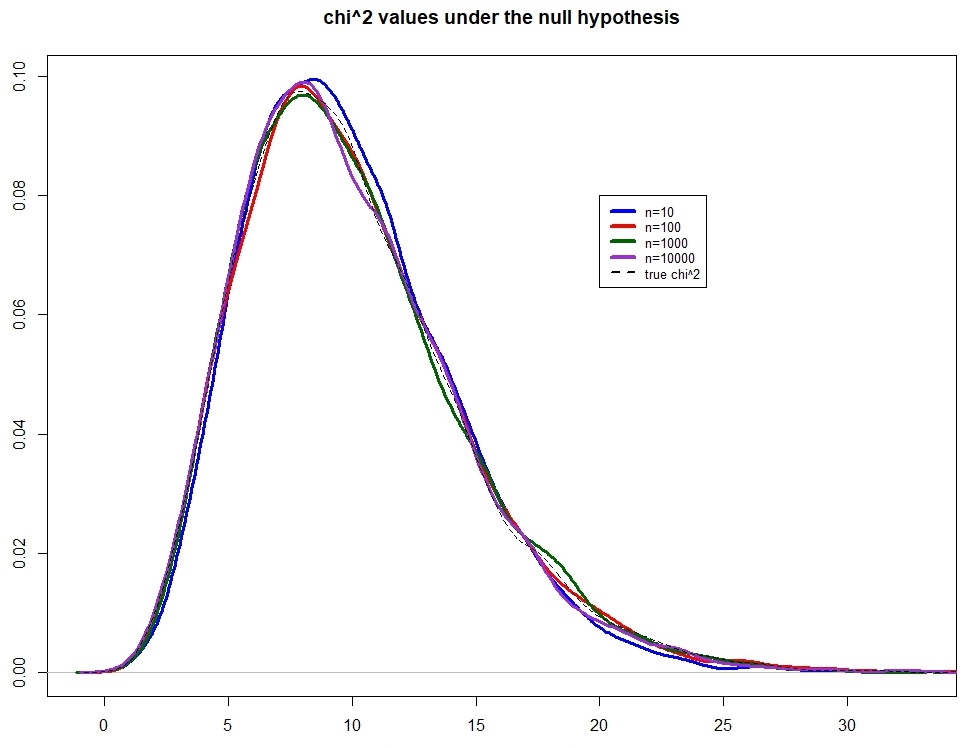}
				\caption{\#Types = 10}
			\end{subfigure}
			\begin{subfigure}[t]{0.45\textwidth}
				\centering
				\includegraphics[scale=0.25]{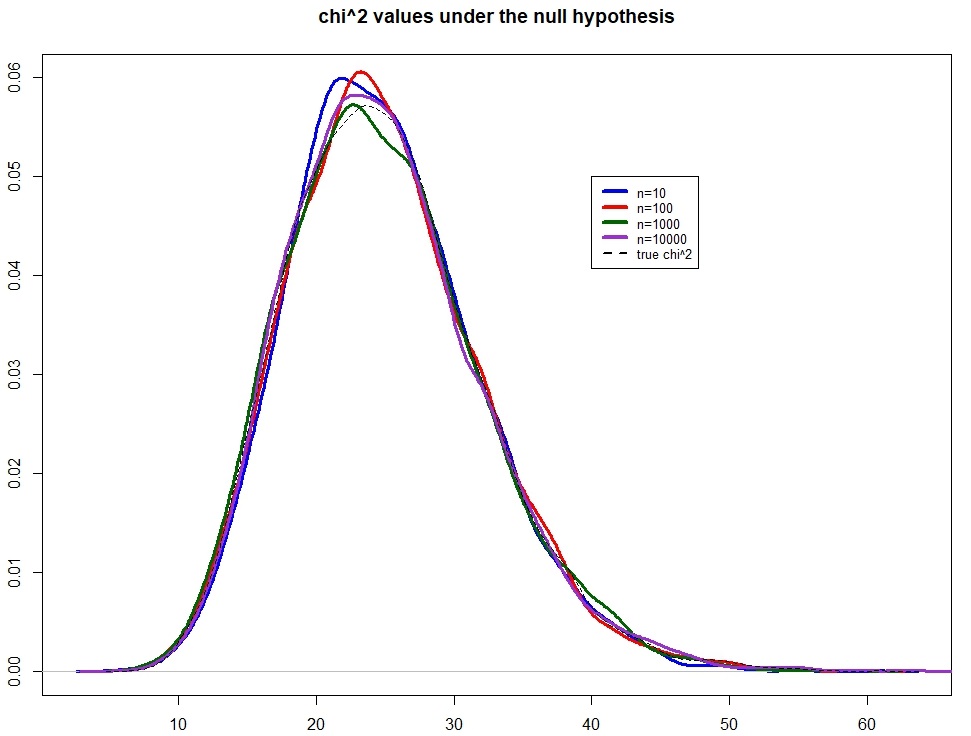}
				\caption{\#Types = 25}
			\end{subfigure}
			\begin{subfigure}[b]{0.45\textwidth}
				\centering
				\includegraphics[scale=0.25]{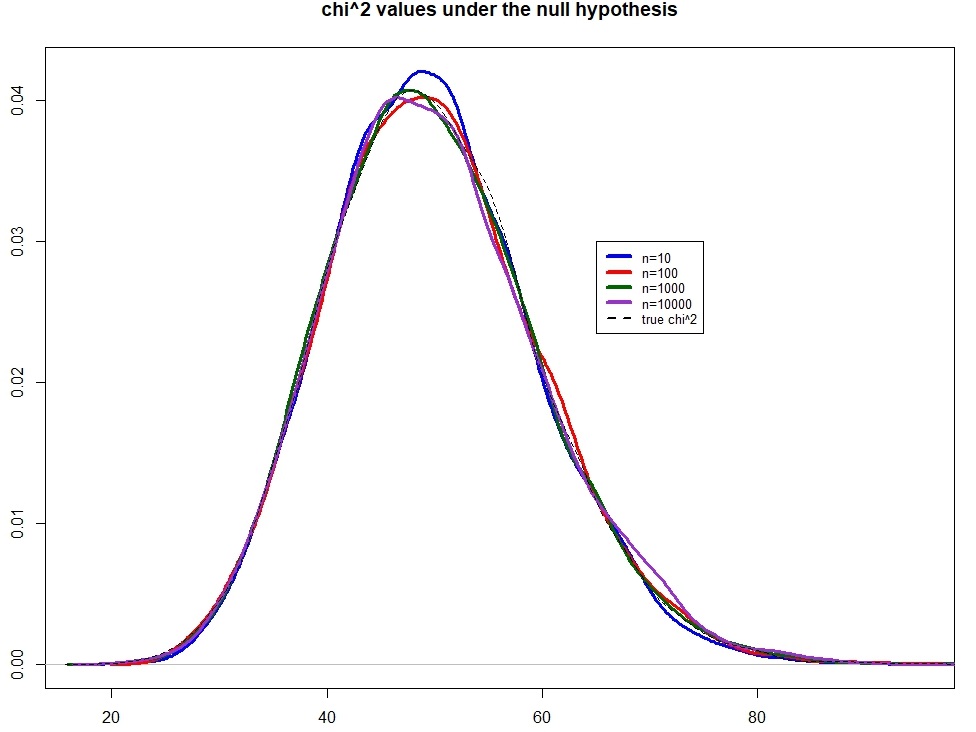}
				\caption{\#Types = 50}
			\end{subfigure}
			\begin{subfigure}[b]{0.45\textwidth}
				\centering
				\includegraphics[scale=0.25]{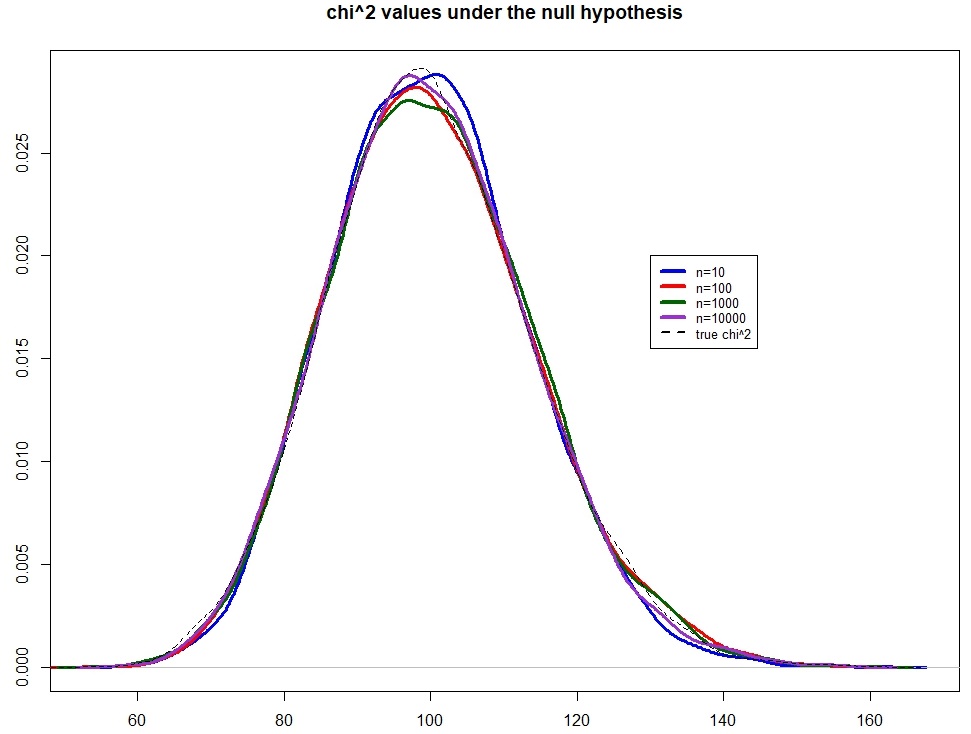}
				\caption{\#Types = 100}
			\end{subfigure}
			\captionsetup[subfigure]{justification=centering}
			\captionsetup{width=.825\linewidth}
			\caption{{\label{fig:exp1} The empirical distribution of our test quantity under the null-hypothesis.}\\ {{\small (Best seen in color) We ran our $\chi^2$-based test under the null-hypothesis. Not surprisingly, the results we get seem to be taken from a $\chi^2$-distribution (also plotted in a dotted black line). In all of the experiments we set $\epsilon=0.25$.}}
				 }
		\end{figure*}}{\begin{figure}[h]
		\centering
		\begin{subfigure}[h]{0.48\textwidth}
			\includegraphics[scale=0.3]{exp1_T10.jpeg}
			\caption{\#Types = 10}
		\end{subfigure}
		\begin{subfigure}[h]{0.48\textwidth}
			\includegraphics[scale=0.3]{exp1_T25.jpeg}
			\caption{\#Types = 25}
		\end{subfigure}
		\begin{subfigure}[h]{0.48\textwidth}
			\includegraphics[scale=0.3]{exp1_T50.jpeg}
			\caption{\#Types = 50}
		\end{subfigure}
		\begin{subfigure}[h]{0.48\textwidth}
			\includegraphics[scale=0.3]{exp1_T100.jpeg}
			\caption{\#Types = 100}
		\end{subfigure}
		\caption{ \label{fig:exp1} The empirical distribution of our test quantity under the null-hypothesis.\\  {\small (Best seen in color) We ran our $\chi^2$-based test under the null-hypothesis. Not surprisingly, the results we get seem to be taken from a $\chi^2$-distribution (also plotted in a dotted black line). In all of the experiments we set $\epsilon=0.25$.} }
		\end{figure}}

	Figure~\ref{fig:exp2} details the empirical distribution of $P(\vec \theta)$ we get under the alternative-hypothesis, under different sample complexities ($n=\{2500,5000,7500,10000,200000\}$) for different TV-distances from the null-hypothesis ($\alpha=\{0.25,0.2,0.15,0.1\}$). The results show the same pattern, as $n$ increases, the distribution of $P(\vec \theta)$ shifts away from the $\chi^2_T$-distribution. This is clearly visible in the case where the total-variation distance is $0.25$, and becomes less apparent as we move closer to the null-hypothesis.
	\toggle{\begin{figure*}[h]
			\centering
			\begin{subfigure}[h]{0.45\textwidth}
				\includegraphics[scale=0.3]{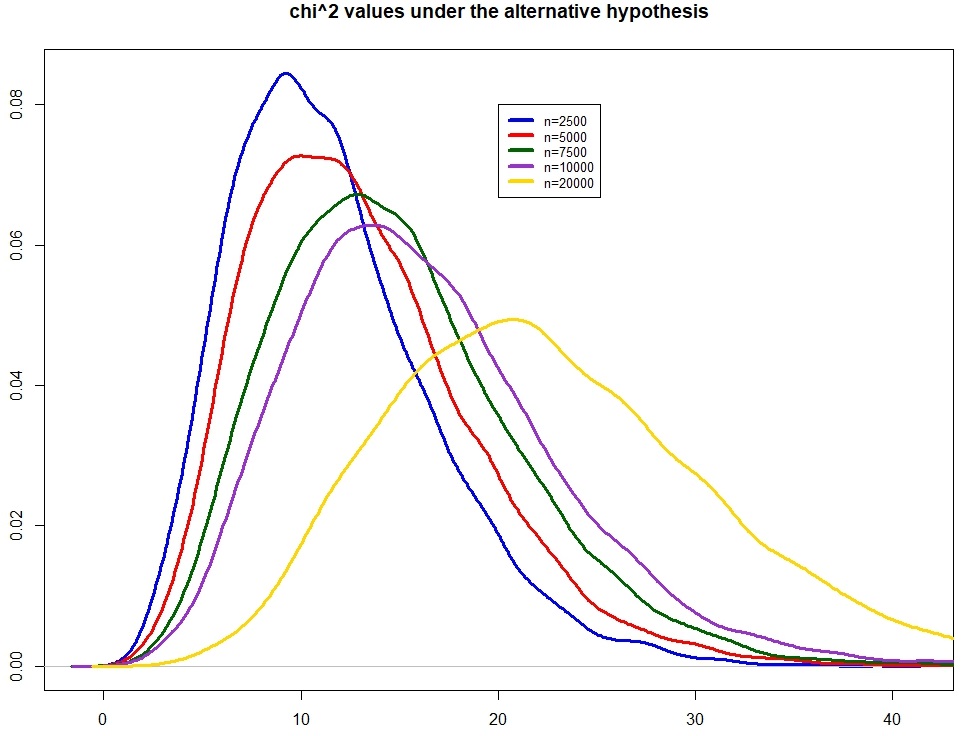}
				\caption{TV-dist = 0.25}
			\end{subfigure}
			\begin{subfigure}[h]{0.45\textwidth}
				\includegraphics[scale=0.3]{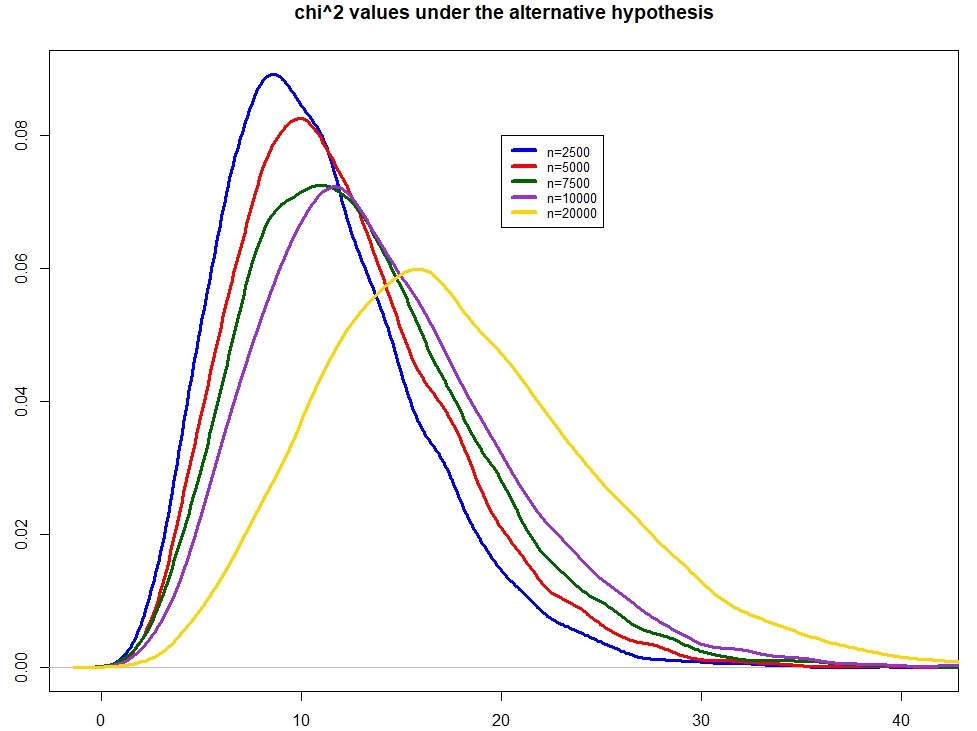}
				\caption{TV-dist = 0.2}
			\end{subfigure}
			\begin{subfigure}[h]{0.45\textwidth}
				\includegraphics[scale=0.3]{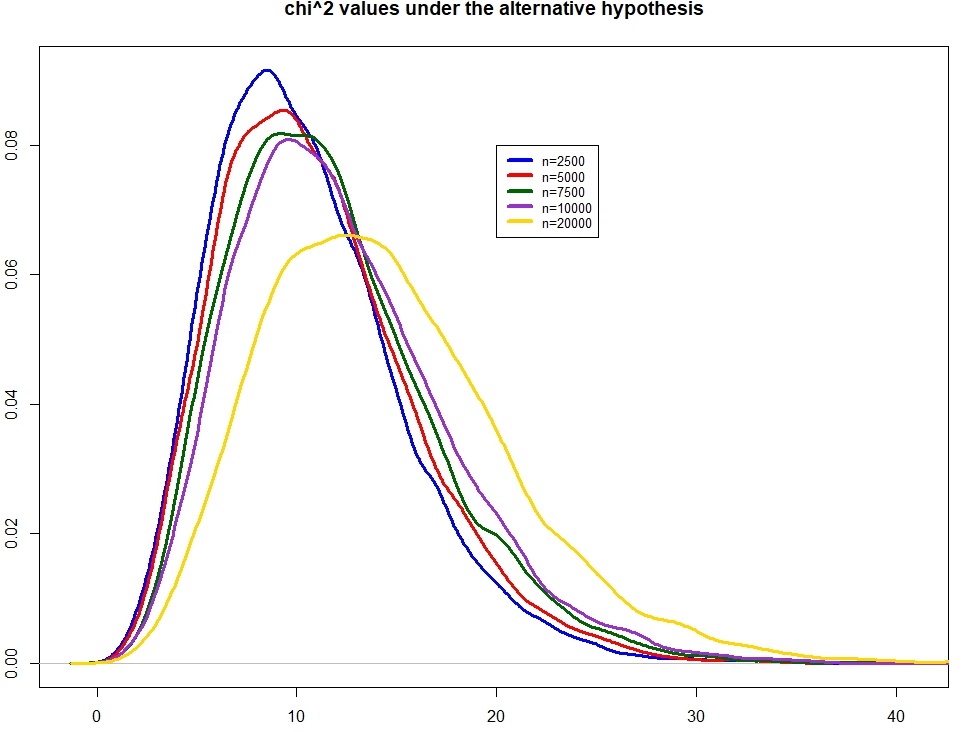}
				\caption{TV-dist = 0.15}
			\end{subfigure}
			\begin{subfigure}[h]{0.45\textwidth}
				\includegraphics[scale=0.3]{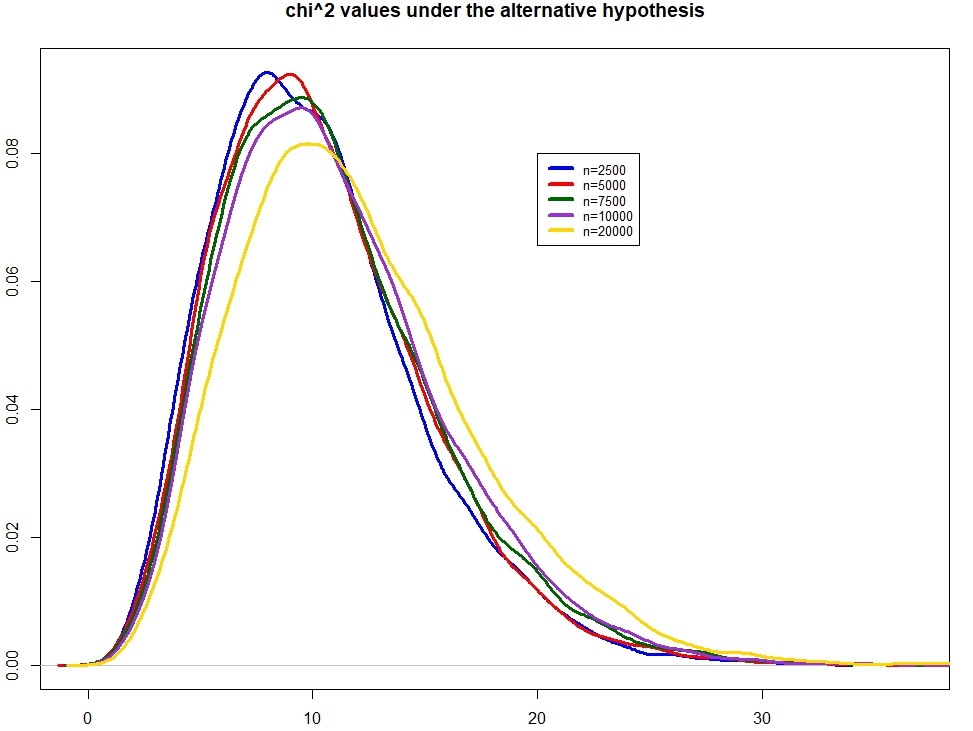}
				\caption{TV-dist = 0.1}
			\end{subfigure}
			\captionsetup[subfigure]{justification=centering}
			\captionsetup{width=.825\linewidth}
			\caption{ \label{fig:exp2} The empirical distribution of our test quantity under the alternative-hypothesis.\\  {\small (Best seen in color) }  We ran our $\chi^2$-based test under the alternative-hypothesis with various choices of TV-distance. As the number of samples increases, the empirical distribution of the test-quantity becomes further away from the $\chi^2$-distribution. In all of the experiments, the number of types is $10$ and $\epsilon=0.25$.}
		\end{figure*}}{\begin{figure}[h]
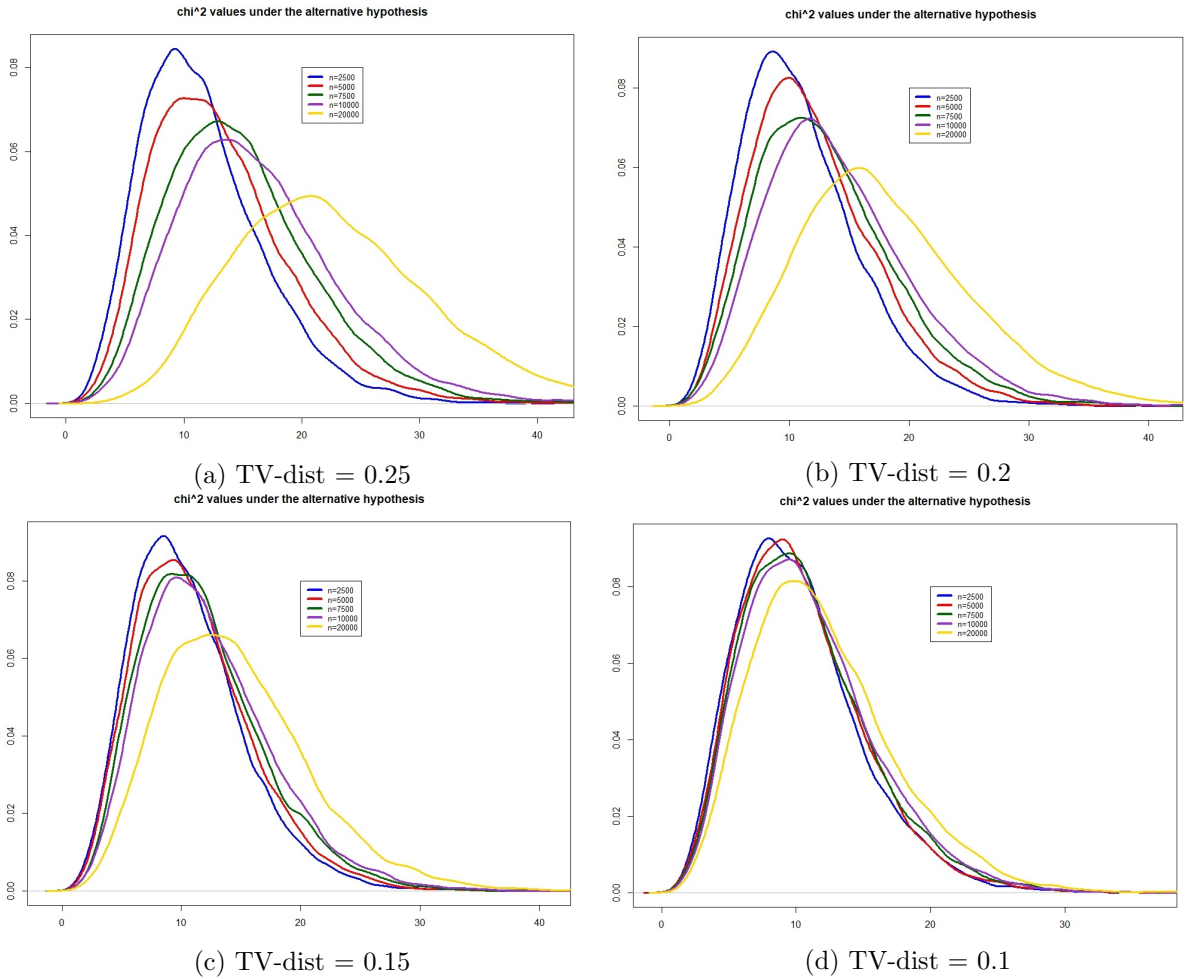

		\centering
		\begin{subfigure}[h]{0.48\textwidth}
			\includegraphics[scale=0.3]{exp2_a025.jpeg}
			\caption{TV-dist = 0.25}
		\end{subfigure}
		\begin{subfigure}[h]{0.48\textwidth}
			\includegraphics[scale=0.3]{exp2_a020.jpeg}
			\caption{TV-dist = 0.2}
		\end{subfigure}
		\begin{subfigure}[h]{0.48\textwidth}
			\includegraphics[scale=0.3]{exp2_a015.jpeg}
			\caption{TV-dist = 0.15}
		\end{subfigure}
		\begin{subfigure}[h]{0.48\textwidth}
			\includegraphics[scale=0.3]{exp2_a010.jpeg}
			\caption{TV-dist = 0.1}
		\end{subfigure}
		\caption{ \label{fig:exp2} The empirical distribution of our test quantity under the alternative-hypothesis.\\  {\small (Best seen in color) }  We ran our $\chi^2$-based test under the alternative-hypothesis with various choices of TV-distance. As the number of samples increases, the empirical distribution of the test-quantity becomes further away from the $\chi^2$-distribution. In all of the experiments, the number of types is $10$ and $\epsilon=0.25$.}
	\end{figure}}

	{\paragraph{Open Problems.} The results of our experiment, together with the empirical results of the 3rd experiment (shown in Figure~\ref{fig:exp3}) give rise to the conjecture that the testers in Section~\ref{subset:non-symmetric-RR} are not optimal. In particular, we suspect that the $\chi^2$-based test we experiment with is indeed a valid tester of sample complexity $T^{1.5}/(\eta\alpha)^2$. Furthermore, there could be other testers of even better sample complexity. Both the improved upper-bound and finding a lower-bound are two important open problem for this setting. We suspect that the way to tackle this problem is similar to the approach of Acharya et al~\mycite{AcharyaDK15}; however following their approach is difficult for two reasons. First, one would technically need to give a bound on the $\chi^2$-divergence between $\tfrac 1 {2\eta}\vec \theta$ and $\vec q$ (or $\vec f$). Secondly, and even more challenging, one would need to design a tester to determine whether the observed collection of random vectors in $\{1,-1\}^{T}$ is likely to come from the mechanism operating on a distribution close to $\tfrac 1 {2\eta}\vec \theta$. This distribution over vectors is a \emph{mixture model} of product-distributions (but not a product distribution by itself); and while each product-distribution is known (essentially each of the $T$ product distributions is a product of random $\{1,-1\}$ bits except for the $x$-coordinate which equals $1$ w.p. $\tfrac 1 2 +\eta$) it is the weights of the distributions that are either $\vec p$ or $\alpha$-far from  $\vec p$. Thus one route to derive an efficient tester can go through learning mixture models~--- and we suspect that is also a route for deriving lower bounds on the tester. A different route could be to follow the maximum-likelihood (or the loss-function $f$ from Equation~\eqref{eq:loss_func_nonsymmetricRR}), with improved convexity bounds proven directly on the $L_1/L_\infty$-norms.

	As explained in Section~\ref{subsec:experiment}, we could not establish that 
	\[Q(\vec\theta) \stackrel{\rm def}=n \sum_x \frac{  (\tfrac 1 {2\eta} \theta(x) - \bar{\theta}(x))^2} {\bar \theta(x)}\]
	can serve as a test quantity, since we could not assess its asymptotic distribution. Nonetheless, we do believe it be a test quantity, as the following empirical results.
	We empirically measure the quantity $Q(\vec\theta) \stackrel{\rm def}=n \sum_x \frac{  (\tfrac 1 {2\eta} \theta(x) - \bar{\theta}(x))^2} {\bar \theta(x)}$ under the null ($\alpha=0$) and the alternative ($\alpha=0.25$) hypothesis with $n = 25,000$ samples in each experiment. The results under a variety of bin sizes are given in Figure~\ref{fig:exp4}. The results point to three facts: (1) the empirical distribution of $Q$ under the null hypothesis is \emph{not} a $\chi^2$-distribution (it is not as centered around the mean and the tail is longer). (2) there is a noticeable gap between the distribution of $Q(\vec{\theta})$ under the null-hypothesis and under the alternative-hypothesis. Indeed, the gap becomes less and less clear under $25,000$ samples as the size of the domain increases, but it is present. (3) The empirical sample complexity required to differentiate between the null- and the alternative-hypothesis is quite large. Even for modest-size domains, $25,000$ samples weren't enough to create a substantial differentiation between the two scenarios.
	Designing a tester based on the quantity $Q(\vec \theta)$ is thus left as an open problem.
	
	\toggle{\begin{figure*}[h]
			\centering
			\begin{tabular}{m{0.45\hsize}m{0.45\hsize}}
			\begin{subfigure}[h]{0.45\textwidth}
				\includegraphics[scale=0.25]{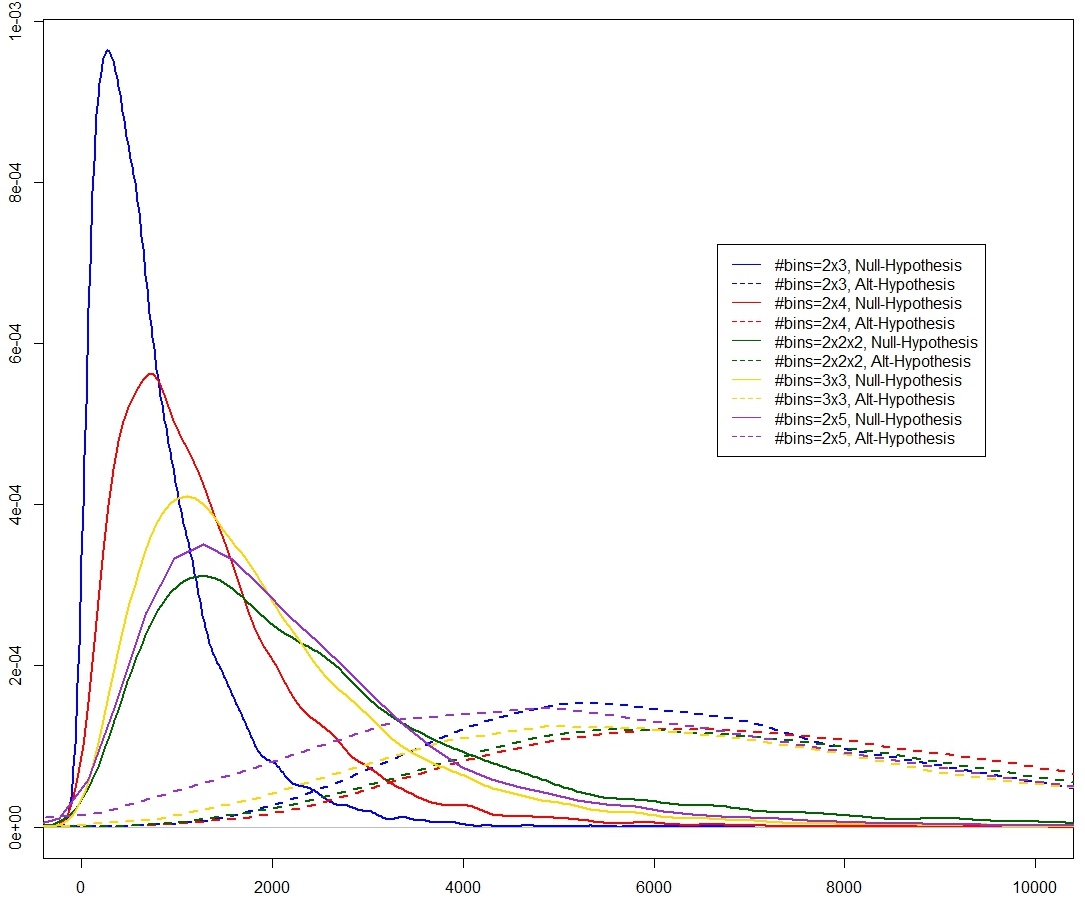}
				\caption{Small Domain (between 6-10 types)}
			\end{subfigure}
			\begin{subfigure}[h]{0.45\textwidth}
				\includegraphics[scale=0.25]{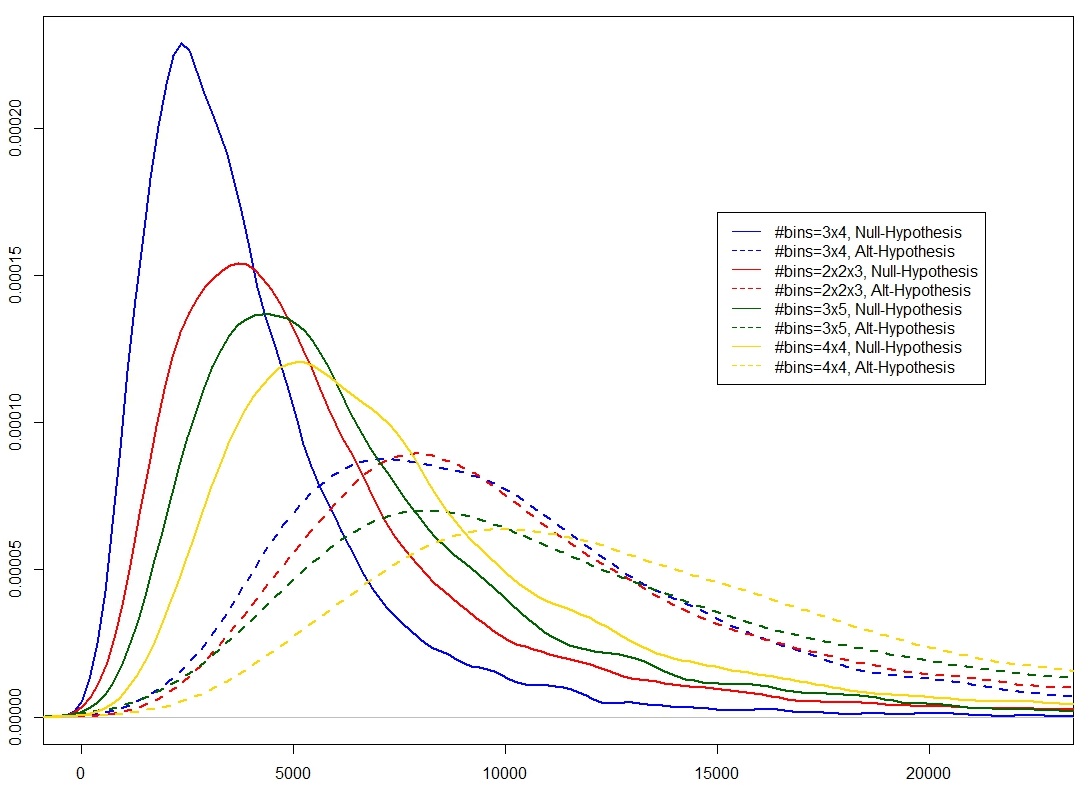}
				\caption{Mid-Size Domain (12-16 types)}
			\end{subfigure}
			\begin{subfigure}[h]{0.45\textwidth}
			\includegraphics[scale=0.25]{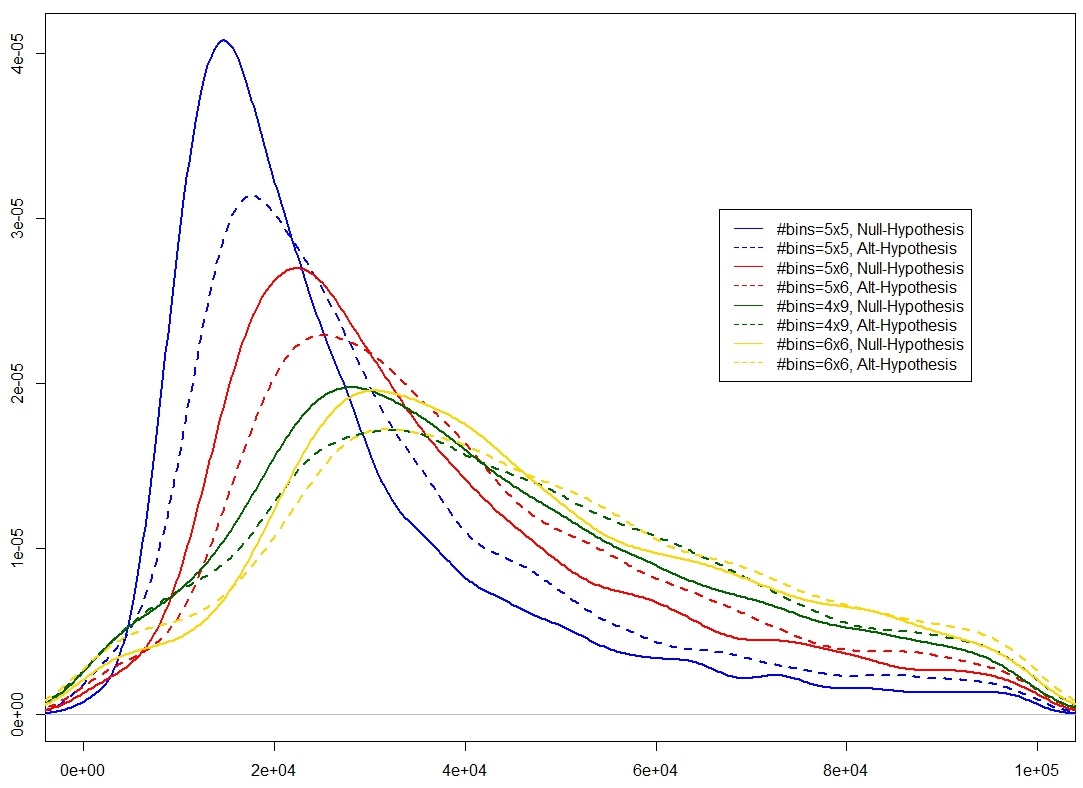}
			\caption{Large Domain (25-36 types)}
			\end{subfigure}
			&
			\begin{subfigure}[h]{0.45\textwidth}
				\includegraphics[scale=0.25]{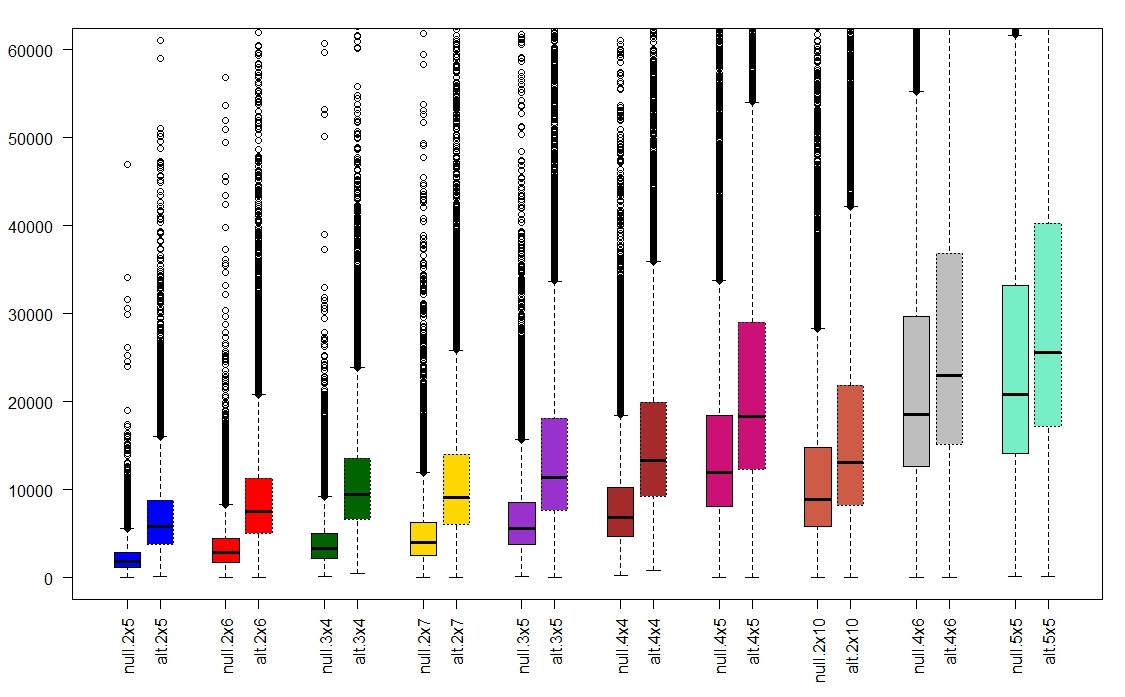}
				\caption{2-Dimensional Domains (12-30 types)}
			\end{subfigure}
			\begin{subfigure}[h]{0.45\textwidth}
				\includegraphics[scale=0.25]{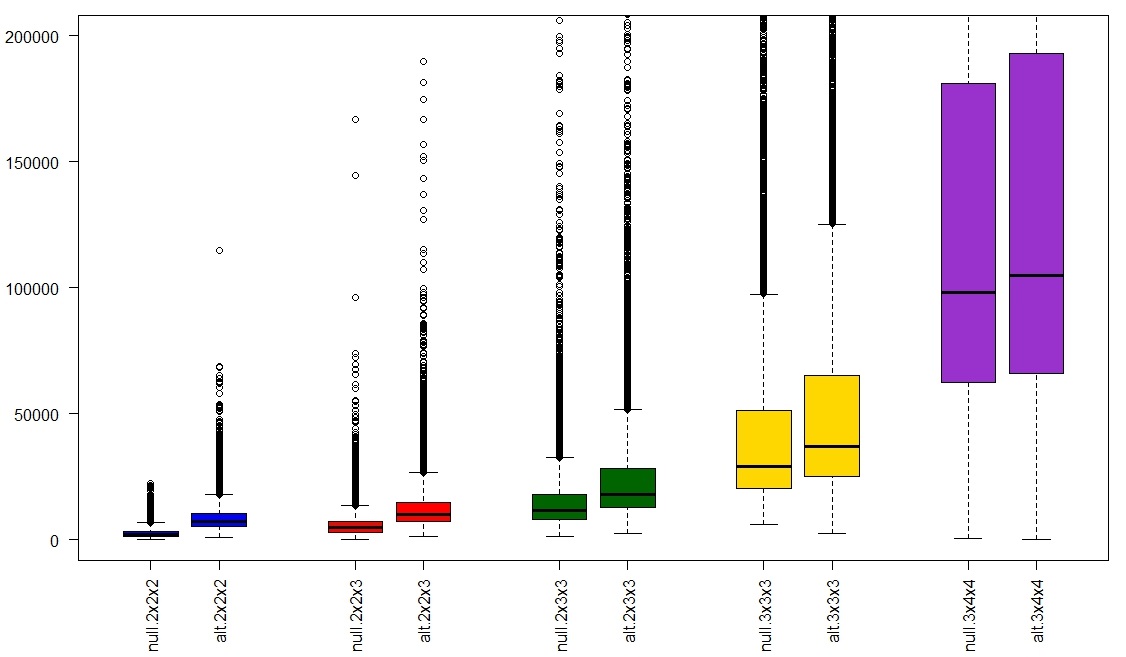}
				\caption{3-Dimensional Domains (8-48 types)}
			\end{subfigure}
			\begin{subfigure}[h]{0.45\textwidth}
				\includegraphics[scale=0.25]{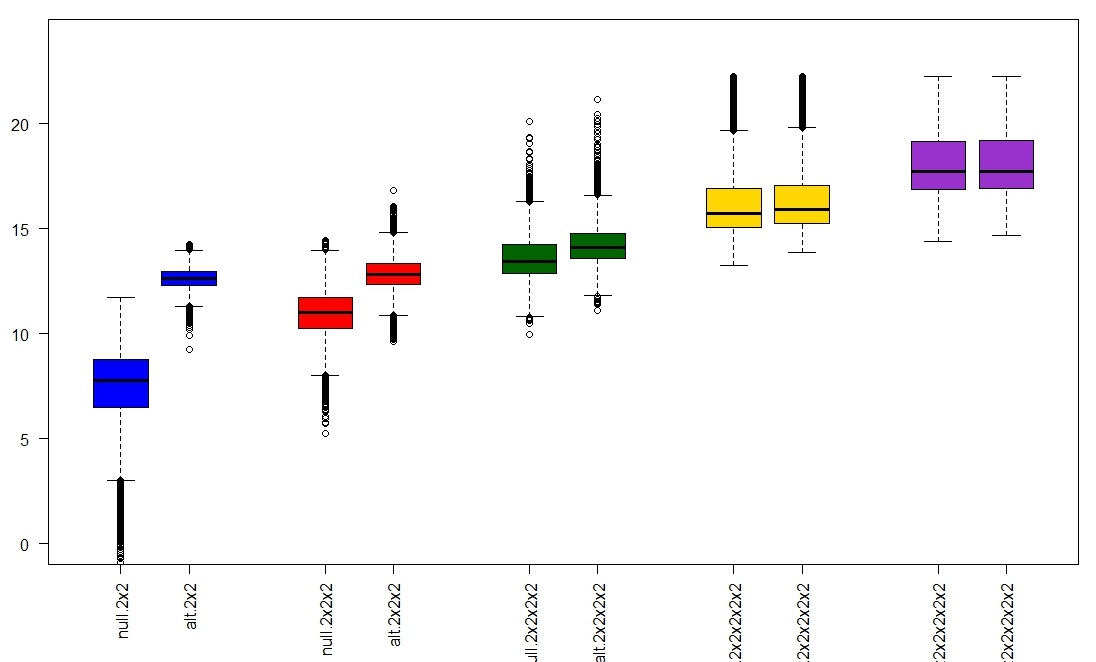}
				\caption{Powers of 2 Size-Domains \underline{shown in log-scale}  (4-64 types) }
			\end{subfigure}
			\end{tabular}
			\captionsetup[subfigure]{justification=centering}
			\captionsetup{width=.825\linewidth}
			\caption{ \label{fig:exp4} The empirical distribution of $Q(\vec \theta)$ for domains of multiple size, under the null- and the alternative-hypothesis.\\  {\small (Best seen in color) }  We measured $Q(\vec \theta)$ under both the null-hypothesis (solid line) and the alternative-hypothesis (dotted line) with various choices of domain sizes. As the size of the domain increases, it is evident the $25,000$ samples aren't enough to differentiate between the null and the alternative. In all of the experiments $\epsilon=0.25$ and the alternative hypothesis is $0.25$-far in TV-dist from a product distribution.}
		\end{figure*}}{\begin{figure}[h]
		\centering
		\begin{tabular}{m{0.48\hsize}m{0.48\hsize}}
			\begin{subfigure}[h]{0.48\textwidth}
				\includegraphics[scale=0.25]{exp4_smallbins.jpeg}
				\caption{Small Domain (between 6-10 types)}
			\end{subfigure}
			\begin{subfigure}[h]{0.48\textwidth}
				\includegraphics[scale=0.25]{exp4_medbins.jpeg}
				\caption{Mid-Size Domain (12-16 types)}
			\end{subfigure}
			\begin{subfigure}[h]{0.48\textwidth}
				\includegraphics[scale=0.25]{exp4_largebins.jpeg}
				\caption{Large Domain (25-36 types)}
			\end{subfigure}
			&
			\begin{subfigure}[h]{0.48\textwidth}
				\includegraphics[scale=0.25]{exp4_boxplot_2dim.jpeg}
				\caption{2-Dimensional Domains (12-30 types)}
			\end{subfigure}
			\begin{subfigure}[h]{0.48\textwidth}
				\includegraphics[scale=0.25]{exp4_boxplot_3dim.jpeg}
				\caption{3-Dimensional Domains (8-48 types)}
			\end{subfigure}
			\begin{subfigure}[h]{0.48\textwidth}
				\includegraphics[scale=0.25]{exp4_boxplot_power2_logscale.jpeg}
				\caption{Powers of 2 Size-Domains \emph{shown in log-scale} (4-64 types) }
			\end{subfigure}
		\end{tabular}
		\captionsetup[subfigure]{justification=centering}
		\captionsetup{width=.825\linewidth}
		\caption{ \label{fig:exp4} The empirical distribution of $Q(\vec \theta)$ for domains of multiple size, under the null- and the alternative-hypothesis.\\  {\small (Best seen in color) }  We measured $Q(\vec \theta)$ under both the null-hypothesis (solid line) and the alternative-hypothesis (dotted line) with various choices of domain sizes. As the size of the domain increases, it is evident the $25,000$ samples aren't enough to differentiate between the null and the alternative. In all of the experiments $\epsilon=0.25$ and the alternative hypothesis is $0.25$-far in TV-dist from a product distribution.}
	\end{figure}}
	}

\end{document}